\theoremstyle{plain}
\newtheorem{theorem}{Theorem}[section]
\newtheorem{lemma}[theorem]{Lemma}
\newtheorem{claim}[theorem]{Claim}
\theoremstyle{definition}
\newtheorem{definition}[theorem]{Definition}
\theoremstyle{remark}
\newcommand{\mO}{O}
\newcommand{\EXP}{\mathbb{E}}
\newcommand{\PROB}{\textnormal{Pr}}
\newcommand{\IND}{\mathds{1}}
\renewcommand{\Pr}[1]{{\PROB \sbr{#1}}}
\newcommand{\Ex}[1]{{\EXP \sbr{#1}}}
\newcommand{\Exu}[2]{\ensuremath{\EXP_{#1}\left[#2\right]}}
\newcommand{\ind}[1]{{\IND \left\{ #1 \right\}}}
\newcommand{\eps}{\varepsilon}
\newcommand{\floor}[1]{\left\lfloor\,#1\,\right\rfloor}
\newcommand{\ceil}[1]{\left\lceil\,#1\,\right\rceil}
\newcommand{\rbr}[1]{\left(\,#1\,\right)}
\newcommand{\sbr}[1]{\left[\,#1\,\right]}
\newcommand{\cbr}[1]{\left\{\,#1\,\right\}}
\DeclarePairedDelimiter\abs{\lvert}{\rvert}%
\newcommand{\cost}{\textsc{Cost}}
\newcommand{\weight}{\ensuremath{w}}
\newcommand{\ground}{\ensuremath{\mathcal{V}}}
\newcommand{\solution}{\ensuremath{S}}
\newcommand{\opt}{\ensuremath{opt}}
\newcommand{\optcost}{\ensuremath{\textsc{OPT}_{\text{cost}}}}
\newcommand{\Lp}{\ensuremath{\overline{L}}}
\newcommand{\bS}{\ensuremath{\mathbf{S}}}
\newcommand{\bm}{\ensuremath{\mathbf{m}}}
\newcommand{\bl}{\ensuremath{\mathbf{l}}}
\newcommand{\bg}{\ensuremath{\mathbf{g}}}
\newcommand{\bM}{\ensuremath{\mathbf{M}}}
\newcommand{\bP}{\ensuremath{\mathbf{P}}}
\newcommand{\bH}{\ensuremath{\mathbf{H}}}
\newcommand{\bX}{\ensuremath{\mathbf{X}}}
\newcommand{\bLp}{\ensuremath{\mathbf{\Lp}}}
\newcommand{\bT}{\ensuremath{\mathbf{T}}}
\newcommand{\bL}{\ensuremath{\mathbf{L}}}
\newcommand{\Rc}{\ensuremath{\widehat{R}}}
\newcommand{\Lc}{\ensuremath{\widehat{L}}}
\newcommand{\algorithmicbreak}{\textbf{break}}
\newcommand{\Break}{\algorithmicbreak}
\newcommand{\algorithmiccontinue}{\textbf{continue}}
\newcommand{\Continue}{\algorithmiccontinue}
\newcommand{\ReconstructF}{{\textsc{Reconstruct}}}
\newcommand{\FilterF}{{\textsc{Filter}}}
\newcommand{\filter}{\FilterF}
\newcommand{\epsBuck}{{\epsilon}}
\newcommand{\CalcSampleCountF}{{\textsc{CalcSampleSize}}}
\newcommand{\InitF}{{\textsc{Init}}}
\newcommand{\density}[2]{d(#1|#2)}
\newcommand{\epsDel}{{\epsilon_{\textnormal{del}}}}
\newcommand{\epsSamp}{\epsilon}
\newcommand{\poly}{\textnormal{poly}}
\newcommand{\InsertF}{{\textsc{Insert}}}
\newcommand{\DeleteF}{{\textsc{Delete}}}
\newcommand{\LevelF}{\ReconstructF}
\newcommand{\reset}{\textnormal{Reset}}
\newcommand{\breset}{\textbf{Reset}}
\newcommand{\histPre}{H^{\text{pre}}}
\newcommand{\sigmaPre}{\sigma^{\text{pre}}}
\newcommand{\bhistPre}{\mathbf{H}^{\text{pre}}}
\newcommand{\apprev}{{\textsc{ApplyAndRevert}}}
\newcommand{\constt}{{4\frac{1}{\epsilon^2}\log(\frac{n^{12}}{\epsilon})}}
\newcommand{\buck}{\text{buck}}
\newcommand{\update}{{\textsc{update}}}
\newenvironment{proofof}[1]{\begin{trivlist} \item {\bf Proof
#1:~~}}
  {\qed\end{trivlist}}
\icmltitlerunning{A Dynamic Algorithms for Weighted Submodular Cover Problem}
\begin{document}

\icmltitle{A Dynamic Algorithm for Weighted Submodular Cover Problem}

\icmlsetsymbol{equal}{*}

\begin{icmlauthorlist}
\icmlauthor{Kiarash Banihashem}{equal,yyy}
\icmlauthor{Samira Goudarzi}{equal,yyy}
\icmlauthor{MohammadTaghi Hajiaghayi}{equal,yyy}
\icmlauthor{Peyman Jabbarzade}{equal,yyy}
\icmlauthor{Morteza Monemizadeh}{equal,comp}

\end{icmlauthorlist}

\icmlaffiliation{yyy}{Department of Computer Science, University of Maryland, MD, USA}
\icmlaffiliation{comp}{Department of Mathematics and Computer Science, TU Eindhoven, the Netherlands}
\icmlcorrespondingauthor{Samira Goudarzi}{samirag@umd.edu}
\icmlcorrespondingauthor{Kiarash Banihashem}{kiarash@umd.edu}

\icmlkeywords{Dynamic Algorithm, Submodular Optimization, Machine Learning, ICML}

\vskip 0.3in

\printAffiliationsAndNotice{\icmlEqualContribution} %

\begin{abstract}
  We initiate the study of the submodular cover problem in dynamic setting where the elements of the ground set are inserted and deleted.
  In the classical submodular cover problem, we are given a monotone submodular function $f : 2^{V} \to \mathbb{R}^{\ge 0}$ and the goal is to obtain a set $S \subseteq V$ that minimizes the cost subject to the constraint $f(S) = f(V)$. This is a classical problem in computer science and generalizes the Set Cover problem, 2-Set Cover, and dominating set problem among others.
  We consider this problem in a dynamic setting where there are  updates to our set $V$, in the form of insertions and deletions of elements from a ground set $\mathcal{V}$, and the goal is to maintain an approximately optimal solution with low query complexity per update. For this problem, we propose a randomized algorithm that, in expectation, obtains a $(1-O(\epsilon), O(\epsilon^{-1}))$-bicriteria approximation using polylogarithmic query complexity per update.

\end{abstract}

\section{Introduction}
Submodular optimization is a classical problem in computer science and machine learning
with applications spanning various domains such as data summarization, active learning, network inference, video analysis, and facility location (see \cite{DBLP:conf/bmvc/Krause13} for a survey).

The submodular cover problem, initially introduced by \cite{wolsey1982analysis},
is a well-studied classical variant of the problem where the objective is to minimize the sum of the weight of selected elements chosen from a set subject to a submodular function constraint. 
Specifically, given a set of elements $V$, a monotone submodular function $f: 2^{V} \to \mathbb{R}^{\ge 0}$,
and a weight function $w: V \to \mathbb{R}^{\ge 0}$, we seek to
pick a set $S$ minimizing $\sum_{v \in S} w(v)$ that satisfies $f(S) = f(V)$.

This problem generalizes various noteworthy problems such as the set cover problem, 2-set cover, dominating set, and others. It can also be seen as a dual of the submodular maximization problem, in which the goal is to maximize $f(S)$ subject to the constraint $|S| \le k$ for some parameter $k$.

While the submodular cover problem has been extensively studied (see \cite{bar2001generalized} for a survey), the majority of the algorithms in the literature predominantly depend on having access to the entire ground set throughout their execution, which is not a valid assumption in numerous real-world applications dealing with ever-changing data and makes them impractical.

Given the mentioned limitation, there has recently been a surge of interest in reexamining classical problems under a variety of massive data models such as streaming, distributed, dynamic, and online settings. 
For submodular maximization, the problem has been considered 
in the streaming~\cite{badanidiyuru2014streaming,chakrabarti2015submodular, mirzasoleiman2018streaming, DBLP:conf/icml/0001MZLK19},
distributed~\cite{DBLP:conf/stoc/MirrokniZ15, liu2018submodular},
and dynamic~\cite{DBLP:conf/nips/Monemizadeh20,DBLP:conf/nips/LattanziMNTZ20,chen2022complexity,DBLP:conf/icml/DuettingFLNZ23, banihashem2023dynamic, banihashem2023nonmonotone,banihashem2023dynamicmat}
settings.
Similarly for submodular cover, recent works have studied
the problem in the distributed, streaming, and scalable settings~\cite{mirzasoleiman2015distributed, norouzi2016efficient, chen2023bicriteria, crawford2023scalable}.

Motivated by these advances, we consider the submodular cover in a dynamic setting where the elements of the ground set are inserted and deleted, and the goal is to always maintain an approximately optimal solution.
While this can easily be done by re-running an offline algorithm after each update, the goal is to do this with small update time per query. 
We formally define the dynamic submodular cover problem as follows.

\begin{definition}[Dynamic Submodular Cover problem]
    \label{def:dyn:sub:cover}
    We assume that $f: 2^{\ground} \rightarrow  \mathbb{R}^+$ is a monotone, non-negative submodular function on the ground set $\ground = \{v_1, v_2, \dots, v_n\}$, and each element $v$ in the $\ground$ has a weight denoted by $\weight(v)$. For any subset $S \subseteq \ground$, $\cost(S)$ is defined to be $\sum_{v \in S}{\weight(v)}$. 
    At each time $t$, the objective of the problem is to choose a subset $S \subseteq {V_t}$ of minimum cost whose submodular value is equal to $f({V_t})$, i.e.,
 \[
        \solution_{\opt_t} = \arg\min_{S \subseteq V_t } \{ \cost(S): f(S) = f(V_t) \} \enspace, 
 \]
where $V_t$ denotes the set of the currently present elements after the first $t$ updates. $\text{OPT}_{\text{cost}_t}$ is defined to be $ \cost(\solution_{\opt_t})$, and $f(\solution_{\opt_t}) = f({V_t})$.
\end{definition}
Note that, throughout the paper, as we consider a fixed point of time, we drop the subscript $t$ for simplicity.

We note that while \cite{gupta2020fully} also considered the submodular cover problem
in a dynamic setting, their model is different as it assumes that the submodular function $f$ is changing dynamically, whereas we assume that the ground set undergoes updates.
To illustrate the difference, consider the special case of set cover where the elements of $V$ correspond to sets in a set system and 
we define $f(\mathcal{S}) := |\cup_{S \in \mathcal{S}} S|$ for any $\mathcal{S} \subseteq V$.
In this case, the model in \cite{gupta2020fully} assumes that the elements of the set system are inserted and deleted, while our model assumes that sets of the set system are inserted and deleted.
Our model is consistent with the models considered for the streaming version of the problem~\cite{norouzi2016efficient} where the elements are inserted one by one (but the elements are never deleted in their setting) and
dynamic setting considered for the submodular maximization~ problem\cite{DBLP:conf/nips/Monemizadeh20,DBLP:conf/nips/LattanziMNTZ20,chen2022complexity,DBLP:conf/icml/DuettingFLNZ23, banihashem2023dynamic, banihashem2023nonmonotone,banihashem2023dynamicmat}.

In this paper, we assume that the updates are specified by an \emph{oblivious adversary}, that is
an adversary who knows are algorithm but does not
have access to the random bits we use. This is equivalent to assuming that all of the updates are specified before the algorithm is run and as such are not adapted to the algorithm's output.

\subsection{Our Contribution} 
In this paper, we design an algorithm for the dynamic submodular cover problem that maintains an approximately
optimal solution using polylogarithmic update time.
As is standard for the submodular cover problem~\cite{norouzi2016efficient,chen2023bicriteria}, our
approximation guarantees are bicriteria given the two objectives of the problem.
A set $S$ is called a $(1-\epsilon, c)$-bicriteria approximate solution if it satisfies
\begin{align*}
    f(S) \ge (1-\epsilon) f(V),
    \quad
    \text{and}
    \quad
    \cost{}(S) \le c \, \cost{}(\solution_{\opt}),
\end{align*}
where $\solution_{\opt}$ denotes the optimal solution.
We say a (random) set $S$ is \emph{expected} $(1-\epsilon, c)$-bicriteria, if
the first guarantee holds in expectation, i.e.,
\begin{math}
    \Ex{f(S)} \ge (1-\epsilon) f(V).
\end{math}
Our main result is stated in the following theorem.
\begin{theorem}
\label{contributiontheorem}
    Define the \emph{weight ratio} of $\ground$ as 
    \begin{math}
        \rho := \frac{\max_{v \in \ground} w(v)}{\min_{v \in \ground} w(v)},
    \end{math}
    and set  $n := |\ground|$.
    For any $\epsilon > 0$, there is an algorithm for the dynamic submodular cover problem that maintains an
    expected $(1-O(\epsilon), O(\epsilon^{-1}))$-bicriteria approximate solution with
    expected amortized $\poly{}(\log(n), \log(\rho), \eps^{-1})$ update time query complexity.
\end{theorem}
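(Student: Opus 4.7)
My plan is to reduce the dynamic weighted submodular cover problem to a collection of thresholded decision problems, each solved via a density-bucketed greedy structure maintained under random sampling. Three ingredients do the work: (i) guessing the optimal cost $\optcost$, (ii) a density-bucketed greedy giving a bicriteria approximation in the static setting, and (iii) a leveled data structure whose rebuilds are amortized via randomization.

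First, since every element has weight in $[w_{\min}, w_{\max}]$ with $w_{\max}/w_{\min}=\rho$ and every feasible solution has cost in $[w_{\min}, n w_{\max}]$, the value $\optcost$ lies in an interval spanning a factor of $n\rho$. Geometric discretization at resolution $(1+\epsilon)$ produces $O(\epsilon^{-1}\log(n\rho))$ guesses $g$, one of which satisfies $g \le \optcost \le (1+\epsilon) g$, and we run one algorithm per guess in parallel, returning the cheapest set that reaches $(1-O(\epsilon)) f(V)$. For a fixed correct guess $g$, I would adapt the classical density-greedy for submodular cover: iteratively pick an element $v$ maximizing the density $(f(S \cup \{v\}) - f(S))/w(v)$ until $f(S) \ge (1-\epsilon) f(V)$. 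To make the algorithm amenable to a data structure, I would discretize densities into $O(\epsilon^{-1}\log(n\rho))$ buckets and at each step pick any element from the highest non-empty bucket; a standard exchange argument (comparing against $\solution_{\opt}$ restricted to still-uncovered ground value) yields cost $O(\epsilon^{-1})\cdot\optcost$ and residual gap $O(\epsilon) f(V)$.

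Second, for the dynamic maintenance I would adopt the leveled-sampling paradigm used in prior dynamic submodular work such as \cite{DBLP:conf/nips/LattanziMNTZ20,banihashem2023dynamic}: maintain $O(\epsilon^{-1}\log(n\rho))$ levels, where level $i$ stores the set of elements whose residual density (given the partial solution built through levels $1,\dots,i-1$) falls in the $i$-th bucket. At each level we keep a uniform random sample of size $\poly(\epsilon^{-1},\log n)$, which certifies, via a Chernoff bound, the structural invariant that few elements with above-threshold density remain. On an insertion or deletion we walk the levels top-down; at the first level where the sample no longer certifies the invariant we resample and rebuild all deeper levels from scratch. Because the sample at level $i$ is drawn from a set of size proportional to the level's population, the probability that a single update forces a rebuild at level $i$ scales as the inverse of that population, and the expected cost of rebuilds telescopes across levels.

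The main obstacle is the amortized update-time analysis, which requires two coupled ingredients: (a) a concentration argument showing that the sample-certified invariants imply the bicriteria guarantee after \emph{every} update, and (b) a charging argument bounding expected rebuild work by $\poly(\log n,\log\rho,\epsilon^{-1})$. Ingredient (a) is where the oblivious adversary assumption is essential: since future updates are independent of our random samples, we can treat each sample as uniform at the moment of the query and apply standard concentration. Ingredient (b) uses the fact that a rebuild at a level with $m$ surviving elements is triggered in expectation only after $\Theta(m/\poly(\epsilon^{-1},\log n))$ updates, so the per-update expected work at that level is polylogarithmic; summing over the $O(\epsilon^{-1}\log(n\rho))$ levels and over the parallel guesses gives the claimed polylogarithmic expected amortized complexity, and combining with the static bicriteria analysis yields the $(1-O(\epsilon), O(\epsilon^{-1}))$ guarantee stated in \Cref{contributiontheorem}.
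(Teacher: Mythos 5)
The high-level scaffolding you describe (parallel geometric guesses, a leveled data structure with bucketing by density, random sampling with amortized rebuilds) matches the paper's outline, but there are two concrete gaps in the leveled construction that break the argument as written.

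First, a fixed random sample of size $\poly(\epsilon^{-1},\log n)$ per level does not do the job. The paper's central technical device is an \emph{adaptively chosen} sample size computed by \CalcSampleCountF{} via repeated simulation (\apprev{}). The sample size $m_i$ is chosen to be a \emph{suitable sample size} (Definition~\ref{def:suitable_sample_size}): the largest integer so that, in expectation, at least a $(1-2\epsilon)$ fraction of the sampled elements survive the greedy density filter when added to $G_{i-1}$, while the $(m_i+1)$-th element fails with probability at least $\epsilon/2$. This single quantity is what simultaneously (a) guarantees that at least an $\Omega(\epsilon)$ fraction of the chosen bucket $B_i$ drops a density bucket after each level, which drives the potential-function bound on the expected number of levels (Lemmas~\ref{lm:potential_decrease} and~\ref{lm:num_level_reconstruct}), and (b) guarantees that the elements actually landing in $G_i\setminus G_{i-1}$ are a nearly-complete uniform sample from $B_i$, so that an $\epsDel$ fraction of deletions from $B_i$ costs at most an $O(\epsilon)$ fraction of the level's gain (Lemma~\ref{lm:approx:del}). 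A fixed polylogarithmic sample breaks (a): if $|B_i|$ is large, you filter out only a polylogarithmic number of elements per level, so the number of levels is not bounded by $\poly(\log n,\log\rho,\epsilon^{-1})$. Conversely, if you sample the whole bucket you lose (b): a few deletions concentrated on the elements that were actually added to $G_i$ can destroy all of the level's gain, and a Chernoff bound on an external certification sample does not protect the solution set itself.

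Second, your description of "levels" as the $O(\epsilon^{-1}\log(n\rho))$ density buckets themselves conflates two distinct hierarchies in the paper. In the paper, a level $i$ is a round of greedy selection; within each level, the remaining elements $L_i$ are partitioned into a \emph{two-dimensional} grid of buckets indexed by both $\floor{\log_{1+\epsilon}(d(e\mid G_{i-1})/\tau)}$ and $\floor{\log_{1+\epsilon} w(e)}$, and the largest such bucket $B_i$ is processed. The number of levels $T$ is a random variable, not fixed, and is bounded only in expectation via the potential argument. The two-dimensional bucketing is not cosmetic: the guarantee $\Ex{\bold{g_{\ell}}}\ge(1-2\epsilon)m_{\ell}\tau_{\ell}(1+\epsilon)^k$ in Lemma~\ref{lm:quality_sample_count} requires that every element of $B_\ell$ has weight within a $(1+\epsilon)$ factor, so that marginal density and marginal gain are interchangeable up to $(1+\epsilon)$ factors when comparing the gain $g_\ell$ against the deletion loss $l_\ell$. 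If you bucket only by density, that comparison loses a factor of $\rho$. Your charging claim that "the probability that a single update forces a rebuild at level $i$ scales as the inverse of that population" also diverges from the paper's analysis: the paper's deviation invariants ($|D\cap B_i|\le\epsDel|B_i|$ and $|\Lp_i|\le\frac{3}{2}|L_i|$) are deterministic thresholds, so a rebuild is triggered only after $\Omega(|B_i|/\poly)$ updates charge level $i$, which is what makes the amortization exact rather than in probability. The remaining pieces of your proposal (geometric guessing of the scale, the exchange argument for the static bicriteria bound, the oblivious-adversary role) line up with the paper.
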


In terms of techniques,
we build on and generalize the recent advances for dynamic submodular maximization~\cite{DBLP:conf/nips/Monemizadeh20,DBLP:conf/nips/LattanziMNTZ20,banihashem2023dynamic},
in particular the
multi-level construction proposed by \cite{banihashem2023dynamic}, but require important changes given the \enquote{two-dimensional} nature of the problem involving both submodular value $f(.)$ and the weights $w(.)$.
Indeed, the underlying \enquote{static} algorithm in our approach can be seen as a generalization of the algorithm in \cite{norouzi2016efficient} that can support arbitrary weights (as opposed to the uniform weight setting of \cite{norouzi2016efficient}). Our bucketing structure is two-dimensional in order to handle the effect of deletions, unlike \cite{banihashem2023dynamic}.  
We handle our parallel runs, and solution retrieval differently.
Additionally, to simplify the analysis of the approximation guarantee, we check the marginal density of each element in the solution at the time of forming our solution sets, as opposed to \cite{banihashem2023dynamic} who add their elements in bulk.\footnote{We note that the same idea is used in the corrected version of \cite{DBLP:conf/nips/LattanziMNTZ20}.} As such, we do not need the complicated potential function analysis used in \cite{banihashem2023dynamic}, which is crucial for simplifying the analysis given our more involved two-dimensional setting.

\section{Related Work} 

\paragraph{Submodular cover}
The offline version of submodular cover has been extensively studied and it is well-known that the greedy algorithm by \cite{wolsey1982analysis} obtains a logarithmic approximation ratio for the problem (see \cite{bar2001generalized} for a survey of developments and applications).
Recently, given the emergence of Big Data algorithms, there has been an interest in considering the problem in the 
streaming setting.
In particular, 
\cite{norouzi2016efficient}
obtain a $(1-\epsilon, O(\epsilon^{-1}))$-bicriteria approximation algorithm for the problem in the special case where the weights are uniform, i.e., each element has weight $1$. 
Here, the elements of the ground set arrive in a stream and the goal is to build an approximately optimal solution with low-memory.
As mentioned earlier, our underlying static algorithm can be seen as a generalization of this approach for handling arbitrary weights.
Subsequent works have considered the non-monotone objective functions and designed scalable algorithms for the problem~\cite{crawford2023scalable,chen2023bicriteria}

For the dynamic setting, \cite{gupta2020fully} consider a different variant of the problem in which the submodular function $f$ changes overtime and obtain a fully dynamic algorithm with bounded recourse. In contrast, our approach assumes that the underlying function is fixed and the ground set changes.
This is aligned with the models considered in the streaming setting~\cite{norouzi2016efficient,crawford2023scalable} as well the models considered for streaming and dynamic settings for the submodular maximization problem~\cite{DBLP:conf/nips/Monemizadeh20, DBLP:conf/nips/LattanziMNTZ20}.

\paragraph{Submodular maximization}
A closely related problem to submodular cover is submodular maximization.
In the classical version of this problem, we are given a ground set of elements $V$, a submodular function $f: 2^{V} \to \mathbb{R}^{\ge 0}$ and a parameter $k$, and the goal is to maximize the function $f$ over all sets $S$ of size at most $k$.
For the offline version of the problem, it is well-known that a standard greedy algorithm that iteratively chooses a remaining element with maximum marginal gain obtains an approximation ratio of $1-1/e$~\cite{nemhauser1978analysis}.
The approximation ratio cannot be improved efficiently under complexity assumptions as shown by \cite{feige1998threshold} via a reduction from set cover.

In the streaming setting,
submodular maximization was first studied by \cite{badanidiyuru2014streaming} who obtained a $1/2 - \epsilon$-approximation algorithm. The $1/2$ bound was later shown to by \cite{DBLP:conf/icml/Norouzi-FardTMZ18}.
The study of the dynamic version of the problem was first initiated independently by \cite{DBLP:conf/nips/Monemizadeh20} and \cite{DBLP:conf/nips/LattanziMNTZ20} who obtain $(1/2 - \epsilon)$-approximation algorithms 
with $O(k^2 \log^2 (n) \epsilon^{-3})$ and $O(\log^8(n) \epsilon^{-6})$ update times respectively.
\footnote{
The original version of \cite{DBLP:conf/nips/LattanziMNTZ20} had correctness issues in the proof pointed by out \cite{banihashem2023dynamic} who provided an alternative algorithm with polylogarithmic update time.
The issues were subsequently fixed by \cite{DBLP:conf/nips/LattanziMNTZ20}.
}
The $1/2$ approximation is essentially tight 
as shown by \cite{chen2022complexity}
using a lower bound construction based on the streaming version of the problem.
Recent works have generalized the dynamic results for non-monotone objectives~\cite{banihashem2023nonmonotone}, as well as matroid constraints~\cite{DBLP:conf/icml/DuettingFLNZ23, banihashem2023dynamicmat}.

\paragraph{Deletion robust algorithms}
A closely related but distinct area to dynamic submodular optimization is the
robust submodular optimization~\cite{DBLP:conf/icml/MirzasoleimanK017, DBLP:conf/icml/0001ZK18,DBLP:conf/icml/DuettingFLNZ22} in which the goal is to obtain a set that is robust to deletions performed by the adversary. The number of deletions is known upfront, is bounded. In contrast, the dynamic model assumes that insertions and deletions are performed arbitrarily and the goal is to always maintain a good solution.

\section{Preliminaries}

\textbf{Notation: } For a natural number, the set $\{1,2,\dots,x\}$ is denoted as $[x]$. Bold letters represent random variables, while their non-bold counterparts denote specific values. For instance, a random variable is denoted as $\textbf{X}$ and its value as $X$. Probability and expectation of a random variable $\textbf{X}$ are represented by $\Pr{\textbf{X}}$ and $\Ex{\textbf{X}}$ respectively. The notation $\Pr{A|B}$ denotes the conditional probability of event $A$ given event $B$. For an event $A$ with nonzero probability and a discrete random variable $\textbf{X}$, the conditional expectation of $\textbf{X}$ given $A$ is denoted as $\Ex{\textbf{X}|A} = \sum_{x} x\cdot \Pr{\textbf{X} = x | A}$. Likewise, for discrete random variables $\textbf{X}$ and $\textbf{Y}$, the conditional expectation of $\textbf{X}$ given $\textbf{Y}$ is denoted as $\Ex{\textbf{X} | \textbf{Y} = y}$.
The \emph{indicator function} of an event $E$ is denoted by $\ind{E}$, where $\ind{E}$ is assigned one if $E$ occurs and zero otherwise.

\textbf{Submodular functions: }
Consider a \emph{non-negative} utility function $f: 2^{\ground} \rightarrow  \mathbb{R}^+$ defined on the given ground set $\ground = \{v_1, v_2, \dots, v_n\}$. For any element $v \in \ground$ and a set $A \subseteq \ground$, $\Delta(v|A)$ is called marginal gain of element $v$ with respect to $S$ and it is defined as $\Delta(v|A) := f(A \cup \{v\}) - f(A)$. Similarly, for any sets $A, B \subseteq \ground$, $\Delta(B|A)$ is defined as $f(A \cup B) - f(A)$. Function $f$ is called \emph{submodular} when for any $A, B \subseteq \ground$, we have $f(A)+f(B) \geq f(A\cup B) + f(A\cap B)$ or equivalently when for any $A \subseteq B \subseteq \ground$ and element $v$, we have $\Delta(v|A) \geq \Delta(v|B)$. This function $f$ is called \emph{monotone} when for any $A \subseteq B \subseteq \ground$, we have $f(A) \leq f(B)$, and it is called \emph{normalized} when $f(\emptyset) = 0$. 

\textbf{Density: }
Given a submodular function $f: 2^{\ground} \rightarrow  \mathbb{R}^+$ defined on the given ground set $\ground = \{v_1, v_2, \dots, v_n\}$, and a weight function $w: {\ground} \rightarrow [1, \rho]$, we define density of an element $v \in \ground$ as $d(v) := \frac{f(v)}{w(v)}$. Similarly, for any set $A \subseteq \ground$ we define marginal density of element $v$ with respect to $A$ as $d(v|A) := \frac{\Delta(v|A)}{w(v)}$.

\textbf{Update time: }
We assume access to the monotone submodular function $f: 2^{\ground} \rightarrow \mathbb{R}^+$ through an \emph{oracle}.
This oracle supports \emph{set queries}, allowing one to inquire about the value $f(A)$ for any subset $A \subseteq \ground$. 
In this paper, we measure running time based on the total number of oracle calls, a common practice in submodular optimization, as the processing time of oracle calls typically dominates the running time of other parts of the algorithm \cite{DBLP:conf/icml/DuettingFLNZ23, banihashem2023dynamicmat}.
We refer to the amortized number of query calls as query complexity and update time.

\section{Dynamic Algorithm}
\label{main:algorithm}

\subsection{Setting}
In the dynamic version of the problem, a sequence of updates, comprising insertions and deletions of elements from the ground set $\ground$, alters the set of the current elements denoted by $V$. Each element may undergo multiple insertions and deletions. At each time frame, the set $V \subseteq \ground$ encompasses all inserted elements that have not been deleted since their last insertion. The algorithm's objective is to maintain a solution after each update, with its performance being assessed by its query complexity for each update.
It is assumed that $\epsilon$ and $\epsDel$ are small enough parameters satisfying $\epsilon < 1/10$ and $\epsDel < \epsilon / 16$.
And lastly, it is also assumed that the weight of each element in $\ground$ is between $1$ and the parameter $\rho$.

\subsection{Overview of the Algorithms}

Our algorithm operates through multiple runs, each assigned a specific threshold parameter, denoted as $\tau$, given to them as their input parameter. This threshold is a critical input used to assess the usefulness of the elements and is employed as a measure to distinguish between valuable and insignificant elements that are no longer relevant. In each time frame, the run with threshold $\tau$ that meets some specific criteria will have the appropriate solution for that time frame.

A pivotal aspect of our algorithm is the use of a data structure in each run for keeping the elements, from which we can easily retrieve its solution, and it can efficiently be updated.

The fundamental idea behind this data structure is its hierarchical structure. This structure comprises different levels, where each level $\ell$ includes the sets $L_{\ell}$ and $G_{\ell}$. The family of sets $G_{\ell}$ is used to retrieve the solution. Each of them stores the elements that have been selected by the algorithm up to that level, and these sets form a cumulative hierarchy. The set $L_{\ell}$ encompasses the elements with a marginal density of at least $\tau$ with respect to $G_{\ell - 1}$.

Notably, the reconstruction of the entire data structure is a significant operation with potential query complexity implications and one of the key design features of a leveled data structure is its partial reconstruction capability. To further explain, throughout the execution of the algorithm, we can partially reconstruct the data structure starting from the level of our choosing without affecting the previous levels. This feature enables us to handle the insertion or deletion of an element with minimal changes to most levels of our data structure.

Note that even partial reconstructions are heavy operations, and it is in our best interest to avoid them as long as possible, which is why we utilize partial reconstruction only after reaching a level that is heavily affected by the updates up to that point in time and its reconstruction is necessary. To achieve this, in addition to the sets $L_{\ell}$ and $G_{\ell}$, we also maintain a set $D$ and extended sets $\Lp_{\ell}$ to keep track of the inserted and deleted elements, triggering reconstruction when deemed necessary.

To further clarify, when an element is deleted instead of removing that element from the sets $L_{\ell}$ and $G_{\ell}$, we just add it to the set $D$, and when an element is inserted we add it to the set $\Lp_{\ell}$ without changing the sets $L_{\ell}$. While iterating through the levels to make these changes, the sets of each level get inspected and a reconstruction starting from that level is triggered if certain criteria are met. 
It should be noted that a set $L_{\ell}$ gets updated whenever and only when its level is being reconstructed. 

During the formation of levels, or to be more exact when elements from $L_{\ell}$ are being selected for inclusion in $G_{\ell}$, the elements of $L_{\ell}$ are grouped into different buckets, so the elements from the same bucket are approximately similar in aspects of their marginal gain, weight, and marginal density. Then, the largest bucket, denoted as $B_{\ell}$, is chosen, and a suitable number $m_{\ell}$ for the sample size is determined based on the chosen bucket. Consequently, a uniformly random subset of size $m_{\ell}$ gets chosen from the previously mentioned largest bucket $B_{\ell}$ to form the samples $S_{\ell}$. We then form $G_{\ell}$ by adding elements of $S_{\ell}$ to $G_{{\ell}-1}$ one by one if they meet our marginal density requirement. We then remove all elements $e$ with $d(e | G_{\ell}) \le \tau$ from $L_{\ell}$ to form $L_{\ell + 1}$. 

Now, we proceed to explain more about what we meant by a suitable number for sample size. 

Choosing a smaller sample size ensures a larger fraction of elements from $S_{\ell}$ appear in $G_{\ell}$, reducing the impact of deletions of elements of $B_{\ell}$ on the marginal gain of level ${\ell}$, which leads to less need in invoking reconstruction starting from level ${\ell}$. Conversely, a larger sample size may lead to more substantial removals in the filtering step, impacting the number of the levels of the data structure leading to less query complexity for each reconstruction. This is why we use simulation to obtain a sample size to strike a balance and end up with a low overall query complexity.

\subsection{Parallel Runs}
\label{Parallel}

We keep parallel runs and designate the threshold $\tau = (1+\epsilon)^i$ to the run $i$. 

In the section \ref{dyn:approx}, we guarantee that at each point of time, the output of the run with threshold $\tau$, where $\tau \leq \frac{f(V) \cdot  \epsilon }{\optcost} < (1 + \epsilon) \tau$ is an appropriate bicriteria approximation of the solution in that time frame.

We know that $\tau \leq \frac{f(V) \cdot  \epsilon }{\optcost} < (1 + \epsilon) \tau$ is equivalent to 
$\log_{1+\epsilon}{(\tau)} \leq \log_{1+\epsilon}{(\frac{f(V) \cdot  \epsilon }{\optcost})} < 1 + \log_{1+\epsilon}{(\tau)}$. Therefore, it is guaranteed that the output of the instance with index $\floor{\log_{1+\epsilon}{(\frac{f(V) \cdot  \epsilon }{\optcost})}}$, has an appropriate solution.

We know that $1 \leq \optcost \leq |V|\rho$, so we have 
$\frac{f(V) \cdot  \epsilon }{|V|\rho} \leq \frac{f(V) \cdot  \epsilon }{\optcost} \leq f(V) \cdot  \epsilon$, which implies 
$\log_{1+\epsilon}{(\frac{f(V) \cdot  \epsilon }{|V|\rho})} \leq
\log_{1+\epsilon}{(\frac{f(V) \cdot  \epsilon }{\optcost})} \leq
\log_{1+\epsilon}{(f(V) \cdot  \epsilon)} $. Therefore, at any time we only need to search through the instances with index in $[
        \floor{\log_{1+\epsilon}{(\frac{f(V) \cdot \epsilon}{|V|\rho})}}, \floor{\log_{1+\epsilon}{(f(V)\cdot \epsilon)}}]$ to find a valid solution. 

It can be observed in our last argument that the guarantee of a proper solution in the run $i$ with $\tau = (1+\epsilon)^i$ when $\tau \leq \frac{f(V) \cdot  \epsilon }{\optcost} < (1 + \epsilon) \tau$ is sufficient for the correctness of our algorithm. This means that for each run $i$ with $\tau = (1+\epsilon)^i$ we only need to guarantee its correctness when 
$\tau \leq \frac{f(V) \cdot  \epsilon }{\optcost} < (1 + \epsilon) \tau$.

Because of the monotonocity of the function $f$, we know that for any $e \in V$, $f(e) \leq f(V)$. We also know that $d(e) \leq f(e)$, which implies $d(e) \leq f(V)$. We also know that $\optcost \leq n\rho$. Therefore, for any $e \in V$, we have $\frac{d(e) \cdot  \epsilon }{n\rho} \leq \frac{f(V) \cdot  \epsilon }{\optcost}$.
Hence, $\frac{f(V) \cdot  \epsilon }{\optcost} < (1 + \epsilon) \tau$ only holds when for any $e \in V$, $\frac{d(e) \cdot  \epsilon }{n\rho} < (1 + \epsilon) \tau$, which is equivalent to $\log_{1+\epsilon}{(\frac{d(e) \cdot  \epsilon }{n\rho})} < i + 1$. This is why an element $e$ can only be considered in runs with  $i \geq \log_{1+\epsilon}{(\frac{d(e) \cdot  \epsilon }{n\rho})}$.

It should also be noted that an element $e$ with $d(e) < \tau$ will be automatically ignored by the algorithm. So we also can consider an element $e$ only in the runs with $\log_{1+\epsilon}{(d(e))} \geq i$.
Therefore, to handle the update of any element $e$, we only need to invoke \update{} in instances within the specified range. 

\begin{algorithm}[h] 
\caption{Parallel Runs} 
\begin{algorithmic}[1]
    \For{$i \in \mathbb{Z}$} 
        \State Let $\mathcal{I}_i$ be the instance of our dynamic algorithm, for which $\tau =(1+\epsilon)^i$.
    \EndFor

    \Function{GlobalUpdate}{$e$}
        \State update$(V)$
        \For{\textbf{each} $
        \log_{1+\epsilon}{\left( \frac{d(e) \cdot \epsilon}{n \rho (1 + \epsilon)} \right)} 
        \leq i \leq \log_{1+\epsilon}{d(e)}$
        } \label{line:paralle_run_insertion_interval}
            \State Invoke $\update(e)$ for instance $\mathcal{I}_i$.
        \EndFor
    \EndFunction

    \Function{SolutionRetrieval}{$ $}
        \State Let $i^*\in [
        \floor{\log_{1+\epsilon}{(\frac{f(V) \cdot \epsilon}{|V|\rho})}}, \floor{\log_{1+\epsilon}{(f(V)\cdot \epsilon)}}]$ be the index of the instance whose corresponding $G_T$ meets the criteria $f(G_T) \ge (1 - O(\epsilon))f(V)$ and its $G_T \backslash D$ has the lowest cost.
        \State \Return $G_T \backslash D$ of $\mathcal{I}_{i^*}$.
    \EndFunction
\end{algorithmic}
\label{alg:cardinality:unknown:opt}
\end{algorithm}

\subsection{Data Structure Construction}
The $\ReconstructF{(i)}$ function iteratively constructs a leveled data structure built upon levels $\ell < i$. It starts by updating sets $L_i$ to include the elements inserted since its last update that have pre-approved marginal density with respect to $G_{i - 1}$ and to exclude 
the elements that have been since deleted. 
It also updates $\Lp_i$ based on the current $L_i$. 

Then, a process begins, where in each step, the set $G$ of the current level gets selected, and then the elements get filtered based on their marginal density with respect to the selected $G$ to form the set $L$ of the subsequent level. 

This process terminates when there are no elements left in a level's set $L$, which is when the algorithms sets $T$ to the index of the last nonempty level.

To select each $G_{i}$, the elements in $L_i$ get processed, and each of them gets assigned to a bucket in a two-dimensional array based on their weight and their marginal density. Then the largest bucket gets selected and will be named $B_i$. The algorithm determines a specific threshold $\tau_i$ for level $i$ and calculates a suitable sample size $m_i$ using the \CalcSampleCountF{} function. It then selects a uniform subset $S_i$ from $B_i$ and adds them to $G_i$ one at a time if they still meet the marginal density condition.

\begin{algorithm}[h]
  \caption{Data Structure Construction}
  \label{alg:offline}
  \begin{algorithmic}[1]
    \Function{\InitF{}}{$V$}
      \State $L_0 \gets V$, \quad $G_0 \gets \emptyset$, \quad $D \gets \emptyset$, \quad $\Lp_0 \gets L_0$
      \State $L_1 \gets \{e \in L_0: \density{e}{G_0} \ge \tau\}$, \quad $\Lp_1 \gets L_1$
      \State $\ReconstructF{}(1)$
    \EndFunction
    \Function{\ReconstructF}{$i$}
      \State $L_i \gets \Lp_i \backslash D$, \quad $\Lp_i \gets L_i$
      \label{line:level_first_line}
      \While{$L_i \ne \emptyset$}\label{line:level_break}
        \For{$e \in L_{i}$}
            \State $j \gets \floor{\log_{1+\epsilon}(\frac{\density{e}{G_{i-1}}}{\tau})}$
            \State $k \gets \floor{\log_{1+\epsilon}
            (w(e))}$
            \State $\buck_{j, k} \gets \buck_{j, k} \cup \{e\}$
        \EndFor
        \State Let $b_{i, 1}$ and $b_{i, 2}$ be the indices of the largest $\buck$
        \State $B_i \gets \buck_{b_{i, 1}, b_{i, 2}}$, \quad $\tau_i \gets (1+\epsilon)^{b_{i, 1}} \cdot \tau$
        \State $m_i \gets \CalcSampleCountF{}(B_i, G_{i-1}, \tau_{i})$
        \State $S_{i}=[ e_{i, 1},\dots,e_{i, m_i} ] \gets$ Uniform subset of size $m_i$ from $B_i$
        \State $G_{i} \gets G_{i-1}$
        \For{$e \in S_i$}
            \If {$d(e | G_{i}) \ge \tau_{i}$}
                \State $G_{i} \gets G_{i} \cup e$
            \EndIf
        \EndFor
        \State $L_{i+1} \gets \{e \in L_i: \density{e}{G_{i}} \ge \tau\}$
        \State $\Lp_{i+1} \gets L_{i+1}$
        \label{line:level_filter}
        \State $i \gets i + 1$
      \EndWhile
      \State $T \gets i - 1$
    \EndFunction
  \end{algorithmic}
\end{algorithm}

\subsection{Insertion }
In the \InsertF{} Function, we manage the insertion of an element $e$, into our dynamic data structure. First, we remove $e$ from the set of deleted elements $D$, indicating its active status. Next, we add $e$ to the extended set $\Lp_0$. Then, we iterate over the levels, starting from level $1$ up to $T + 1$, where $T$ represents the index of the last nonempty level. At each level, we check if the density of $e$ with respect to $G_{i-1}$ is greater than the threshold $\tau$. If so, $e$ is added to the extended set $\Lp_i$. Otherwise, $e$ would not be added to the extended set $\Lp_i$, and we also no longer need to check the subsequent levels, so we terminate the loop. 
We also monitor the size of $\Lp_i$, and if $|\Lp_i|$ exceeds $\frac{3}{2}|L_i|$, we reconstruct the levels starting from Level $i$ and terminate the loop. 

\subsection{ Deletion} 

The \DeleteF{} function handles the removals of the elements. When an element $e$ is deleted, we begin by adding $e$ to the set $D$, which keeps track of the deleted elements. Then, we iterate over all nonempty levels. In each level, we check if the proportion of deleted elements from the bucket $B_i$ used for sampling $S_i$ exceeds the threshold $\epsilon$. If this condition holds, we trigger a reconstruction of the data structure starting from the current level $i$ using the \ReconstructF{} function and then terminate the loop.

\begin{algorithm}[h]
  \caption{Insertion}
  \label{alg:insert}
  \begin{algorithmic}[1]
    \Function{Insert}{$e$}
        \State $D \gets D \backslash \{e\}$
       \State $\Lp_{0} \gets \Lp_0 \cup \{e\}$
      \For{$i\gets 1, \dots, T + 1$}
        \If{$\density{e}{G_{i-1}} < \tau$}
          \State \Break\label{line:insert_break}
        \EndIf
        \State $\Lp_i \gets \Lp_i \cup \{e\}$ \label{line:insert_add}
        \If{$i=T + 1$ or $|\Lp_{i}| \ge \frac{3}{2}\cdot |L_i|$}
          \State \ReconstructF{}(i)\label{line:insert_level}
          \State \Break
        \EndIf
      \EndFor
    \EndFunction
  \end{algorithmic}
\end{algorithm}

\begin{algorithm}[h]
  \caption{Deletion}
  \label{alg:delete}
  \begin{algorithmic}[1]
    \Function{Delete}{$e$}
      \State $D \gets D \cup e$
      \For{$i\gets 1, \dots, T$}
        \If{$|D \cap B_i| \ge \epsDel \cdot |B_i|$}
          \State \ReconstructF{}(i)
          \State \Break
        \EndIf
      \EndFor
    \EndFunction
  \end{algorithmic}
\end{algorithm}

\subsection{Choice of Sample Size}
We know that as we add elements of $S_i$ to $G_i$, and $G_i$ grows larger, the remaining elements in $S_i$ are less likely to satisfy the marginal density requirement for being added to the $G_i$. Thus, intuitively, choosing a smaller sample size ensures that a larger fraction of the elements in $S_i$ appear in $G_i$. Therefore, deleting an $\epsilon$-fraction of $S_i$ would not drastically affect the value of $\Delta(G_{i}|G_{i-1})$ as opposed to the case where only a few elements of $S_i$ appear in $G_i$ and the deletion of those few elements has a significant impact on the marginal value $\Delta(G_{i}|G_{i-1})$. Therefore, having a smaller sample size leads to less invocation of $\ReconstructF{}$ function. 

On the other hand, choosing a larger $m_i$ ensures that a larger number of elements will be removed in the filtering step. This will reduce the number of levels of our data structure, which decreases the query complexity of the $\ReconstructF{}$ function.

To balance this trade-off, we first try to find out if we process all the elements in $B_i$ one by one in a random order, for any $j \in [1, |B_i|]$, what is the probability of the $j^{\text{th}}$ element being added to $G_i$ and denote such probability by $X^*(j)$. Then we choose the largest integer $m_i^*$ such that $X^*(j) \ge 1-\epsilon$ for all $j \le m_i^*$.
This choice ensures that
\begin{enumerate}
  \item In expectation, $(1-\epsilon)$-fraction of the elements of $S_{i}$ are added to $G_i$
  \item In expectation, at least $\epsilon$-fraction of the elements in $|B_i|$ have their marginal gain decreased sufficiently at the end of this level. Formally, these elements either do not appear in $L_{i+1}$, or the index of their bucket decreases.
\end{enumerate}

Yet, since we cannot calculate the exact values of $X(i)$, we estimate
these probabilities by simulating, and we obtain a sample size that satisfies properties similar to properties of $m_i^*$ with a high probability.

Here we provide a formal definition for the notion of \emph{suitable sample size} for each level. It can be verified that the definition is chosen such that a single run of $\CalcSampleCountF{(L', G', \tau')}$ provides us with a suitable sample size with respect to $L', G',$ and $ \tau'$, with a high probability. For a proofs we refer to Lemmas \ref{lm:bound_countF_always} and \ref{lm:bound_countF} in Appendix \ref{app:invar_proofs}, which were used in both query complexity guarantee and approximation guarantee of the algorithm. 

\begin{definition}[Suitable sample size]\label{def:suitable_sample_size}
  Given the values $L', G',$ and $ \tau'$,
  consider a run of $\apprev$ on these values and let $\bX$ be the $|L'|+1$ dimensional random output.
  A number $m^* \le |L'|$ is called a \emph{suitable sample size with respect to $L', G',$ and $\tau'$} if:
  \begin{align*}
      &\Ex{\bX(r)} \ge 1-2\epsilon \text{ for all }r \in [1, m^*] \text{ and } \\&\Ex{\bX(m^* + 1)} \le 1-\frac{\epsilon}{2}.
  \end{align*}
  We use
  $M^*_{i}$ to denote the set of suitable sample sizes for $(B_i, G_{i - 1}, \tau_{i})$.

\end{definition}

\begin{algorithm}[h]
  \caption{\CalcSampleCountF{}}
  \label{alg:sample}
  \begin{algorithmic}[1]
      \Function{\CalcSampleCountF}{$L', G', \tau'$}
      \State $t \gets \ceil{\constt}$ 
      \For{$j \in [1, t]$}
        \State $X_j \gets \apprev{}(L', G', \tau')$ \label{line:define_X_j}
      \EndFor
       \State Let $m'$ be the smallest index $i \in [1, |L'| + 1]$ for which  $\frac{1}{t}(\sum_{j=1}^t X_j(i))< 1 - \epsilon$ \label{line:find_m'}
      \State Return $m' - 1$ \label{line:output_calcsamplecount}
    \EndFunction
    \Function{\apprev}{$L', G', \tau'$}
      \State Let $[e_1,\dots,e_{|L'|}]$ be a random permutation of $L'$ \label{line:random_permuate}
      \State Let $X$ be a $|L'| + 1$ dimensional vector initialized to $0$.
      \State $G'' \gets G'$
      \For{$i=1$ to $|L'|$}
          \If{$d(e_i|G'') \ge \tau'$}
            \State $X(i) \gets 1$
            \State $G'' \gets G'' \cup \{e_i\}$
          \Else 
            \State \Continue
          \EndIf  
      \EndFor
      \State \Return $X$
    \EndFunction
  \end{algorithmic}
\end{algorithm}

\section{Theoretical Analysis}
In this section, we state our main theoretical results.
\begin{theorem}\label{thm:main}
  We have provided an algorithm for dynamic submodular cover problem, where weight of each item is guaranteed to be in the range $[1, \rho]$,
  that maintains an expected $(1 - O(\epsilon), O(\epsilon)^{-1})$-bicriteria approximate solution 
  with an expected $\poly{}(\log(n), \log(\rho), \eps^{-1})$ amortized oracle queries per update. 
\end{theorem}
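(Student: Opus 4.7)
The plan is to decompose the theorem into two claims, the expected bicriteria approximation guarantee and the amortized update time bound, and then argue that the \textsc{SolutionRetrieval} procedure from Section \ref{Parallel} combines them via the parallel runs. I would establish both claims for a single instance $\mathcal{I}_{i^*}$ whose threshold $\tau$ satisfies $\tau \le \frac{f(V) \cdot \epsilon}{\optcost} < (1+\epsilon)\tau$, since by the discussion in Section \ref{Parallel} such an instance always exists in the search range and its output is what \textsc{SolutionRetrieval} returns.

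For the approximation guarantee, I would argue the cost bound first. Every element $e$ that is added to $G_i$ by the \ReconstructF{} loop passes the check $d(e \mid G_i) \ge \tau_i \ge \tau$, so its marginal gain satisfies $\Delta(e \mid G_{i-1}) \ge \tau \cdot w(e)$. Telescoping across levels and using monotonicity,
\[
   \tau \cdot \cost(G_T) \le f(G_T) \le f(V),
\]
so $\cost(G_T) \le f(V)/\tau \le \optcost/\epsilon$. For the coverage bound, I would use that every element $e$ remaining outside $G_T$ has marginal density $d(e \mid G_T) < \tau$ (otherwise it would still lie in some $L_\ell$ and the loop would not have terminated), and apply submodularity on the optimal solution $\solution_{\opt}$:
\[
   f(V) - f(G_T) \le \sum_{e \in \solution_{\opt}} \Delta(e \mid G_T) \le \tau \sum_{e \in \solution_{\opt}} w(e) = \tau \cdot \optcost \le \epsilon f(V).
\]
The delicate part is to upgrade these bounds from $G_T$ to $G_T \setminus D$: the deletion rule triggers reconstruction whenever any bucket $B_i$ loses more than an $\epsDel$ fraction, and the suitable-sample-size property (Definition \ref{def:suitable_sample_size}) ensures that in expectation at most a $2\epsilon$ fraction of each $S_i$ is rejected, so the expected contribution of deletions to $f(G_T) - f(G_T \setminus D)$ is bounded via a level-by-level telescoping by $O(\epsilon) f(V)$, giving the $(1-O(\epsilon), O(\epsilon^{-1}))$-bicriteria guarantee.

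For the query complexity, I would bound the number of levels $T$ using the second property of the suitable sample size, which forces at least an $\epsilon$ fraction of $B_i$ to leave the largest bucket at every level; since the two-dimensional bucketing has $O(\epsilon^{-2} \log(n)\log(\rho))$ buckets, the largest bucket is a $\Omega(1/(\epsilon^{-2}\log(n)\log(\rho)))$ fraction of $L_i$, so $|L_{i+1}|$ shrinks by a $(1-\poly(\epsilon,1/\log(n),1/\log(\rho)))$ factor in expectation, giving $T = \poly(\log n, \log \rho, \epsilon^{-1})$. Each \ReconstructF{(i)} call costs $\poly(\log n, \log \rho, \epsilon^{-1}) \cdot |L_i|$ oracle queries, dominated by the $t = \ceil{\constt}$ simulations in \CalcSampleCountF{}. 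Reconstructions are triggered either by insertions (when $|\Lp_i| \ge \tfrac{3}{2}|L_i|$, charging $|L_i|/2$ insertions) or by deletions (when $|D \cap B_i| \ge \epsDel |B_i|$, charging $\epsDel |B_i|$ deletions). Combining with the fact that each update only touches runs in an interval of length $O(\epsilon^{-1}\log(n\rho))$ from Section \ref{Parallel} yields the claimed amortized $\poly(\log n, \log \rho, \epsilon^{-1})$ per-update bound.

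The main obstacle is the two-dimensional nature of the bucketing and its interaction with deletions. Unlike the single-weight setting of \cite{banihashem2023dynamic}, the weight dimension both enlarges the number of buckets (so the per-level shrinkage is slower) and forces the cost argument to track $\sum w(e)$ rather than $|G_T|$, which is precisely why the suitable sample size must be defined in terms of the largest bucket rather than all of $L_i$. A subtle secondary obstacle is ensuring that the \CalcSampleCountF{} estimator returns a suitable sample size with high probability over all reconstructions across all parallel runs; this is exactly what the factor $n^{12}$ inside $\constt$ is for, and a union bound combined with Lemmas \ref{lm:bound_countF_always} and \ref{lm:bound_countF} of Appendix \ref{app:invar_proofs} controls the tail events uniformly, letting the expected bounds above be taken unconditionally.
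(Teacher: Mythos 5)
Your decomposition into an approximation-guarantee claim and a query-complexity claim, and the reliance on the threshold instance with $\tau \le \frac{f(V)\,\epsilon}{\optcost} < (1+\epsilon)\tau$, is exactly the paper's route (Theorems~\ref{thm:bicapprox} and~\ref{thm:bound_num_query_amor}, combined with Lemmas~\ref{lm:costapprox}, \ref{lm:approx:nodel}, \ref{lm:approx:del}). Two issues in the sketch deserve attention.

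The cost-bound chain has a direction error: from $\tau \le \frac{f(V)\,\epsilon}{\optcost}$ you get $f(V)/\tau \ge \optcost/\epsilon$, not $\le$. The correct bound comes from the other side of the bracket, $(1+\epsilon)\tau > \frac{f(V)\,\epsilon}{\optcost}$, yielding $\cost(G_T) \le f(V)/\tau < \frac{(1+\epsilon)}{\epsilon}\optcost$; this is exactly how Lemma~\ref{lm:costapprox} proceeds. The fix is trivial, but as written the inequality is false.

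The more substantive gap is in the bound on the number of levels. You assert that because an $\epsilon$-fraction of $B_i$ leaves the largest bucket and $B_i$ is a $\Omega(1/(\epsilon^{-2}\log n \log\rho))$-fraction of $L_i$, it follows that $|L_{i+1}|$ shrinks multiplicatively in expectation. This does not follow: the suitable-sample-size property (Definition~\ref{def:suitable_sample_size}) only guarantees that an $\epsilon/2$-fraction of $B_i$ has its marginal density pushed below $\tau_i$, which moves those elements to a lower bucket but does \emph{not} necessarily eject them from $L_{i+1}$, since they may still satisfy $\density{e}{G_i} \ge \tau$. So $|L_{i+1}|$ can be unchanged. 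The paper closes this gap with the potential $P_i = \sum_{e \in \hat{L}_i} P(e,i)$, the sum of bucket indices: a bucket drop decreases $P_i$ even when $|L_i|$ does not move, and because $|\hat{L}_i| \le P_i \le O(\log_{1+\epsilon}\eta)\,|\hat{L}_i|$ one gets a multiplicative decrease of the potential (Lemma~\ref{lm:potential_decrease}), from which $T = \poly(\log\eta, 1/\epsilon)$ follows via Lemmas~\ref{lm:num_level_reconstruct}, \ref{lm:num_level_total}, and~\ref{lm:auxil}. Without some such potential, your level-count step does not close. The rest of your query-complexity accounting --- per-\ReconstructF{} cost, amortization over the triggering insertions/deletions, and the $O(\epsilon^{-1}\log(n\rho))$ parallel runs touched per update --- matches the paper's proof of Theorem~\ref{thm:bound_num_query_amor}.
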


Note that as mentioned in the statement of the theorem, our algorithm and our proof use the assumption that weight of each element is in the range $[1, \rho]$. However, it can easily be verified that this theorem proves Theorem \ref{contributiontheorem} as dividing the weight of all element by $\min_{v \in \ground} w(v)$ guarantees the assumption while it would not change the ratio between the cost of our proposed solution and the optimal solution. 

\begin{proof}
We prove this theorem using the following theorems \ref {thm:bicapprox} and \ref{thm:bound_num_query_amor}
regarding the approximation guarantees and query complexity guarantee of the dynamic algorithm that we proposed in section \ref{main:algorithm}, respectively.  
\end{proof}

Complete and detailed proof of the next theorem and alongside its references can be found in Appendix \ref{dyn:approx}.

\begin{theorem}\label{thm:bicapprox}
Our algorithm maintains an expected $(1 - O(\epsilon), O(\epsilon)^{-1})$-bicriteria approximation of the solution.
\end{theorem}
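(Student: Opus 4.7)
The plan is to fix the instance $i^*$ whose threshold $\tau = (1+\epsilon)^{i^*}$ satisfies $\tau \le \frac{f(V)\epsilon}{\optcost} < (1+\epsilon)\tau$ (existence of such an $i^*$ in the search range was already argued in Section~\ref{Parallel}), and show that the output $G_T \setminus D$ of this particular instance is, in expectation, an $(1-O(\epsilon), O(\epsilon^{-1}))$-bicriteria solution. Since \textsc{SolutionRetrieval} returns the \emph{minimum-cost} instance whose $G_T$ already meets the coverage criterion, verifying the claim for $i^*$ is enough: the cost of the returned solution is at most that of $i^*$, and its coverage automatically passes.

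For the cost bound, I would use a direct telescoping argument. Whenever an element $e$ was added to some $G_i$ during a reconstruction, the guard ensured $d(e\mid G_{i-1}) \ge \tau_i \ge \tau$, i.e.\ $\Delta(e\mid G_{i-1}) \ge \tau\,\weight(e)$. Summing over all elements of $G_T$ along the hierarchy $G_0 \subset G_1 \subset \dots \subset G_T$ yields $\tau \cdot \cost(G_T) \le f(G_T) \le f(V)$. Combining with the lower bound $\tau > \tfrac{f(V)\epsilon}{(1+\epsilon)\optcost}$ gives $\cost(G_T \setminus D) \le \cost(G_T) \le \tfrac{1+\epsilon}{\epsilon}\,\optcost = O(\epsilon^{-1})\,\optcost$. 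This part is deterministic and routine.

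For the coverage bound, I would decompose $f(V) - f(G_T \setminus D)$ using submodularity applied to an optimal solution $S_{\opt}$ currently contained in $V$:
\[
f(V) - f(G_T \setminus D) \;\le\; \sum_{e \in S_{\opt} \setminus (G_T \setminus D)} \Delta(e \mid G_T \setminus D).
\]
Each such $e$ falls into one of three categories: (i) an element of $V$ never placed in $G_T$, which was filtered during some reconstruction because $d(e\mid G_i) < \tau$, so by submodularity $\Delta(e\mid G_T \setminus D) \le \tau\,\weight(e)$; (ii) an element inserted after the last reconstruction of the level that currently holds it --- the invariants of \textsc{Insert} (early break when $d(e\mid G_{i-1}) < \tau$, and the $|\Lp_i| \le \tfrac{3}{2}|L_i|$ guard) similarly keep its marginal density bounded by $\tau$ with respect to $G_T$; and (iii) no element of $S_{\opt}$ lies in $G_T \cap D$ because $S_{\opt} \subseteq V$ and $V \cap D = \emptyset$. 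Summing these contributions bounds the right-hand side by $\tau \cdot \optcost \le \epsilon f(V)$, \emph{plus} the deletion loss $f(G_T) - f(G_T \setminus D)$ that I must still control in expectation.

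Bounding the expected deletion loss is the main obstacle. Writing $f(G_T) - f(G_T \setminus D) \le \sum_{i=1}^{T} \Delta(G_i \cap D \mid G_{i-1} \setminus D)$ and using submodularity, it suffices to bound each layer's contribution by $O(\epsilon)\,\Delta(G_i \mid G_{i-1})$. The mechanism is the \emph{suitable sample size} guarantee of Definition~\ref{def:suitable_sample_size} together with Lemmas~\ref{lm:bound_countF_always} and \ref{lm:bound_countF}: with high probability, an $(1-2\epsilon)$-fraction of $S_i$ actually enters $G_i$ (so $G_i \setminus G_{i-1}$ is a roughly uniform sample of size $\approx (1-2\epsilon)m_i$ from $B_i$), while \textsc{Delete} enforces $|D \cap B_i| \le \epsDel |B_i|$. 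Hence the expected number of sampled-and-deleted elements is at most $\epsDel\,m_i$, each contributing marginal gain at most $\approx (1+\epsilon)\tau_i \cdot \weight(e)$ on the bucket scale, yielding $\mathbb{E}[\Delta(G_i \cap D \mid G_{i-1} \setminus D)] \le O(\epsDel/(1-2\epsilon))\cdot \Delta(G_i \mid G_{i-1})$. Since $\epsDel < \epsilon/16$, telescoping over levels bounds the total expected deletion loss by $O(\epsilon)\,f(G_T) \le O(\epsilon)\,f(V)$. Combining with the filtering bound above gives $\mathbb{E}[f(G_T \setminus D)] \ge (1 - O(\epsilon))\,f(V)$, completing the proof. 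The delicate step is the per-level exchange between the sampling analysis and the lazy deletion threshold, which must be done carefully enough to keep the loss proportional to the level's marginal gain rather than to an absolute quantity.
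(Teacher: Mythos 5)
Your proof follows essentially the same route as the paper: isolate the instance $i^*$ with $\tau \le \frac{f(V)\epsilon}{\optcost} < (1+\epsilon)\tau$, prove the cost bound by a per-element threshold argument, prove the deterministic coverage bound via filtering, and control the expected deletion loss level-by-level using the suitable-sample-size guarantee together with the deviation invariant. Two technical steps as you wrote them do not quite close, though both are fixed by the order of operations the paper uses.

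\textbf{Cost bound.} You conclude from $\tau\cdot\cost(G_T)\le f(G_T)$ that $\cost(G_T\setminus D)\le\cost(G_T)\le f(G_T)/\tau\le f(V)/\tau$, but the last inequality requires $G_T\subseteq V$, which fails because $G_T$ may contain lazily deleted elements (elements of $D$ that were alive when sampled). In general $f(G_T)$ need not be bounded by $f(V)$. The paper sidesteps this by never touching $f(G_T)$: for the $i$-th element $e_i$ of $G_T$ in insertion order, $w(e_i)\le\Delta(e_i\mid\{e_1,\dots,e_{i-1}\})/\tau\le\Delta(e_i\mid\{e_1,\dots,e_{i-1}\}\setminus D)/\tau$ by submodularity, and summing only over $e_i\notin D$ telescopes to exactly $f(G_T\setminus D)/\tau$, which is $\le f(V)/\tau$ since $G_T\setminus D\subseteq V$.

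\textbf{Coverage bound.} You open with $f(V)-f(G_T\setminus D)\le\sum_{e\in S_{\opt}\setminus(G_T\setminus D)}\Delta(e\mid G_T\setminus D)$ and then try to bound each $\Delta(e\mid G_T\setminus D)$ by $\tau\,\weight(e)$ for filtered $e$. But the filtering guarantee gives $\Delta(e\mid G_\ell)<\tau\,\weight(e)$ for some level $\ell$, and $G_\ell$ is not in general a subset of $G_T\setminus D$ (again because $D\cap G_\ell$ can be nonempty), so submodularity does not transfer the bound to $\Delta(e\mid G_T\setminus D)$; it only transfers to $\Delta(e\mid G_T)$. That is exactly why the paper splits $f(V)-f(G_T\setminus D)$ additively into the deterministic filtering loss $f(V)-f(G_T)$ (handled by Lemma~\ref{lm:approx:nodel}, where $G_\ell\subseteq G_T$ lets submodularity work) and the randomized deletion loss $f(G_T)-f(G_T\setminus D)$ (Lemma~\ref{lm:approx:del}). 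You mention \enquote{plus the deletion loss} afterward, so you clearly know the right two quantities to control, but the initial decomposition you wrote mixes them and the bound $\Delta(e\mid G_T\setminus D)\le\tau\,\weight(e)$ is not available. Once reorganized as the paper does, the rest of your deletion-loss argument --- suitable sample size ensuring $(1-2\epsilon)m_\ell$ expected additions, the $|D\cap B_\ell|\le\epsDel|B_\ell|$ guard, the uniformity of $\bS_\ell$, and the bucket-scale comparability of marginal values --- is exactly the paper's Lemma~\ref{lm:approx:del} in sketch form. Also note that your case (ii) for recently inserted elements is not needed: the filter invariant $\hat{L}_i=\FilterF(\hat{L}_{i-1},G_{i-1},\tau)$ is maintained exactly for all $i\in[T+1]$, so every $u\in V\setminus G_T$ is filtered at some level, regardless of when it was inserted.
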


\begin{proof}

Consider the run whose assigned threshold parameter satisfies $\tau \leq \frac{f(V) \cdot  \epsilon }{\optcost} < (1 + \epsilon) \tau$. Lemma \ref{lm:bicapprox} guarantees that output of this run is an expected $(1 - O(\epsilon), O(\epsilon)^{-1})$-bicriteria approximate solution.
 As explained in \ref{Parallel} this run is included in the instances with an index between 
$\floor{\log_{1+\epsilon}{(\frac{f(V) \cdot \epsilon}{|V|\rho})}}$ and $\floor{\log_{1+\epsilon}{(f(V)\cdot \epsilon)}}$, and Lemma \ref{lm:approx:nodel} ensures that $G_T$ of this run satisfies the condition of $f(G_T) \geq (1 - \epsilon) f(V)$.
Additionally Lemma \ref{lm:approx:del} ensures that $f(G_T \backslash D)$
of any run with $f(G_T) \geq (1 - \epsilon) f(V)$ is an expected $(1 - \epsilon)$ approximate of $f(V)$. Therefore, by checking all the instances in the specified range with $f(G_T) \geq (1 - \epsilon) f(V)$, and choosing the one with lowest $\cost(G_T \backslash D)$, we are guaranteed to find an expected $(1 - O(\epsilon), O(\epsilon)^{-1})$-bicriteria approximate solution for the problem. 
\end{proof}

A complete proof of the following theorem and its references can be found in detail in Appendix \ref{Query}.

\begin{theorem}
\label{thm:bound_num_query_amor}
  The expected amortized query complexity of our algorithm is at most
  \begin{math}
    \poly(\log(\eta), \frac{1}{\epsilon})
  \end{math}
  per update, where $\eta := \frac{1+\epsilon}{\epsilon} n \rho$.
\end{theorem}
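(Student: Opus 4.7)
The plan is to bound the expected total work performed by all invocations of \ReconstructF{} across a sequence of $N$ updates and then divide by $N$. I decompose the analysis into four parts: (i) an expected upper bound of $T = \poly(\log\eta, 1/\epsilon)$ on the depth of the leveled data structure maintained by one instance, (ii) a bound of $|L_i|\cdot\poly(\log\eta, 1/\epsilon)$ on the query cost of a single \ReconstructF$(i)$ call, (iii) a charging argument distributing each reconstruction's cost among the updates that triggered it via the \InsertF{} and \DeleteF{} thresholds, and (iv) multiplying by the $O(\epsilon^{-1}\log\eta)$ parallel instances that each global update visits on Line \ref{line:paralle_run_insertion_interval}.

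For the depth bound, I would exploit the two-dimensional bucketing. Within a single run the density bucket index lies in $[0, O(\epsilon^{-1}\log\eta)]$ and the weight bucket index in $[0, O(\epsilon^{-1}\log\rho)]$, so the total number of bucket cells is $B = O(\epsilon^{-2}\log^2\eta)$, giving $|B_i|\ge |L_i|/B$. The second defining condition of a suitable sample size (Definition \ref{def:suitable_sample_size}), which \CalcSampleCountF{} outputs with high probability by Lemmas \ref{lm:bound_countF_always} and \ref{lm:bound_countF}, then implies that in expectation a constant $\epsilon$-fraction of $B_i$ is resolved at level $i$: either the element is removed from $L_{i+1}$ during the filter pass, or $d(\cdot\,|\,G_i)$ drops below $\tau_i$ so that its density bucket index strictly decreases. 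A potential $\Phi_i := \sum_{e\in L_i}(1+j_e)$, with $j_e$ the density bucket index of $e$ at level $i$, starts at $O(n\epsilon^{-1}\log\eta)$ and, in expectation, drops by at least $\epsilon|L_i|/B$ per level, forcing the desired polylogarithmic bound on $T$.

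The per-reconstruction cost is then a direct count: bucketing, sampling into $G_j$, and the filter pass each touch $e \in L_j$ a constant number of times, while \CalcSampleCountF{} scales this by the $t = \lceil \constt \rceil$ simulations of \apprev{}, yielding an $|L_j|\cdot\poly(\log\eta, 1/\epsilon)$ per-level bound. Since $|L_j|$ is non-increasing in $j$, the full \ReconstructF$(i)$ call costs $\sum_{j\ge i}|L_j|\cdot\poly \le T\cdot|L_i|\cdot\poly = |L_i|\cdot\poly(\log\eta, 1/\epsilon)$. For the charging step, the \InsertF{} trigger $|\Lp_i|\ge \tfrac{3}{2}|L_i|$ implies at least $|L_i|/2$ insertions have hit level $i$ since its last rebuild, while the \DeleteF{} trigger $|D\cap B_i|\ge \epsDel|B_i|$ implies at least $\epsDel|L_i|/B$ deletions into $B_i$ occurred; in either case, distributing the cost $|L_i|\cdot\poly$ among the triggering updates produces an amortized $\poly(\log\eta, 1/\epsilon)$ per-update-per-level charge. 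Because an update traverses at most $T$ levels before its loop breaks on a reconstruction, the per-instance amortized cost is $\poly(\log\eta, 1/\epsilon)$, and step (iv) multiplies by the range on Line \ref{line:paralle_run_insertion_interval} to conclude.

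The main obstacle is the depth argument in step (i): the suitability of each sample size only holds with high probability, and the resolution guarantee is only in expectation, so one must carefully couple the randomness across levels and across successive reconstructions and verify that low-probability failure events contribute negligibly to the total expected work. A secondary subtlety is the edge case $i = T+1$ in \InsertF{}, where a brand-new level is appended; this must be shown to fire rarely enough to be absorbed into the amortization, by arguing that between two such firings, $\Omega(|L_T|)$ insertions must land at the bottom of the structure.
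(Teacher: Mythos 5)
Your proposal follows essentially the same route as the paper: the same decomposition into direct vs.\ reconstruction queries, the same potential $\Phi_i = \sum_e (1 + j_e)$ (which is exactly the paper's $P_i$), the same $|B_i| \ge |L_i|/B$ lower bound via the bucket count, the same $|L_i|\cdot\poly$ bound per \ReconstructF{} call with the same charging scheme against $\Lp_i\backslash L_i$ or $B_i\cap D$, and the same final multiplication over the parallel instances. So the high-level architecture matches.

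Two of the subtleties you flag deserve sharper comments, and one you flag is a non-issue. First, the depth bound is actually needed in two flavors: the bound on $\Ex{\bT - j}$ immediately after a call to $\ReconstructF(j)$ (used to bound the cost of a single reconstruction) and a bound on $\Ex{\bT}$ at an \emph{arbitrary} point in the stream (used to bound direct queries and the number of levels an update can be charged by). For the latter, the potential must be taken over $\hat L_i = \Lp_i\backslash D$ rather than $L_i$, and the expected drop argument needs an extra $\epsDel|B_i|$ slack term absorbed by the deviation invariant $|B_i\cap D|\le\epsDel|B_i|$ together with $\epsDel\le\epsSamp/16$; summing over $L_i$, as you wrote, is only correct in the post-reconstruction version. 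Second, the independence that makes the amortization multiply cleanly is not about coupling \enquote{across levels}: the issue is that the number of levels an update traverses ($\bT$ at update time) is determined by random bits drawn \emph{before} the update, while the cost of any reconstruction it triggers is governed by random bits drawn \emph{at or after} the update, so the two expectations can be multiplied. Finally, your \enquote{secondary subtlety} about $i = T+1$ is actually benign: when $\ReconstructF(T+1)$ fires, $L_{T+1}=\emptyset$, so $|\Lp_{T+1}\backslash L_{T+1}| = |\Lp_{T+1}| \ge |\hat L_{T+1}|$, and the charging inequality $|\Lp_i\backslash L_i|\ge\frac13|\Lp_i|\ge\frac13|\hat L_i|$ holds trivially; no separate argument is needed.
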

\begin{proof}
We start by bounding the count of "direct" queries, which come from insertions and deletions. Here, we don't count queries made indirectly through \ReconstructF{}. Each insertion or deletion can result in at most $\mO(\bT)$ queries, where $\bT$ is the number of levels during the update. According to Lemma~\ref{lm:num_level_total}, this is capped at $\poly(\log(n), \log(\eta), \frac{1}{\epsilon})$ because $|\Lc_1| \le n$.

Moving on to \enquote{indirect} queries made by \ReconstructF{}, we charge the cost of each $\ReconstructF{}(i)$ call to the updates causing it. If $\ReconstructF{}(i)$ is triggered by an insertion, its cost is charged to $\Lp_{i} \backslash L_i$, and if by a deletion, it's charged to $B_i \cap D$.
Each time $\ReconstructF{}(i)$ is called for some $i$, the expected number of queries is $|\Lc_i| \poly(\log(|\Lc_i|), \log(\eta), \frac{1}{\epsilon})$. However, this cost is spread across at least $\frac{|\Lc_i|}{\poly(\log(\rho), \log(\eta), \frac{1}{\eps})}$ updates due to the reconstruction conditions (The lower bound is chosen considering the reconstruction condition of deletion and size of $B_i$ and will clearly also hold if $\ReconstructF{}(i)$ is triggered by an insertion, because in that case $|\Lp_{i} \backslash L_i|$ is at least $\frac{1}{3}|\Lp_i| \geq \frac{1}{3}|\Lc_i|$). Hence, the cost of each charge is at most $\poly(\log(\eta),\frac{1}{\eps})$ (note that $|\Lc_i| \leq n$, and $\eta$ has both $\rho$ and $n$ as factors). Now, since each update can be charged by  $\ReconstructF{}(i)$ only once and only if when the update happens $T > i$ and level $i$ gets affected, we can say each update is charged at most $\poly(\log(\eta), \frac{1}{\eps})$ for each of the levels it affects. And as the expected number of levels during the update is at most $\poly(\log(\eta), \frac{1}{\eps})$ by Lemma~\ref{lm:num_level_total}, the claim follows. It's important to note that the random bits used to limit the expectation of $\bT$ and the ones used to limit the queries for each reconstruction are separate. Since the value $\bT$ is known at the update time, it relies on the random bits used before the update. In contrast, the number of queries for each $\ReconstructF{}$ depends on random bits used after (or at the time of) the update.
\end{proof}

Note that the proofs of the Theorems \ref{thm:bicapprox} and \ref{thm:bound_num_query_amor} use the invariants introduced in Appendix \ref{Invariants} and proved in Appendix \ref{app:invar_proofs}. 

\section{Conclusion}
In this paper, we explored the dynamic setting of the monotone submodular cover problem. Specifically, we introduced a $(1- O(\epsilon), \mO(\epsilon^{-1}))$-bicriteria approximation algorithm with polylogarithmic query complexity.

For future research directions, a promising avenue is to refine the query complexity to $\poly(\log(k), \epsilon)$ while making it independent of $n$. 

Moreover, the exploration of the non-monotone version of the submodular cover problem in the dynamic setting remains an open challenge.

\section*{Acknowledgements}
Partially supported by DARPA QuICC, ONR MURI 2024 award on Algorithms, Learning, and Game Theory, Army-Research Laboratory (ARL) grant W911NF2410052, NSF AF:Small grants 2218678, 2114269, 2347322

\section*{Impact Statement}
This paper presents work whose goal is to advance the field of Machine Learning. It contributes to the literature on the weighed submodular cover problem with applications in various fields and ML tasks like data summarization and active set selection.  
There are many potential societal consequences of our work, none of which we feel must be specifically highlighted here.

\bibliographystyle{icml2024}
\bibliography{references}

\appendix
\onecolumn

\section{Invariants}
\label{Invariants}
In this section, we are going to introduce the invariants, which will be used in our analysis. These invariants are guaranteed to hold after each insertion or deletion throughout the execution of the algorithm.

We define the function $\filter{}$ as follows: 
\begin{align*}
    \filter{}(L, G, \tau) := \cbr{e \in L: \density{e}{G} \ge \tau}\text{.}
\end{align*}
Note that we use $\hat{L}_{i}$ to denote $\overline{L}_i \backslash D$. Also recall that we use bold letters to denote random variables while non-bold letters are used to denote the value of variables during the execution.

\textbf{Level invariants} \label{def:invariant_cache}

  \begin{itemize}
  
    \item \textbf{Filter invariant}: 
    $\hat{L}_{i} = \FilterF{}(\hat{L}_{i - 1}, G_{i - 1}, \tau)$
    for all $i \in [T + 1]$.
    
    \item \textbf{Subset invariant}: 
    $\overline{L}_{i} \subseteq \overline{L}_{i - 1}$ for all $i \in [T + 1]$.

    \item \textbf{Deviation invariant}: 
    $|B_i \cap D| \le \epsDel |B_i|$
    and $|\overline{L}_i| \le \frac{3}{2} |L_i|$ for all $i \in [T]$.

    \item \textbf{Stopping invariant}: 
    $\hat{L}_{T+1}= \overline{L}_{T+1} = L_{T+1} = \emptyset$ and 
    $\hat{L}_{i}, \overline{L}_{i}, L_{i}  \ne \emptyset$ for any $ i \in [T]$.

  \end{itemize}

Before continuing with the rest of the invariants, we provide a few more definitions.

\begin{definition}[]
 We define the pre-sample history of Level $i$ as
\begin{equation}
    H_i := (\Lp_0, \Lp_1, \dots, \Lp_i, L_0, L_1, \dots, L_i, G_1, \dots, G_{i-1}, m_i).
    \label{eq:def_history}
  \end{equation}
Intuitively, $H_i$ captures the state, or \enquote{history}, of Algorithm~\ref{alg:offline}
before $\bS_i$ is sampled.

Similarly, we define the pre-size history of Level $i$ to be its pre-sample history minus $m_{i}$, i.e.,
  \begin{equation}
    \histPre_{i} = 
    (\Lp_0, \Lp_1, \dots, \Lp_i, L_0, L_1, \dots, L_i,G_1, \dots, G_{i-1}).
    \label{eq:def_pre_history}
  \end{equation}
  We analogously use $\bH_{\ell}$ and $\bhistPre_{\ell}$ to denote the random variables corresponding to these quantities.
\end{definition}

Now we introduce the random invariants of our algorithm.

\textbf{Sampling invariants}\label{def:invariant_samp}
  \begin{itemize}
    \item \textbf{Sample uniformity invariant}: Conditioned on the pre-sample history of Level $i$,
      the sample set
      $\bS_{i}$ is a uniformly random subset of size $m_i$ from $B_i$.

      Formally, for any $i\ge 1$, and any $H_i$ such that
  $\Pr{\bT \ge i, \bH_i = H_i} > 0$,
  \begin{equation}
    \Pr{\bS_i = S | \bT \ge i, \bH_i = H_i} = \frac{1}{|X_i|}\ind{S \in X_i},
    \label{eq:invariant_uniform}
  \end{equation}
  where $|X_i|$ denotes all sequences of length $m_i$ in $B_i$.  
    
    \item \textbf{Sample size invariant}: 
      Conditioned on the pre-size history of Level $i$, $m_{i}$ is a suitable sample size with a high probability.
      Formally,
      for all $\histPre_i$ such that
      $\Pr{\bT \ge i, \bhistPre_{i} = \histPre_{i}} > 0$,
      \begin{equation*}
      \Pr{\bm_i \in M_i^* | \bT \ge i, \bhistPre_{i} = \histPre_{i}} \ge 
        1 - \frac{\epsilon}{n^{10}}.
      \end{equation*}
  \end{itemize}

  We note that while the above two invariants are intuitively evident for a single execution of Algorithm~\ref{alg:offline} (see Claim~\ref{lm:uniform_right_after}), the dynamic algorithm has the potential to modify both $S_i$ and $H_i$ during update processing. 
Therefore, the result is not immediately clear and needs to be formally proved.
Indeed, the proof heavily relies on the fact that the decision to invoke \ReconstructF{}(i) in \InsertF{} and \DeleteF{} procedures are based only on
$\histPre_i$ and not on $S_i$.

We refer to Appendix~\ref{app:invar_proofs}
for proofs of these invariants.

\section{Approximation guarantee}
\label{dyn:approx}

\begin{lemma}\label{lm:bicapprox}
Assuming that $\tau \leq \frac{f(V) \cdot  \epsilon }{\optcost} < (1 + \epsilon) \tau$, the output (i.e., $G_T \backslash D$) is an expected $(1 - O(\epsilon), O(\epsilon)^{-1})$-bicriteria approximation of the solution.
\end{lemma}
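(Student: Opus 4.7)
The plan is to separate the bicriteria claim into its two parts: the value guarantee $\mathbb{E}[f(G_T \backslash D)] \ge (1 - O(\epsilon)) f(V)$ and the cost guarantee $\cost(G_T \backslash D) \le O(\epsilon^{-1}) \optcost$. For the value guarantee, I would defer essentially the entire argument to the already-cited Lemmas \ref{lm:approx:nodel} and \ref{lm:approx:del}: the first asserts that under the hypothesis $\tau \le f(V)\epsilon/\optcost$ we have $f(G_T) \ge (1-\epsilon) f(V)$ (with probability one, or deterministically, given the sample-size invariant), and the second removes an additional $(1-\epsilon)$ factor in expectation to account for the deleted elements in $D$. Composing them yields $\mathbb{E}[f(G_T \backslash D)] \ge (1-\epsilon)^2 f(V) \ge (1 - O(\epsilon)) f(V)$.

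For the cost guarantee, my plan is a direct marginal-density accumulation argument. By inspecting Algorithm~\ref{alg:offline}, every element $e$ added to some $G_i$ satisfies $d(e \mid G_i^{\text{before}}) \ge \tau_i \ge \tau$ at the moment of insertion, i.e., $\Delta(e \mid G_i^{\text{before}}) \ge \tau \cdot w(e)$. Summing these per-element contributions along the construction of $G_T$ (this is a telescoping sum, which is exact), I get
\begin{equation*}
    f(G_T) = \sum_{e \in G_T} \Delta(e \mid G_i^{\text{before}}) \ge \tau \sum_{e \in G_T} w(e) = \tau \cdot \cost(G_T).
\end{equation*}
Using monotonicity $f(G_T) \le f(V)$, this gives $\cost(G_T) \le f(V)/\tau$. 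Plugging in the hypothesis $\tau > \frac{f(V) \cdot \epsilon}{(1+\epsilon)\optcost}$ yields
\begin{equation*}
    \cost(G_T) \le \frac{f(V)(1+\epsilon)\optcost}{\epsilon \, f(V)} = \frac{1+\epsilon}{\epsilon} \optcost = O(\epsilon^{-1}) \optcost.
\end{equation*}
Since $G_T \backslash D \subseteq G_T$ and weights are non-negative, $\cost(G_T \backslash D) \le \cost(G_T)$, and the cost bound follows deterministically (no expectation needed).

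The main technical obstacle lies in the value part rather than the cost part: the cost bound is a clean telescoping argument that needs only that the marginal-density threshold is enforced element-by-element at insertion time into $G_i$ (which is the principal reason the paper advertises checking the threshold per element rather than in bulk, as remarked in the introduction). The value part hides its difficulty inside Lemma \ref{lm:approx:nodel}, which must leverage the \emph{sample size} and \emph{sample uniformity} invariants to argue that the filtering step reduces $\hat L_i$ sufficiently across levels so that $\hat L_{T+1} = \emptyset$ forces a large $f(G_T)$, and inside Lemma \ref{lm:approx:del}, which must bound the damage from removing $D$ using the deviation invariant $|B_i \cap D| \le \epsDel |B_i|$ together with the uniformity of $\bS_i$ within $B_i$. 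Once those lemmas are invoked as black boxes, the current statement follows by combining the deterministic cost bound with the in-expectation value bound, which is exactly what the overall proof of Theorem~\ref{thm:bicapprox} then does when selecting the right parallel run $i^*$.
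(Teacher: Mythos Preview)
Your overall decomposition (value guarantee via Lemmas~\ref{lm:approx:nodel} and~\ref{lm:approx:del}, cost guarantee via a telescoping density argument) matches the paper's proof, which simply invokes Lemmas~\ref{lm:costapprox} and~\ref{lm:fvalapprox}. The value part is fine.

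There is, however, a genuine gap in your cost argument: the step ``Using monotonicity $f(G_T) \le f(V)$'' is not justified. Monotonicity gives $f(A) \le f(B)$ only when $A \subseteq B$, but $G_T$ can contain deleted elements (elements added to some $G_i$ during an earlier \ReconstructF{} and later placed in $D$ without triggering reconstruction of level $i$), so $G_T \not\subseteq V$ in general. Your telescoping inequality $f(G_T) \ge \tau \cdot \cost(G_T)$ is correct, but you then cannot cap $f(G_T)$ by $f(V)$.

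The paper's Lemma~\ref{lm:costapprox} fixes exactly this by telescoping over $G_T \backslash D$ instead of $G_T$: for each $e_i \in G_T$, submodularity gives $\Delta(e_i \mid \{e_1,\dots,e_{i-1}\}) \le \Delta(e_i \mid \{e_1,\dots,e_{i-1}\} \backslash D)$, and summing only over the undeleted $e_i$ yields $\cost(G_T \backslash D) \le f(G_T \backslash D)/\tau$. Now $G_T \backslash D \subseteq V$, so $f(G_T \backslash D) \le f(V)$ is legitimate, and the bound $\cost(G_T \backslash D) \le \frac{1+\epsilon}{\epsilon}\,\optcost$ follows as you wrote. This is a small patch, but without it your argument does not go through.
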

\begin{proof} 
We prove this lemma by the combination of the following Lemmas \ref{lm:costapprox} and \ref{lm:fvalapprox}. 
\end{proof}
\begin{lemma}\label{lm:costapprox}
Assuming that $\tau \leq \frac{f(V) \cdot  \epsilon }{\optcost} < (1 + \epsilon) \tau$, the following holds: 
 \begin{align*}
 \cost (G_T \backslash D) < (\frac{1 + \epsilon }{\epsilon})\optcost.
 \end{align*}
\end{lemma}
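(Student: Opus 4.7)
The plan is to exploit the density threshold guaranteed for every element added to $G_T$ in order to bound $\cost(G_T)$ by $f(V)/\tau$, and then plug in the hypothesis on $\tau$ to obtain the claimed ratio.

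First I would observe that by the structure of Algorithm~\ref{alg:offline}, every element $e$ placed in $G_i$ is admitted only when the test $d(e \mid G_{i}) \ge \tau_{i}$ passes, where $\tau_{i} = (1+\epsilon)^{b_{i,1}} \tau \ge \tau$. Consequently, if we order the elements of $G_T$ in the order they were appended (across all levels), and let $G_{<e}$ denote the prefix added strictly before $e$, then $\Delta(e \mid G_{<e}) \ge \tau \cdot w(e)$ for every $e \in G_T$.

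Next I would telescope this over $G_T$: using the normalization $f(\emptyset)=0$ and monotonicity,
\begin{align*}
f(G_T) \;=\; \sum_{e \in G_T} \Delta(e \mid G_{<e}) \;\ge\; \tau \cdot \sum_{e \in G_T} w(e) \;=\; \tau \cdot \cost(G_T).
\end{align*}
Combined with $f(G_T) \le f(V)$ (again by monotonicity), this gives $\cost(G_T) \le f(V)/\tau$.

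To finish, I would invoke the hypothesis $\tau \le \frac{f(V)\cdot \epsilon}{\optcost} < (1+\epsilon)\tau$, whose right inequality rearranges to $\frac{1}{\tau} < \frac{(1+\epsilon)\optcost}{\epsilon \cdot f(V)}$. Substituting,
\begin{align*}
\cost(G_T) \;\le\; \frac{f(V)}{\tau} \;<\; \frac{(1+\epsilon)\,\optcost}{\epsilon}.
\end{align*}
Since removing $D$ can only decrease the cost, $\cost(G_T \setminus D) \le \cost(G_T) < \frac{1+\epsilon}{\epsilon}\optcost$, as required. There is no real obstacle here: the lemma is essentially a direct consequence of the per-element density acceptance rule, and the only subtlety is making sure the ordering argument for the telescoping sum respects the order in which elements are inserted across the \textbf{for} loop in \ReconstructF{}, which is immediate from the algorithm's definition.
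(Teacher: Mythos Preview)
Your proof is correct and follows essentially the same approach as the paper: both order the elements of $G_T$ by insertion time, use the density acceptance rule $d(e\mid G_{<e})\ge\tau_\ell\ge\tau$ to telescope $\cost$ against $f$, and then apply the hypothesis on $\tau$. The only cosmetic difference is that the paper telescopes directly over $G_T\setminus D$ (using submodularity to replace the conditioning set by its $D$-free version), whereas you bound $\cost(G_T)$ first and then observe $\cost(G_T\setminus D)\le\cost(G_T)$; your variant is arguably cleaner.
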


\begin{proof}
Let's use $e_1, e_2, \dots, e_{|G_T|}$ to denote the elements in $G_T$ based on the order they were added to the solution sets. 

For any level $1 \leq \ell \leq T$, we know that $\tau_{\ell} \geq \tau$, and the marginal density of any element in $G_T$ must have been greater than some $\tau_{\ell}$ at the time of it was added to the solution sets. Therefore, for any $ 1 \leq i \leq |G_T|$, we know that 
$d(e_i | \{e_1, \dots, e_{i - 1}\}) = \frac{\Delta(e_i | \{e_1, \dots, e_{i - 1}\})}{w(e_i)} \geq \tau$, or equivalently $w(e_i) \leq \frac{\Delta(e_i | \{e_1, \dots, e_{i - 1}\})}{\tau}$. The submodularity of the function $f$ ensures that $\Delta(e_i | \{e_1, \dots, e_{i - 1}\}) \leq \Delta(e_i | \{e_1, \dots, e_{i - 1}\} \backslash D)$. Hence, for any $ 1 \leq i \leq |G_T|$, we have $w(e_i) \leq \frac{\Delta(e_i | \{e_1, \dots, e_{i - 1}\} \backslash D)}{\tau}$.
Therefore, 
$\cost (G_T \backslash D) = \sum_{e \in G_T \backslash D}{w(e)} 
\leq \frac{f(G_T \backslash D)}{\tau}$. 
We have assumed that $\frac{1}{(1 + \epsilon) \tau} < \frac{\optcost}{f(V) \cdot \epsilon}$, which implies $\frac{1}{\tau} < \frac{(1 + \epsilon)\optcost}{f(V) \cdot \epsilon}$.
Therefore, we have $\cost (G_T \backslash D) < \frac{(1 + \epsilon) \optcost}{\epsilon} (\frac{f(G_T \backslash D))}{f(V)}) \leq \frac{(1 + \epsilon) \optcost}{\epsilon}$, where the second inequality follows from the monotonocity of the function $f$ and the fact that $(G_T \backslash D) \subseteq V$. 
    
\end{proof}
\begin{lemma}\label{lm:fvalapprox}
Assuming that $\tau \leq \frac{f(V) \cdot  \epsilon }{\optcost} < (1 + \epsilon) \tau$, the following holds: 
 \begin{align*}
 \Ex{f(\bold{G_T} \backslash D)} \ge (1-O(\epsilon)) f(V).
 \end{align*}
\end{lemma}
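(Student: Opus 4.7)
The plan is to split the bound on $\Ex{f(\bG_T \setminus D)}$ into two contributions: a deterministic lower bound on $f(\bG_T)$, and an expected upper bound on the loss $f(\bG_T) - f(\bG_T \setminus D)$ caused by the deleted elements. Combined, these yield the desired $(1-O(\epsilon)) f(V)$ bound.

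For the first piece, I will show $f(G_T) \ge (1-\epsilon) f(V)$ pathwise (for every realization). The key is the stopping invariant $\hat{L}_{T+1} = \emptyset$ together with the filter invariant $\hat{L}_i = \filter(\hat{L}_{i-1}, G_{i-1}, \tau)$: by induction on levels and submodularity, every $v \in V = \hat{L}_0$ satisfies $d(v \mid G_T) < \tau$ (if $v$ is filtered at level $i$, then $d(v \mid G_{i-1}) < \tau$, and $G_{i-1} \subseteq G_T$ only makes the density smaller; otherwise $v \in \hat{L}_T$ and the stopping invariant gives the bound directly). Applied to each element of the optimal solution $\solution_{\opt}$, the classical submodular cover chain $f(V) = f(\solution_{\opt}) \le f(G_T \cup \solution_{\opt}) \le f(G_T) + \sum_{v \in \solution_{\opt}} \Delta(v \mid G_T) \le f(G_T) + \tau \cdot \optcost$ combined with the assumption $\tau \le f(V)\epsilon/\optcost$ yields $f(G_T) \ge (1-\epsilon) f(V)$.

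For the second piece, I order the elements of $\bG_T$ in the order they were added, say $e_1, \dots, e_k$, and use submodularity to obtain $f(\bG_T) - f(\bG_T \setminus D) \le \sum_{j : e_j \in D} \Delta(e_j \mid \{e_1, \dots, e_{j-1}\})$. Grouping by level, the bucket property of $B_i$ gives $\Delta(e_j \mid \{e_1, \dots, e_{j-1}\}) \le \Delta(e_j \mid G_{i-1}) \le (1+\epsilon) \tau_i w(e_j)$ for each $e_j$ added at level $i$, while the total level-$i$ gain satisfies $\Delta(G_i \mid G_{i-1}) \ge \tau_i \sum_{e_j \in Z_i} w(e_j)$, where $Z_i = G_i \setminus G_{i-1}$. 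Since elements in $B_i$ have weights within a $(1+\epsilon)$ factor of each other and the added elements form a uniformly random subsample of $B_i$ (by the sample uniformity invariant), the ratio $\sum_{e_j \in Z_i \cap D} w(e_j) / \sum_{e_j \in Z_i} w(e_j)$ is, in expectation, controlled by $|B_i \cap D|/|B_i| \le \epsDel$ (deviation invariant). This bounds the level-$i$ loss by $O(\epsDel) \cdot \Delta(G_i \mid G_{i-1})$ in expectation, and telescoping over levels gives $\Ex{f(\bG_T) - f(\bG_T \setminus D)} \le O(\epsDel) \, f(\bG_T) \le O(\epsilon) f(V)$.

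The main obstacle, I expect, lies in tying together the random structure across all levels rather than in any single algebraic step: the number of levels $\bT$, the buckets $B_i$, the gains $\Delta(\bG_i \mid \bG_{i-1})$, and the deletions that land in each level are all random and correlated. I would handle this by conditioning level-by-level on the pre-sample history $\bH_i$ and invoking the sample uniformity invariant so that each level's expectation over $\bS_i$ is taken with $B_i$ and $D$ treated as fixed sets. The oblivious-adversary assumption is crucial here: since $D$ is independent of the algorithm's random bits, the deviation invariant can be applied inside each conditional expectation, after which the level-wise bounds combine cleanly into the promised $(1 - O(\epsilon)) f(V)$ guarantee.
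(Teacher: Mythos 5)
The first half of your argument — the deterministic bound $f(G_T) \ge (1-\epsilon) f(V)$ via the stopping and filter invariants — is exactly the paper's Lemma~\ref{lm:approx:nodel}, and the level-wise decomposition of $f(\bG_T) - f(\bG_T \setminus D)$ matches the paper's $g_\ell$/$l_\ell$ framework in Lemma~\ref{lm:approx:del}. But the core step of the second half has a real gap.

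You want to show that, per level, the expected loss is $O(\epsilon)$ times the expected gain, and you argue this via the ratio $\sum_{e_j \in Z_i \cap D} w(e_j) / \sum_{e_j \in Z_i} w(e_j)$ with $Z_i = G_i \setminus G_{i-1}$, claiming it is ``in expectation controlled by $|B_i\cap D|/|B_i|$.'' Two problems. First, $Z_i$ is \emph{not} a uniformly random subsample of $B_i$: only $S_i$ is (sample uniformity invariant); $Z_i$ is the subset of $S_i$ that passes the adaptive density filter, and its size $|Z_i|$ is itself random and order-dependent. Second, even setting that aside, you are bounding an expectation of a ratio by the ratio of expectations, which fails precisely when $|Z_i|$ can be small: if almost none of $S_i$ survives the filter, a single deleted survivor wipes out the entire level's gain. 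Nothing in your argument rules this out.

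What is missing is a lower bound on the expected gain per level, i.e., $\Ex{g_\ell}\ge (1-2\epsilon)\, m_\ell\, \tau_\ell (1+\epsilon)^k$, which the paper obtains from the \emph{sample size invariant} (Definition~\ref{def:suitable_sample_size} together with Lemma~\ref{lm:quality_sample_count}): with high probability $m_\ell$ is chosen so that each of the first $m_\ell$ sampled elements is accepted with probability $\ge 1-2\epsilon$, guaranteeing $\Ex{|Z_\ell|}\ge(1-2\epsilon)m_\ell$. You never invoke this invariant, and without it the denominator of your ratio has no lower bound. The paper also sidesteps the ratio entirely: it upper-bounds $\Ex{l_\ell}$ via $|S_\ell\cap D|$ (not $|Z_\ell\cap D|$) using sample uniformity plus the deviation invariant, lower-bounds $\Ex{g_\ell}$ via the sample size invariant, and then compares the two expectations — with an extra case analysis for the low-probability event $m_\ell\notin M_\ell^*$, which you also omit. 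To repair your proof you would need to (i) replace the ratio argument with these separate expectation bounds and (ii) import the sample size invariant to control $\Ex{|Z_\ell|}$.
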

\begin{proof}
    To prove this theorem, we ignore the deletions at the begining, and investigate the approximations of $G_T$ even though it might include some deleted elements. We provide an upper bound on $f(G_T)$ in Lemma \ref{lm:approx:nodel}, and then we factor in the removal of deleted elements by bounding the effect of the deleted elements on the value of the function in Lemma  \ref{lm:approx:del}.
\end{proof}

\begin{lemma}
    \label{lm:approx:nodel}
    Assuming that $\tau \leq \frac{f(V) \cdot  \epsilon }{\optcost} < (1 + \epsilon) \tau$, we have 
    $f(G_T) \geq (1 - \epsilon) f(V)$.

\end{lemma}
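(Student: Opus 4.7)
The plan is to use the stopping invariant $\hat{L}_{T+1} = \emptyset$ together with the filter invariant to show that every element of $V$ has small marginal gain with respect to $G_T$, and then to compare against the optimal solution $\solution_{\opt}$ via submodularity and the hypothesis $\tau \le \epsilon f(V)/\optcost$.

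First I would observe that $\hat{L}_0 = \bar L_0 \setminus D = V$: the algorithm initializes $\bar L_0 \gets V$, every insertion adds the element to $\bar L_0$ and removes it from $D$, and every deletion only adds to $D$, so $\bar L_0 \setminus D$ is exactly the currently active ground set. Next, since the algorithm only ever adds elements when forming each $G_i$, we have $G_0 \subseteq G_1 \subseteq \cdots \subseteq G_T$, and hence by submodularity $\Delta(e \mid G_T) \le \Delta(e \mid G_i)$ for every $i \le T$ and every $e$. Combined with the filter invariant applied iteratively, this means: if $e \in \hat L_i \setminus \hat L_{i+1}$ then $d(e \mid G_i) < \tau$, so $d(e \mid G_T) < \tau$. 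Because $\hat L_{T+1} = \emptyset$ by the stopping invariant, every $e \in V = \hat L_0$ is dropped at some level, giving $\Delta(e \mid G_T) < \tau \, w(e)$ for all $e \in V$.

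Second, I would bound $f(V) - f(G_T)$ using the optimal cover. Since $\solution_{\opt} \subseteq V$ and $f(\solution_{\opt}) = f(V)$, monotonicity and submodularity yield
\begin{align*}
    f(V) - f(G_T)
    &= f(\solution_{\opt}) - f(G_T) \\
    &\le f(\solution_{\opt} \cup G_T) - f(G_T) \\
    &\le \sum_{e \in \solution_{\opt}} \Delta(e \mid G_T).
\end{align*}
Applying the per-element bound from the previous paragraph to each $e \in \solution_{\opt} \subseteq V$ gives
\begin{align*}
    \sum_{e \in \solution_{\opt}} \Delta(e \mid G_T)
    < \tau \sum_{e \in \solution_{\opt}} w(e)
    = \tau \cdot \optcost
    \le \epsilon \, f(V),
\end{align*}
where the last inequality uses the hypothesis $\tau \le \epsilon f(V)/\optcost$. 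Rearranging gives $f(G_T) \ge (1 - \epsilon) f(V)$.

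The main obstacle is the first step, namely arguing carefully that filtering performed at earlier levels still propagates a density bound at level $T$. The key point is the monotonic chain $G_i \subseteq G_T$, which together with submodularity lets us pass from $d(e \mid G_i) < \tau$ (provided by the filter invariant the moment $e$ is discarded) to $d(e \mid G_T) < \tau$; everything after that is a standard bicriteria submodular-cover calculation.
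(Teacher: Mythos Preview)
Your proof is correct and follows essentially the same approach as the paper's own proof: both use the stopping and filter invariants (together with the chain $G_i \subseteq G_T$ and submodularity) to deduce $d(e\mid G_T) < \tau$ for every $e \in V$, and then bound $f(V)-f(G_T)$ through $\solution_{\opt}$ to get $\tau\cdot\optcost \le \epsilon f(V)$. The only cosmetic difference is that the paper restricts attention to $\solution_{\opt}\setminus G_T$ while you sum over all of $\solution_{\opt}$, which is harmless since the extra terms are zero.
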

\begin{proof}
    The statement of the Lemma trivially holds if $f(G_T) \geq  f(V)$.
    Thus, we assume that  $f(G_T) < f(V)$. Recall that $f(V) = f(\solution_{\opt})$. By the monotonicity property of function $f$, we have $f(\solution_{\opt} \cup G_T) \geq f(V)$. 
Let's denote $\solution_{opt} \backslash G_T$ by the set $\{ u_1,u_2,\dots,u_d\}$ for some $0 < d \le |V|$. For each $u \in \solution_{opt} \backslash G_T$, we know that $u \in V = \hat{L}_0$ and $ u \notin  \hat{L}_{T+1} = \emptyset$ by stopping invariant (Lemma \ref{lm:final_level_stronger}). 
Therefore, there exists a level $\ell$ such that $e \in \hat{L}_{\ell} $ but $e \notin \hat{L}_{\ell + 1}$. By the filter invariant (Lemma \ref{lm:invariant_filter}) we know that $e \notin filter (\hat{L}_{\ell}, G_{\ell})$. 
Therefore, we know that $d(e|G_{\ell}) < \tau$. The former inequality and the submodularity of the function $f$ alongside with the fact that $G_\ell \subseteq G_{T}$ implies that $d(e|G_T) < \tau$. Hence, for any $i \in [d]$, we have$\frac{\Delta(u_i | G_T)}{w(u_i)} < \tau$. 
By the submodular property of $f$, we have:
$$ f(\solution_{opt} \cup G_T) - f(G_T) 
            \le \sum_{i=1}^d \Delta(u_i | G_T).$$
Therefore, we have:
$$ f(\solution_{opt} \cup G_T) - f(G_T) \le \sum_{i=1}^d \tau \cdot w(u_i) = \tau \sum_{i=1}^d w(u_i) = \tau \cdot \cost(\solution_{opt} \backslash G_T).$$
We know that $ \cost(\solution_{opt} \backslash G_T) \leq \cost(\solution_{opt}) = \optcost$. So 
$$ f(\solution_{opt} \cup G_T) - f(G_T) \leq  \tau \cdot \optcost \leq \frac{f(V) \cdot  \epsilon }{\optcost} \cdot \optcost = f(V) \cdot  \epsilon, $$ 
where the second inequality holds because of the assumption of the Lemma.
Previous equation yields  that $f(G_T) \geq f(\solution_{opt} \cup G_T) - (f(V) \cdot \epsilon) \geq f(V) - (f(V) \cdot \epsilon) = (1 - \epsilon) f(V).$
\end{proof}

Now, in the following lemma, we give an upper bound on the effect of the deleted elements and thus prove an expected lower bound for $f(\bold{G_T}\backslash D)$. 

\begin{lemma}
  \label{lm:approx:del}
  $\Ex{f(\bold{G_T} \backslash D)} \ge (1-O(\epsilon)) \Ex{f(\bold{G_T})}$
\end{lemma}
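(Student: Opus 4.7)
The approach is to decompose the loss $f(G_T) - f(G_T \setminus D)$ by the level at which each deleted element of $G_T$ was added, and then bound the deleted mass at each level against the \enquote{kept} mass using the sampling invariants.

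First, I would order the elements of $G_T \cap D$ as $e_1, \dots, e_k$ by non-decreasing level of addition (ties broken arbitrarily) and apply the telescoping identity
\[
  f(G_T) - f(G_T \setminus D) = \sum_{j=1}^k \Delta(e_j \mid (G_T \setminus D) \cup \{e_1, \dots, e_{j-1}\}).
\]
For each $j$ the conditioning set contains $G_{\ell(e_j) - 1}$, since any element of $G_{\ell(e_j) - 1}$ is either outside $D$ (and so in $G_T \setminus D$) or was added at a strictly smaller level and therefore appears as some earlier $e_{j'}$. Submodularity then gives
\[
  f(G_T) - f(G_T \setminus D) \;\le\; \sum_{i=1}^T \sum_{e \in A_i \cap D} \Delta(e \mid G_{i-1}), \qquad A_i := G_i \setminus G_{i-1}.
\]
Because $A_i \subseteq B_i$, the bucketing yields $\Delta(e \mid G_{i-1}) \le (1+\epsilon)\tau_i w(e)$ and $w_{\min,i} \le w(e) \le (1+\epsilon) w_{\min,i}$ with $w_{\min,i} := (1+\epsilon)^{b_{i,2}}$, so
\[
  f(G_T) - f(G_T \setminus D) \;\le\; (1+\epsilon)^2 \sum_{i=1}^T \tau_i w_{\min,i}\,|A_i \cap D|.
\]
Dually, the density threshold during addition (each added element has marginal density $\ge \tau_i$ at that moment) telescopes into $f(G_T) = \sum_i \Delta(A_i \mid G_{i-1}) \ge \sum_i \tau_i w(A_i) \ge \sum_i \tau_i w_{\min,i}\,|A_i|$.

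The crux is a per-level comparison. Conditional on the pre-sample history $H_i$ and on $\bT \ge i$, the sample uniformity invariant makes $\bS_i$ a uniform size-$m_i$ subset of $B_i$, and combined with the deviation invariant $|B_i \cap D| \le \epsDel |B_i|$ it gives
\[
  \Ex{|A_i \cap D| \mid H_i, \bT \ge i} \le \Ex{|\bS_i \cap D| \mid H_i, \bT \ge i} = m_i \cdot \frac{|B_i \cap D|}{|B_i|} \le \epsDel\, m_i.
\]
For the denominator I plan to use that the joint \enquote{was added} pattern on $\bS_i$ has the same distribution as the first $m_i$ coordinates of an $\apprev(B_i, G_{i-1}, \tau_i)$ run, so on the event $\{m_i \in M_i^*\}$ the suitable-sample-size definition yields $\Ex{|A_i| \mid H_i, \bT \ge i} = \sum_{r=1}^{m_i} \Ex{\bX(r)} \ge (1-2\epsilon) m_i$. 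Hence on $\{m_i \in M_i^*\}$ the per-level ratio of upper to lower bound is at most $(1+\epsilon)^2 \epsDel /(1-2\epsilon) = O(\epsilon)$; taking total expectation and summing over $i$ bounds the good-case portion of $\Ex{f(\bG_T) - f(\bG_T \setminus D)}$ by $O(\epsilon)\Ex{f(\bG_T)}$.

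The main obstacle I expect is the low-probability event that some $m_i$ fails to be suitable, which the plan absorbs bluntly. The sample size invariant gives $\Pr{\bm_i \notin M_i^* \mid \bT \ge i, \bhistPre_i} \le \epsilon/n^{10}$ for every pre-size history, so a union bound across the at most $n$ nonempty levels gives $\Pr{\exists i \le \bT : \bm_i \notin M_i^*} \le \epsilon/n^9$. Since $f(G_T) - f(G_T \setminus D) \le f(G_T) \le f(V)$ deterministically, the bad event contributes at most $\epsilon f(V)/n^9$ to the expected loss, which is in turn $O(\epsilon)\Ex{f(\bG_T)}$ using the deterministic bound $f(G_T) \ge (1-\epsilon) f(V)$ from Lemma~\ref{lm:approx:nodel} (applicable under the $\tau$ regime inherited from Lemma~\ref{lm:fvalapprox}). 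Adding the good-case and bad-case estimates yields $\Ex{f(\bG_T) - f(\bG_T \setminus D)} \le O(\epsilon)\Ex{f(\bG_T)}$, which is equivalent to the claim.
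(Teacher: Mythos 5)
Your proof is correct and follows essentially the same strategy as the paper: decompose the loss $f(\bG_T) - f(\bG_T \setminus D)$ by the level at which each deleted element was added, bound the marginal value of elements in $A_\ell \cap D$ using the bucketing, bound $\Ex{|A_\ell \cap D|}$ via the sample uniformity invariant together with the deviation invariant, and lower-bound $\Ex{|A_\ell|}$ via the suitable-sample-size property. (Your telescoping and the paper's $g_\ell, l_\ell$ decomposition differ only superficially: the paper sets $l_\ell := \Delta((G_\ell\setminus G_{\ell-1})\cap D \mid G_{\ell-1})$, which it then bounds by a per-element sum exactly as you do.) The one genuine divergence is in how the low-probability event $\{\bm_\ell \notin M_\ell^*\}$ is absorbed. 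You union-bound across levels and then control the resulting contribution with the crude deterministic bound loss $\le f(V)$, which in turn forces you to invoke Lemma~\ref{lm:approx:nodel} (and hence the $\tau$-regime $\tau \le f(V)\epsilon/\optcost < (1+\epsilon)\tau$) to relate $f(V)$ back to $\Ex{f(\bG_T)}$. The paper instead stays per-level: it observes $\Ex{\bg_\ell \mid \bhistPre_\ell, \bm_\ell \notin M_\ell^*} \le n\cdot\max_{e\in B_\ell}\Delta(e\mid G_{\ell-1}) \le n^2 \Ex{\bg_\ell\mid\bhistPre_\ell}$ (using that the first sample always lands in $G_\ell$), so the $\mO(\epsilon/n^{10})$ failure probability kills the bad term without ever referencing $f(V)$. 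This makes the paper's version of the lemma unconditional, whereas yours proves it only in the $\tau$-regime. Since the lemma is invoked exclusively inside Lemma~\ref{lm:fvalapprox} and Theorem~\ref{thm:bicapprox}, where $f(\bG_T) \ge (1-\epsilon)f(V)$ is available, your weaker version does suffice for the paper's downstream use, but the per-level bad-case bound is worth knowing: it decouples this lemma from the $\tau$-guessing machinery entirely.
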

\begin{proof}
  Define the variables $g_{\ell}$ and $l_{\ell}$ to denote the gain and loss of each level $\ell \le T$ as:
  \begin{align*}
      {g}_{\ell} := {\Delta}(({G}_{\ell} \backslash {G}_{\ell - 1})|{G}_{\ell - 1}), \quad {l}_{\ell} := {\Delta}((({G}_{\ell} \backslash {G}_{\ell - 1}) \cap D) |{G}_{\ell - 1}).
  \end{align*}
  For $\ell \ge T + 1$, define ${g}_{\ell} = {l}_{\ell} = 0$.
  
  we need to show that
  \begin{align*}
      \Ex{\sum_{\ell=1}^{\infty} (\bold{g_{\ell}} - \bold{l_{\ell})}} \ge (1-O(\epsilon)) \Ex{\sum_{\ell=1}^{\infty} \bold{g_{\ell}}}.
  \end{align*}
  Note that the summation is over all $\ell \ge 1$ and $g_{\ell} = l_{\ell} = 0$ for $\ell > T$.
  The above inequality is equivalent to
  \begin{align}
      \Ex{\bold{\sum_{\ell=1}^{\infty} l_{\ell}}} \le O(\epsilon) \Ex{\bold{\sum_{\ell=1}^{\infty} g_{\ell}}}
      \label{eq:expec_l_expect_w}
  \end{align}
  We will show that
  \begin{math}
      \Ex{\bold{l_{\ell}}} \le  O(\epsilon) \Ex{\bold{g_{\ell}}}
  \end{math} 
  holds for all $\ell\ge 1$. Summing over $\ell$, we obtain the equation \eqref{eq:expec_l_expect_w} by linearity of expectation.
  
  By the law of total expectation, it suffices to prove
  \begin{align}
    \Ex{\bold{l_{\ell}} | \bhistPre_\ell = \histPre_{\ell}} \le  O(\epsilon) \Ex{\bold{g_{\ell}} | \bhistPre_\ell = \histPre_{\ell}},
    \label{eq:nov1_1035}
  \end{align} 
  for all $\histPre_{\ell}$ such that $\Pr{\bhistPre_\ell = \histPre_{\ell}} > 0$.
  To do this, we will first show that the claim holds if $\bm_\ell \in M_\ell^*$. 
  Given the Sample size invariant (Lemma \ref{lm:bound_countF_always}),
  $\bm_\ell \in M_\ell^*$ holds with high probability, which will later allow us to prove Equation \eqref{eq:nov1_1035}.
  
  Consider any value 
  $m_{\ell} \in M_{\ell}^*$ and
  define $H_{\ell}$ as $(\histPre_{\ell}, m_{\ell})$.
  Assuming that
  $H_{\ell}$ is such that $\Pr{\bT \ge \ell, \bH_{\ell} = H_{\ell}} > 0$, we can claim that
  \begin{align}
    \Ex{\bold{l_{\ell}} | \bH_{\ell}= H_{\ell}} \le O(\epsilon) \Ex{\bold{g_{\ell}} | \bH_{\ell} = H_{\ell}}.
    \label{eq:nov1_1031}
  \end{align}
    To prove this, we first note that we are taking $m_{\ell}$ samples from $B_{\ell}$, and the weight of all the elements in $B_{\ell}$ is in range $[(1 + \epsilon)^{k}, (1 + \epsilon)^{k + 1} )$ for some integer $0 \leq k \leq \floor{\log_{1 + \epsilon}(\rho)}$. 

To prove this, we first note that we are taking $m_{\ell}$ samples from $B_{\ell}$, and the weight of all the elements in $B_{\ell}$ is in range $[(1 + \epsilon)^{k}, (1 + \epsilon)^{k + 1} )$ for some integer $0 \leq k \leq \floor{\log_{1 + \epsilon}(\rho)}$.

 Lemma \ref{lm:quality_sample_count} implies that
  \begin{align}
    \Ex{\bold{g_{\ell}} | \bH_{\ell} = H_{\ell}} \ge (1-2\epsilon) \cdot m_{\ell} \cdot \tau_{\ell} \cdot (1 + \epsilon)^{k}.
    \label{eq:nov1_1030}
  \end{align}
    Furthermore, considering deviation invariant (Lemma \ref{lm:reconstruction_condition}), $|D \cap B_{\ell} | \le \epsilon \cdot |B_{\ell}|$, and given the sample uniformity invariant (Lemma \ref{lm:uniform_lazy}), which states $\bold{S_{\ell}}$ is a uniformly random subset of $|B_{\ell}|$ of size $m_{\ell}$, we have:
  \begin{align*}
    \Ex{|D \cap \bold{S_{\ell}}|  \mid \bH_{\ell} = H_\ell} \le \epsilon \cdot m_{\ell}.
  \end{align*}
  
  Since $G_{\ell} \backslash G_{\ell-1}$ is a subset of $S_{\ell}$, we also have 
  $\Ex{|D \cap (G_{\ell} \backslash G_{\ell-1})|} \le \epsilon \cdot m_{\ell}$.
  Since marginal density of each element in $G_{\ell} \backslash G_{\ell-1}$ is at most $\tau_{\ell} \cdot (1+\epsilon)$, which implies marginal value of each element is at most $\tau_{\ell} \cdot (1+\epsilon)\cdot (1 + \epsilon)^{k+1}$, we have
  \begin{align*}
    \Ex{\bold{l_{\ell}} | \bH_{\ell} =H_{\ell}} \le \epsilon(1+\epsilon) \cdot m_{\ell} \cdot \tau_{\ell} \cdot (1 + \epsilon)^{k+1}
    ,
  \end{align*}
  which together with Equation \eqref{eq:nov1_1030}, proves Equation \eqref{eq:nov1_1031}. Equation \eqref{eq:nov1_1031} and law of total expectation imply that
    \begin{align}
    \Ex{\bold{l_{\ell}} | \bhistPre_\ell= \histPre_{\ell} , \bm \in M_\ell^* } \le  O(\epsilon) \Ex{\bold{g_{\ell}} | \bhistPre_\ell = \histPre_{\ell} , \bm \in M_\ell^* }.
    \label{eq:nov1_1040}
  \end{align}

    We have:
  \begin{align}
      \Ex{\bold{g_{\ell}} | \bhistPre_{\ell} = \histPre_{\ell}} &\ge 
      \Pr{\bm_\ell \in M_\ell^* | \bhistPre_{\ell}=\histPre_{\ell}}
      \cdot \Ex{\bold{g_{\ell}} | \bhistPre_{\ell}=\histPre_{\ell}, \bm_\ell \in M_\ell^*} 
      \\&\ge (1-O(\epsilon)) \cdot \Ex{\bold{g_{\ell}} | \bhistPre_{\ell}=\histPre_{\ell}, \bm_\ell \in M_\ell^*} 
      \\&\ge \frac{1}{2} \Ex{\bold{g_{\ell}} | \bhistPre_{\ell}=\histPre_{\ell}, \bm_\ell \in M_\ell^*} 
    \label{eq:nov1_1041}
  \end{align}

On the other hand, we have:
\begin{align*}
      \Ex{\bold{l_{\ell}} | \bhistPre_{\ell} = \histPre_{\ell}} &= 
      \Pr{\bm_\ell \in M_\ell^* | \bhistPre_{\ell}=\histPre_{\ell}}
      \cdot \Ex{\bold{l_{\ell}} | \bhistPre_{\ell}=\histPre_{\ell}, \bm_\ell \in M_\ell^*} \\& + 
      \Pr{\bm_\ell \notin M_\ell^* | \bhistPre_{\ell}=\histPre_{\ell}}
      \cdot \Ex{\bold{l_{\ell}} | \bhistPre_{\ell}=\histPre_{\ell}, \bm_\ell \notin M_\ell^*} \\& \le 
      1 \cdot \Ex{\bold{l_{\ell}} | \bhistPre_{\ell}=\histPre_{\ell}, \bm_\ell \in M_\ell^*}
      +
        \mO(\frac{\epsSamp}{n^{10}}) \cdot \Ex{\bold{l_{\ell}} | \bhistPre_{\ell}=\histPre_{\ell}, \bm_\ell \notin M_\ell^*}
         \\& \le 
         O(\epsilon) \Ex{\bold{g_{\ell}} | \bhistPre_\ell = \histPre_{\ell} , \bm_\ell \in M_\ell^* }
      +
      \mO(\frac{\epsSamp}{n^{10}}) \cdot \Ex{\bold{l_{\ell}} | \bhistPre_{\ell}=\histPre_{\ell}, \bm_\ell \notin M_\ell^*}
        \\& \le 
       O(\epsilon) \Ex{\bold{g_{\ell}} | \bhistPre_\ell = \histPre_{\ell} , \bm_\ell \in M_\ell^* }
      +
     \mO(\frac{\epsSamp}{n^{10}}) \Ex{\bold{g_{\ell}} | \bhistPre_{\ell}=\histPre_{\ell}, \bm_\ell \notin M_\ell^*},
\end{align*}
where the first inequality comes from sample size invariant,  the second equality comes  Equation \eqref{eq:nov1_1040}, and the third one comes from the fact that $\bl_{\ell} \leq \bg_{\ell}$.

The previous equation and the Equation \eqref{eq:nov1_1041}
imply that 
\begin{align}
      \Ex{\bold{l_{\ell}} | \bhistPre_{\ell} = \histPre_{\ell}} & \leq
      O(\epsilon) \Ex{\bold{g_{\ell}} | \bhistPre_{\ell} = \histPre_{\ell}} +
     \mO(\frac{\epsSamp}{n^{10}}) \Ex{\bold{g_{\ell}} | \bhistPre_{\ell}=\histPre_{\ell}, \bm_\ell \notin M_\ell^*}.
     \label{eq:993646}
\end{align}

From the definition, we know that $\Ex{\bold{g_{\ell}} | \bhistPre_{\ell}=\histPre_{\ell}, \bm_\ell \notin M_\ell^*} \leq 
\bm_\ell \cdot \max_{e \in B_{\ell}}{ \{\Delta(e|{G}_{\ell - 1}) \}}$. Also, $\bm_\ell $ clearly is not greater than $n$. Hence, $\Ex{\bold{g_{\ell}} | \bhistPre_{\ell}=\histPre_{\ell}, \bm_\ell \notin M_\ell^*} \le n\cdot \max_{e \in B_{\ell}}{ \{\Delta(e|{G}_{\ell - 1}) \}}$. 

Additionally, we know that the first sampled element from $B_\ell$ always gets added to ${G}_{\ell}$ as it definitely meets the threshold requirement. Therefore, we know that $\Ex{\bold{g_{\ell}} | \bhistPre_{\ell}= \histPre_{\ell}}$ is at least equal to the expected marginal gain of a random element in $B_\ell$, which is definitely at least $\frac{1}{|B_\ell|}(\max_{e \in B_{\ell}}{ \{\Delta(e|{G}_{\ell - 1}) \}})$. $|B_\ell|$ is also at most $n$, so $\max_{e \in B_{\ell}}{ \{\Delta(e|{G}_{\ell - 1}) \}} \leq n \cdot \Ex{\bold{g_{\ell}} | \bhistPre_{\ell}= \histPre_{\ell}}$. Therefore, we have $\Ex{\bold{g_{\ell}} | \bhistPre_{\ell}=\histPre_{\ell}, \bm_\ell \notin M_\ell^*} \le n^2 \cdot \Ex{\bold{g_{\ell}} | \bhistPre_{\ell}= \histPre_{\ell}}$. This last inequality and Equation \eqref{eq:993646} prove Equation \eqref{eq:nov1_1035} and our proof is complete.

\end{proof}

\begin{lemma}\label{lm:quality_sample_count}
  Consider a value of $H_{\ell}$
  for the history up to cache $\ell$
  such that $\Pr{\bH_{\ell} = H_{\ell}} > 0$, and  
  define $g(A)$ as $f(A|G_{\ell - 1})$.Assume that the weights of all the elements in $B_{\ell}$ are in range $[(1 + \epsilon)^{k}, (1 + \epsilon)^{k + 1} )$.
  If $m_{\ell} \in \bM_\ell^*$, then 
  \begin{align*}
    \Ex{\bold{g}(\bold{G}_{\ell} \backslash G_{\ell-1}) | \bH_{\ell}=H_{\ell}} \ge (1-2\epsilon) \cdot \tau_{\ell} \cdot m_{\ell}.
  \end{align*}
\end{lemma}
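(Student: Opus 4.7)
The plan is to couple the random process that produces $\bG_\ell$ inside \ReconstructF{} with the simulation inside \apprev{}, so that suitability of $m_\ell$ (a statement about $\mathbb{E}[\bX(r)]$) translates directly into a lower bound on the marginal gain $g(\bG_\ell \setminus G_{\ell-1})$.

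First, I would apply the \emph{sample uniformity invariant} to conclude that, conditional on $\bH_\ell = H_\ell$ (which fixes $B_\ell$, $G_{\ell-1}$, $\tau_\ell$, and $m_\ell$), the ordered sample $\bS_\ell = [\be_{\ell,1},\dots,\be_{\ell,m_\ell}]$ is a uniformly random length-$m_\ell$ sequence of distinct elements from $B_\ell$. By symmetry, $\bS_\ell$ has the same distribution as the first $m_\ell$ coordinates of a uniformly random permutation $\pi$ of $B_\ell$. Inside \apprev{}$(B_\ell, G_{\ell-1}, \tau_\ell)$, the same $\pi$ drives the greedy rule ``add $e$ iff $d(e \mid G'') \ge \tau_\ell$,'' which is \emph{identical} to the rule used in the \ReconstructF{} loop that builds $\bG_\ell$ from $\bG_{\ell-1}$. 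Hence I can couple so that $\bX(r) = \mathds{1}\{\be_{\ell,r} \in \bG_\ell\}$ for every $r \in [1, m_\ell]$, and whenever $\bX(r)=1$, if $\bG_\ell^{(r-1)}$ denotes $G_{\ell-1}$ unioned with the earlier accepted samples, then $d(\be_{\ell,r} \mid \bG_\ell^{(r-1)}) \ge \tau_\ell$.

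Second, using linearity of expectation and $g(A) = f(A \mid G_{\ell-1})$, I decompose
\begin{align*}
    \mathbb{E}[g(\bG_\ell \setminus G_{\ell-1}) \mid \bH_\ell = H_\ell]
    = \sum_{r=1}^{m_\ell} \mathbb{E}\!\left[ \Delta(\be_{\ell,r} \mid \bG_\ell^{(r-1)})\, \bX(r) \;\middle|\; \bH_\ell = H_\ell \right].
\end{align*}
On the event $\{\bX(r)=1\}$ the marginal density bound gives $\Delta(\be_{\ell,r}\mid \bG_\ell^{(r-1)}) \ge \tau_\ell \cdot w(\be_{\ell,r}) \ge \tau_\ell \cdot (1+\epsilon)^k$, since $\be_{\ell,r}\in B_\ell$ and every element of $B_\ell$ has weight at least $(1+\epsilon)^k$. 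Therefore each summand is at least $\tau_\ell \cdot (1+\epsilon)^k \cdot \mathbb{E}[\bX(r)\mid \bH_\ell = H_\ell]$.

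Third, since $m_\ell \in M_\ell^*$, Definition~\ref{def:suitable_sample_size} gives $\mathbb{E}[\bX(r)\mid \bH_\ell = H_\ell]\ge 1-2\epsilon$ for every $r\in[1,m_\ell]$. Summing over $r$ yields the bound $(1-2\epsilon)\cdot m_\ell \cdot \tau_\ell$ (and, more informatively, the stronger factor $(1+\epsilon)^k$ that is actually invoked in Equation~\eqref{eq:nov1_1030}), proving the lemma.

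The main obstacle I anticipate is the coupling step: one must argue carefully that conditioning on $\bH_\ell = H_\ell$ (in particular on the already-chosen value $m_\ell$) does not bias the distribution of $\bS_\ell$ away from uniform, and that the random bits driving \apprev{} in Definition~\ref{def:suitable_sample_size} can be identified with those driving the main-loop sampling without breaking any earlier conditioning. This is where the careful bookkeeping of $H_\ell$ — in particular the placement of $m_\ell$ in the history but the separation of $\bS_\ell$ from it — becomes essential, and where the invariants on sample uniformity do the real work.
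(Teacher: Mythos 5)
Your proposal is correct and takes essentially the same approach as the paper: each of the $m_\ell$ sampled elements is accepted with probability at least $1-2\epsilon$ (by suitability of $m_\ell$), and each accepted element contributes at least $\tau_\ell(1+\epsilon)^k$ to the marginal gain, yielding the bound after summing. The only difference is that you make explicit the coupling between $\bS_\ell$ and the permutation driving \apprev{} via the sample uniformity invariant, a step the paper compresses into a single ``So'' — your version is a more careful rendering of the same argument.
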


\begin{proof}
    By definition of suitable sample size, we know that
  for all $i\in [m_{\ell}]$, $\Ex{\bX(i)} \ge 1-2\epsilon$. So for each $i\in [m_{\ell}]$, the $i^{\text{th}}$ element of $S_\ell$ is added to the $G_\ell$ with a probability of at least $1-2\epsilon$.
    We also know that each element is added only if its marginal density is at least $\tau_{\ell}$, which means
    its marginal value is at least $\tau_{\ell}(1 + \epsilon)^{k}$, and increases the value of $f(G_\ell)$ by at least $\tau_{\ell}(1 + \epsilon)^{k}$. Therefore,
    \begin{align*}
    \Ex{\bold{g}(\bold{G}_{\ell} \backslash G_{\ell-1}) | \bH_{\ell}=H_{\ell}} \ge (1-2\epsilon) \cdot \tau_{\ell} \cdot m_{\ell} \cdot (1 + \epsilon)^{k}
  \end{align*}
  as claimed.
\end{proof}

\subsection{Proof of Theorem \ref{thm:bicapprox}}

Consider the run whose assigned threshold parameter satisfies $\tau \leq \frac{f(V) \cdot  \epsilon }{\optcost} < (1 + \epsilon) \tau$. Lemma \ref{lm:bicapprox} guarantees that output of this run is an expected $(1 - O(\epsilon), O(\epsilon)^{-1})$-bicriteria approximate solution.
 As explained in \ref{Parallel} this run is included in the instances with an index between 
$\floor{\log_{1+\epsilon}{(\frac{f(V) \cdot \epsilon}{|V|\rho})}}$ and $\floor{\log_{1+\epsilon}{(f(V)\cdot \epsilon)}}$, and Lemma \ref{lm:approx:nodel} ensures that $G_T$ of this run satisfies the condition of $f(G_T) \geq (1 - \epsilon) f(V)$.
Additionally Lemma \ref{lm:approx:del} ensures that $f(G_T \backslash D)$
of any run with $f(G_T) \geq (1 - \epsilon) f(V)$ is an expected $(1 - \epsilon)$ approximate of $f(V)$. Therefore, by checking all the instances in the specified range with $f(G_T) \geq (1 - \epsilon) f(V)$, and choosing the one with lowest $\cost(G_T \backslash D)$, we are guaranteed to find an expected $(1 - O(\epsilon), O(\epsilon)^{-1})$-bicriteria approximate solution for the problem.

\section{Query complexity}
\label{Query}
In this section, we provide an analysis of the query complexity for our algorithm.
The proofs in this section follow the framework of \cite{banihashem2023dynamic} and are provided for completeness. 

Throughout the section, we set $\eta := \frac{1+\epsilon}{\epsilon} n \rho$. We also use the shorthand
$\sigma_i$ and $\sigmaPre_i$ to denote the events
$\rbr{\bT \ge i \land \bH_i = H_i}$ and $\rbr{\bT \ge i \land \bhistPre_i = \histPre_i}$ respectively.

We begin with the following definition. 
\begin{definition}[Potential]
  For any $i \ge 1$ and any element $e \in \Lc_i$, 
  \emph{potential of $e$ in level $i$}, 
  is defined as the number
  $P(e, i)$ satisfying
  $\frac{\density{e}{G_{i-1}}}{\tau} \in [(1+\epsBuck)^{P(e, i)-1}, (1+\epsBuck)^{P(e, i)})$. 
  For elements $e\notin \Lc_i$, we define
  $P(e, i) = 0$.
  We define the \emph{potential of level $i$}
  as
  $P_i := \sum_{e \in V} P(e, i) = \sum_{e\in \Lc_i} P(e, i)$
  for $i \le T+1$ and $P_{i} = 0$ for $i > T+1$.
\end{definition}

\begin{lemma}\label{lm:potential_properties}
    For any $e$ and $i$ the following hold:
    \begin{enumerate}
        \item $P(e, i) \in [1, O(\log_{1+\epsilon}(\eta))]$.
        \item $\abs{\hat{L}_i} \le P_i \le O(\log_{1+\epsilon}(\eta)) \abs{\hat{L}_i}$
        \item $P_{T+1} = 0$ and $P_i > 0$ for $i \le T$. 
    \end{enumerate}
\end{lemma}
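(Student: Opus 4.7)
The plan is to derive each of the three items directly from the definition of $P(e,i)$ together with the invariants already established (the filter invariant and the stopping invariant), plus the range constraint on $\tau$ enforced by the parallel-run framework of Section~\ref{Parallel}. Nothing about the random sampling is needed here, so the argument should be purely combinatorial.

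For item 1, I would first handle the lower bound: whenever $e \in \hat{L}_i$, the filter invariant gives $d(e \mid G_{i-1}) \ge \tau$, so $\tfrac{d(e\mid G_{i-1})}{\tau} \ge 1$, which by the bucketing definition forces $P(e,i) \ge 1$. For the upper bound, I would use submodularity to write $d(e \mid G_{i-1}) \le d(e)$, then invoke the update-interval condition from Algorithm~\ref{alg:cardinality:unknown:opt} (the bound on $i$ in \textsc{GlobalUpdate}): since the element $e$ is only inserted into instances with $i \le \log_{1+\epsilon}(d(e))$ (equivalently $d(e) \ge \tau$ would be trivial) and, more importantly, only into instances with $i \ge \log_{1+\epsilon}\!\bigl(\tfrac{d(e)\epsilon}{n\rho(1+\epsilon)}\bigr)$, we get $\tfrac{d(e)}{\tau} \le \tfrac{(1+\epsilon) n\rho}{\epsilon} = \eta$. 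Plugging into the defining interval of $P(e,i)$ yields $(1+\epsilon)^{P(e,i)-1} \le \eta$, hence $P(e,i) \le 1 + \log_{1+\epsilon}(\eta) = O(\log_{1+\epsilon}(\eta))$.

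For item 2, the lower bound $|\hat{L}_i| \le P_i$ is immediate because $P_i = \sum_{e \in \hat{L}_i} P(e,i)$ and each summand is at least $1$ by item 1. The upper bound follows by applying the per-element bound of item 1 termwise to the same sum, giving $P_i \le O(\log_{1+\epsilon}(\eta)) \cdot |\hat{L}_i|$. For item 3, the stopping invariant gives $\hat{L}_{T+1} = \emptyset$, so $P_{T+1} = 0$; and for $i \le T$ the stopping invariant gives $\hat{L}_i \ne \emptyset$, so picking any $e \in \hat{L}_i$ yields $P_i \ge P(e,i) \ge 1 > 0$.

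I do not anticipate a real obstacle: the only subtle point is justifying the upper bound on $d(e)/\tau$ in item 1, which requires recalling that the parallel-run framework never hands an element $e$ to an instance whose $\tau$ is much smaller than $d(e)$. Once that is made explicit, every other step is a one-line appeal to the definition of $P(e,i)$ or to an already-proved invariant, so the writeup should remain short.
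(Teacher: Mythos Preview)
Your proposal is correct and matches the paper's own proof essentially line for line: the paper also uses the filter invariant (Lemma~\ref{lm:invariant_filter}) for the lower bound in item~1, the insertion interval of Line~\ref{line:paralle_run_insertion_interval} together with submodularity for the upper bound $d(e)\le \eta\tau$, and then derives items~2 and~3 from item~1 plus the stopping invariant (Lemma~\ref{lm:final_level_stronger}).
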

\begin{proof}
    For the first result, note that since
    $e \in \Lc_i$, given Lemma~\ref{lm:invariant_filter},
    we have 
    $e \in \FilterF{}(\Lc_{i-1}, G_{i-1}, \tau)$, which implies
    $\density{e}{G_{i-1}} \ge \tau$. Therefore, $P(e, i) \ge 1$. 
    On the other hand, based on Line~\ref{line:paralle_run_insertion_interval}, we only add elements that satisfy $d(e) \le \eta \tau$. 
    By submodularity, this implies that $\density{e}{G_{i-1}} \le \eta \tau$, thus proving the claim.
    The second result follows from the first one since
    $P(e, i) = 0$ for $e\notin \Lc_i$
    and the third result follows from the second one and Lemma \ref{lm:final_level_stronger}. 
\end{proof}

\begin{lemma}\label{lm:potential_decrease_always}
    For any $i \in [1, T]$ and any $e \in V$, the following inequalities hold:
    $P(e, i) \ge P(e, i+1)$ and, consequently, $P_{i} \ge P_{i+1}$.
\end{lemma}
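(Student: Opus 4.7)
The plan is to prove the pointwise inequality $P(e, i) \ge P(e, i+1)$ for every $e \in V$ and then sum to obtain $P_i \ge P_{i+1}$. The argument splits into cases based on membership of $e$ in $\hat{L}_i$ and $\hat{L}_{i+1}$.

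First, I would handle the trivial cases. If $e \notin \hat{L}_i$, then by the subset invariant we have $\overline{L}_{i+1} \subseteq \overline{L}_i$, and since $\hat{L}_j = \overline{L}_j \setminus D$, it follows that $\hat{L}_{i+1} \subseteq \hat{L}_i$, so $e \notin \hat{L}_{i+1}$ as well. By definition both $P(e, i)$ and $P(e, i+1)$ are $0$, and the inequality holds with equality. Similarly, if $e \in \hat{L}_i$ but $e \notin \hat{L}_{i+1}$, then $P(e, i+1) = 0 \le P(e, i)$, which holds by Lemma~\ref{lm:potential_properties} part (1) giving $P(e, i) \ge 1$.

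The main case is when $e \in \hat{L}_{i+1} \subseteq \hat{L}_i$. Here I would use the fact that, by construction in Algorithm~\ref{alg:offline}, $G_i$ is built by starting from $G_{i-1}$ and possibly adding elements, so $G_{i-1} \subseteq G_i$. Submodularity of $f$ then yields $\Delta(e \mid G_{i-1}) \ge \Delta(e \mid G_i)$, and dividing by the (positive) weight $w(e)$ gives $\density{e}{G_{i-1}} \ge \density{e}{G_i}$. Combining with the defining bracket $\density{e}{G_i}/\tau \ge (1+\epsilon)^{P(e, i+1)-1}$ and $\density{e}{G_{i-1}}/\tau < (1+\epsilon)^{P(e, i)}$, we obtain
\begin{equation*}
(1+\epsilon)^{P(e, i+1)-1} \le \frac{\density{e}{G_i}}{\tau} \le \frac{\density{e}{G_{i-1}}}{\tau} < (1+\epsilon)^{P(e, i)},
\end{equation*}
so $P(e, i+1) - 1 < P(e, i)$, and since both are integers, $P(e, i+1) \le P(e, i)$ as required.

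Finally, summing the pointwise inequality $P(e, i) \ge P(e, i+1)$ over all $e \in V$ gives $P_i \ge P_{i+1}$ directly from the definition of $P_j$. I do not anticipate a real obstacle here: everything reduces to combining submodularity of $f$ with the subset invariant (Lemma~\ref{lm:invariant_filter} / the subset invariant statement) and the fact that the sets $G_\ell$ form a nested chain, both of which are already established.
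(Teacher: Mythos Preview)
Your proposal is correct and follows essentially the same approach as the paper's proof: split on whether $e \in \hat{L}_{i+1}$, handle the trivial case where $P(e,i+1)=0$, and for the remaining case use $G_{i-1} \subseteq G_i$ together with submodularity to conclude that the density, and hence the bucket index, can only decrease. Your write-up is simply more detailed (you separate out the subcase $e \notin \hat{L}_i$, which is already covered by $e \notin \hat{L}_{i+1}$ via the subset invariant), but the underlying argument is identical.
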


\begin{proof}
  If $e \notin \Lc_{i+1}$, the claim holds trivially because the right-hand side equals zero.
  Otherwise, since $\Lc_{i+1} \subseteq \Lc_{i}$ by Lemma~\ref{lm:invariant_filter}, and $e\in \Lc_i$, the claim follows from the fact that $G_{i} \subseteq G_{i+1}$. 
\end{proof}

\begin{lemma}\label{lm:potential_decrease}
  Assume we are given sets $L_{j}, G_{j-1}$
  satisfying
  $\FilterF{}(L_j, G_{j-1}, \tau) = L_j$,
  and we invoke $\ReconstructF{}(j)$
  obtaining the (random) values
  $\bT, \bS_{j}, \dots \bS_{\bT}, \bL_{j + 1}, \dots \bL_{\bT+1}$.
  Then,
  for each $i\ge j$,
  \begin{equation*}
    \Ex{P_{i} - \bP_{i+1} \big| \bT \ge i, \bhistPre_i=\histPre_i} \ge \Omega(\epsSamp) \cdot |B_i|
  \end{equation*}
  for all $\histPre_i$ such that $\Pr{\bT \ge i, \bhistPre_i=\histPre_i} > 0$,
  where $\histPre_i$ is defined as in \eqref{eq:def_pre_history}.
\end{lemma}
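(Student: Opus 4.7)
The plan is to lower bound $P_i - \bP_{i+1}$ by counting elements of $B_i$ whose potential strictly drops when we move from level $i$ to level $i+1$. Note first that every $e \in B_i$ satisfies $P(e, i) = b_{i, 1}+1$, since by construction of the bucket $\density{e}{G_{i-1}}/\tau \in [(1+\epsilon)^{b_{i, 1}}, (1+\epsilon)^{b_{i, 1}+1})$. Hence it suffices to count $e \in B_i$ with $\density{e}{G_i} < \tau_i$: by submodularity any such $e$ has $P(e, i+1) \le b_{i, 1}$ (or $e \notin \Lc_{i+1}$ and $P(e, i+1) = 0$), contributing a potential drop of at least one.

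I will split $B_i = \bS_i \cup (B_i \setminus \bS_i)$ and handle each part. For any $e \in \bS_i$: if $e$ was added to $G_i$ during the inner loop, then $\density{e}{G_i} = 0 < \tau_i$; if $e$ was rejected, then the threshold test already failed on a subset of the final $G_i$, so by submodularity $\density{e}{G_i} < \tau_i$. Thus conditioned on $\sigmaPre_i$ and on any value $\bm_i = m_i$, the sampled elements alone yield $P_i - \bP_{i+1} \ge m_i$ deterministically.

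For the non-sampled elements I use a symmetry argument. Viewing $\bS_i$ together with its processing order as the first $m_i$ entries of a uniformly random permutation $\pi$ of $B_i$, the set $G_i$ is a deterministic function of $(\pi(1), \dots, \pi(m_i))$, and by exchangeability of the positions $\{m_i+1, \dots, |B_i|\}$, for every $r > m_i$ the joint distribution of $((\pi(1), \dots, \pi(m_i)), \pi(r))$ equals that of $((\pi(1), \dots, \pi(m_i)), \pi(m_i+1))$. Hence, conditioned on $\sigmaPre_i$ and $\bm_i = m_i \in M_i^*$,
\begin{math}
\Pr{d(\pi(r) | G_i) \ge \tau_i} = \Ex{\bX(m_i+1)} \le 1 - \epsilon/2,
\end{math}
where $\bX$ denotes the random output of $\apprev{}(B_i, G_{i-1}, \tau_i)$ and the final inequality is Definition \ref{def:suitable_sample_size}. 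Summing over $r = m_i+1, \dots, |B_i|$, in expectation at least $(\epsilon/2)(|B_i| - m_i)$ non-sampled elements satisfy $\density{e}{G_i} < \tau_i$. Combining with the sampled contribution,
\begin{math}
\Ex{P_i - \bP_{i+1} \mid \sigmaPre_i, \bm_i = m_i} \ge m_i + (\epsilon/2)(|B_i| - m_i) \ge (\epsilon/2) |B_i|.
\end{math}
Averaging over $\bm_i$ and applying the sample size invariant $\Pr{\bm_i \in M_i^* \mid \sigmaPre_i} \ge 1 - \epsilon/n^{10}$ loses only a $(1-o(1))$ factor and yields $\Ex{P_i - \bP_{i+1} \mid \sigmaPre_i} \ge \Omega(\epsilon) |B_i|$.

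The main subtlety I anticipate is justifying the exchangeability step. Because $\bm_i$ is produced by $\CalcSampleCountF$ using random bits independent of the permutation used to draw $\bS_i$, conditioning on a specific value of $\bm_i$ preserves the uniformity of $\pi$ on $B_i$, which is exactly the content of the sampling invariants. This independence is what allows the suitable sample size guarantee---a statement about the single position $m_i + 1$ in the simulated $\apprev$ run---to be lifted to a uniform per-element bound across all of $B_i \setminus \bS_i$.
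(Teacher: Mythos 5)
Your proof is correct and follows essentially the same route as the paper's: it lower bounds $P_i - \bP_{i+1}$ by the number of elements of $B_i$ whose potential strictly drops after level $i$, uses the suitable-sample-size property to get an expected $\Omega(\epsilon)\abs{B_i}$ drop conditioned on $\bm_i \in M_i^*$, and handles $\bm_i \notin M_i^*$ via the high-probability bound (Lemma~\ref{lm:bound_countF}) together with nonnegativity of $P_i - P_{i+1}$. Where you go beyond the paper is the explicit exchangeability argument relating $\Pr{\density{\pi(r)}{G_i} \ge \tau_i}$ for $r > m_i$ to $\Ex{\bX(m_i+1)}$: the paper stops at $P_i - P_{i+1} \ge \abs{B_i} - \abs{\FilterF{}(B_i, G_i, \tau_i)}$ and asserts that the final expectation bound \enquote{follows from Definition~\ref{def:suitable_sample_size}} (with Lemma~\ref{lm:uniform_lazy} cited in the analogous step of Lemma~\ref{lm:num_level_total}), leaving the permutation-symmetry step implicit, so your proposal usefully supplies the missing detail rather than taking a different route.
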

\begin{proof}
  We first observe that since we are considering
  these values right after we invoke $\ReconstructF{}(j)$, we have $\Lc_i = L_i$. 
  We first give a sketch of the proof.
  For each $i\ge j$,
  by Lemma \ref{lm:bound_countF},
  $\bm_i \in M_i^*$
  with probability at least $1-O(\epsSamp/n^{10})$.
  By definition of $M_i^*$, $\bm_i \in M_i^*$ means that
  in expectation,
  at least
  $\epsSamp/2$ fraction of the elements in $B_i$ satisfy $P_{i+1}(b, e) \le P_i(b, e) - 1$.
  As for the case $\bm_i \notin M_i^*$, 
  since this only happens with low probability, we can handle it using the bound
  $P_{i+1} \le P_{i}$.

  Formally, using the shorthand $\sigmaPre_i$ instead of $\bT \ge i, \bhistPre_i=\histPre_i$,
  \begin{align*}
    &\Ex{P_{i} - \bP_{i+1} \big| \sigmaPre_i}
    \\&=
    \Pr{\bm_i \in M_i^* | \sigmaPre_i} \cdot \Ex{P_{i} - \bP_{i+1} \big| \sigmaPre_i, \bm_i \in M_i^*}
    +
    \Pr{\bm_i \notin M_i^* | \sigmaPre_i} \cdot \Ex{P_{i} - \bP_{i+1} \big| \sigmaPre_i, \bm_i \notin M_i^*}
    \\&\ge
    \Pr{\bm_i \in M_i^* | \sigmaPre_i} \cdot \Ex{P_{i} - \bP_{i+1} \big| \sigmaPre_i, \bm_i \in M_i^*}
  \end{align*}
  where the inequality follows from the fact that $P_i - P_{i+1}$ is always non-negative given Lemma~\ref{lm:potential_decrease_always}.
  By Lemma \ref{lm:bound_countF},
  we can further bound this as
  \begin{align}
    \notag
    \Ex{P_{i} - \bP_{i+1} \big| \sigmaPre_i}
    &\ge
    \notag
    \Pr{\bm_i \in M_i^* | \sigmaPre_i} \cdot \Ex{P_{i} - \bP_{i+1} \big| \sigmaPre_i, \bm_i \in M_i^*}
    \notag
    \\&\ge
    \notag
    (1-\mO(\frac{\epsSamp}{n^{10}})) \cdot \Ex{P_{i} - \bP_{i+1} \big| \sigmaPre_i, \bm_i \in M_i^*}
    \\&\ge
    \notag
    \frac{1}{2} \cdot \Ex{P_{i} - \bP_{i+1} \big| \sigmaPre_i, \bm_i \in M_i^*}
    \\&=
    \frac{1}{2} \cdot \Exu{m_i \sim \bm_i | \sigmaPre_i, \bm_i \in M_i^*}{\Ex{P_{i} - \bP_{i+1} \big| \sigmaPre_i, \bm_i = m_i}}
    \label{yq8wer}
  \end{align}
  Finally, as $\Lc_i = L_i$, we observe that
  \begin{align*}
    P_i - P_{i+1} 
    =
    \sum_{e\in L_i} P(e, i) - \sum_{e\in L_{i+1}} P(e, i+1)
    &=
    \sum_{e\in L_i} P(e, i) - \sum_{e\in L_{i}} P(e, i+1)
    \\&=
    \sum_{e \in B_i}
     \rbr{P(e, i) - P(e, i+1)}
     + \sum_{e \in L_i \backslash B_i}
     \rbr{P(e, i) - P(e, i+1)}
    \\&\overset{(a)}{\ge}
    \sum_{e \in B_i}
     \rbr{P(e, i) - P(e, i+1)}
     \\&\ge
    \sum_{e \in B_i}
    \ind{P(e, i) - P(e, i+1) \ge 1}
     \\&\overset{(b)}{{=}}
    \sum_{e \in B_i} 
    \ind{\density{e}{G_i} < \tau_i}
   \\&\overset{(c)}{=}
     \abs{B_i} - 
    \abs{\FilterF{}(B_i, G_i, \tau_i) }
  \end{align*}
  In the above derivation, inequality
  $(a)$ follows from Lemma~\ref{lm:potential_decrease_always},
   $(b)$ follows from the fact that if $\density{e}{G_i} < \tau_i$ then $P(e, i + 1)<P(e, i)$ will hold for $e \in B_i$ since $\density{e}{G_{i-1}} \ge \tau_i$ for these elements. 
  Finally, $(c)$ follows from the definition of $\FilterF{}$. 
  
  Therefore, 
  \begin{align*}
    \Ex{P_{i} - \bP_{i+1} \big| \sigmaPre_i}
    &\ge
    \frac{1}{2} \cdot \Exu{m_i \sim \bm_i | \sigmaPre_i, \bm_i \in M_i^*}{\Ex{P_{i} - \bP_{i+1} \big| \sigmaPre_i, \bm_i = m_i}}
    \\&\ge
    \frac{1}{2} \cdot \frac{\epsSamp}{2} |B_i|
  \end{align*}
  where the first inequality follows from \eqref{yq8wer} and
  the final inequality follows from Definition~\ref{def:suitable_sample_size}.
\end{proof}

Next, we state the following inequality and refer to
\cite{banihashem2023dynamicmat} for a proof.
\begin{lemma}[Lemma 25 in \cite{banihashem2023dynamicmat}]\label{lm:auxil}
    Let $\bX_0, \bX_1, \dots, \bX_n$ be a sequence of integer positive variables such that $\bX_{i} \le \bX_{i-1}$ and 
    \begin{align*}
        \Ex{\bX_i \mid \bX_1 = X_1, \dots, \bX_{i-1} = X_{i-1}}
        \le 
        (1-\eps') {X_{i-1}}. 
    \end{align*}
    Let $T$ denote the first index $i$ such that $X_T = 0$ and assume that $\bX_0=N$ for some fixed integer $N$. Then
    $\Ex{\bT} \le \frac{\log(N) + 1}{\poly(\eps')}$.
\end{lemma}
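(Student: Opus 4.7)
The plan is to introduce a logarithmic potential that vanishes when the process hits zero and to show it decreases by at least a constant multiple of $\eps'$ in expectation per step. Specifically, define $\Phi_i := \log(\bX_i + 1)$, so that $\Phi_0 = \log(N+1)$ and $\Phi_{\bT} = 0$ (using $\bX_{\bT} = 0$). A standard optional-stopping argument will then convert the per-step drift into a bound on $\Ex{\bT}$, giving the claimed $\ex{\bT} \le O((\log N + 1)/\eps')$, which is of the required $\poly(\eps')$ form in the denominator.

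The technical heart of the proof is to verify the one-step drift
\begin{align*}
    \Ex{\Phi_i - \Phi_{i-1} \mid \bX_0, \dots, \bX_{i-1}} \le -\eps'/2
    \quad\text{whenever } \bX_{i-1} \ge 1.
\end{align*}
Since $x \mapsto \log(x+1)$ is concave, Jensen's inequality together with the hypothesis gives
$\Ex{\log(\bX_i + 1) \mid \bX_{i-1}} \le \log(\Ex{\bX_i \mid \bX_{i-1}} + 1) \le \log((1-\eps')\bX_{i-1} + 1)$.
The drift then reduces to an explicit inequality
\begin{align*}
    \log\!\rbr{\frac{(1-\eps')\bX_{i-1} + 1}{\bX_{i-1} + 1}} = \log\!\rbr{1 - \frac{\eps'\, \bX_{i-1}}{\bX_{i-1} + 1}} \le \log(1-\eps'/2) \le -\eps'/2,
\end{align*}
which uses $\bX_{i-1}/(\bX_{i-1}+1) \ge 1/2$ for $\bX_{i-1} \ge 1$ and the bound $\log(1-y) \le -y$.

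With this drift established, I would form the supermartingale $M_i := \Phi_{\min(i, \bT)} + (\eps'/2)\min(i, \bT)$ and apply the optional stopping theorem. Integrability is immediate because $\Phi_i \in [0, \log(N+1)]$ while the process is running and $\bT$ is bounded by the sequence length $n$ (or, if one prefers, by truncating at time $n$ and then passing to the limit). Optional stopping yields $\Ex{M_{\bT}} \le \Ex{M_0} = \log(N+1)$, and since $\Phi_{\bT} = 0$ this rearranges to $\Ex{\bT} \le 2(\log(N+1))/\eps' = O((\log N + 1)/\eps')$, matching the lemma.

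The main obstacle is really just the boundary behavior at zero: the most natural potential $\log \bX_i$ fails because $\log 0 = -\infty$ and because the hypothesis $\Ex{\bX_i \mid \bX_{i-1}} \le (1-\eps')\bX_{i-1}$ does not directly control $\Ex{\log \bX_i \mid \bX_{i-1}}$. Shifting to $\log(\bX_i + 1)$ sidesteps this cleanly, at the cost of only a constant factor in the final bound. Once this choice is made, everything else is a short Jensen plus optional-stopping calculation, so I expect no further subtleties.
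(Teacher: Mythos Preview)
The paper does not supply its own proof of this lemma; it simply states the inequality and defers to \cite{banihashem2023dynamicmat} (Lemma 25 there) for a proof. So there is no in-paper argument to compare against.

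Your proposed argument is correct and self-contained. The shifted logarithmic potential $\Phi_i = \log(\bX_i + 1)$ handles the absorbing boundary cleanly, the Jensen step is valid because $x \mapsto \log(x+1)$ is concave, and the arithmetic $\frac{\bX_{i-1}}{\bX_{i-1}+1} \ge \frac{1}{2}$ for integer $\bX_{i-1} \ge 1$ gives the claimed drift of at least $\eps'/2$ per step. The optional-stopping wrap-up is standard and yields $\Ex{\bT} \le 2\log(N+1)/\eps'$, which is indeed of the form $\frac{\log(N)+1}{\poly(\eps')}$ and in fact slightly sharper than what the lemma asserts. One cosmetic point: you wrote $\ex{\bT}$ once where the paper's macro is $\Ex{\cdot}$; otherwise the argument is ready to drop in.
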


\begin{lemma}\label{lm:num_level_reconstruct}
  Assume we are given sets $L_{j}, G_{j-1}$
  satisfying
  $\FilterF{}(L_j, G_{j-1}, \tau) = L_j$ and $L_j = \Lc_{j}$.
  Let 
  $\bT, \bS_{j}, \dots \bS_{T}, \bL_{j + 1}, \dots \bL_{T+1}$
  denote the values after invoking
  $\ReconstructF{}(j)$,
  the expected value of $\bT - j$ is bounded by
  $\poly(\log(|L_j|), \log(\eta), \frac{1}{\eps})$.
\end{lemma}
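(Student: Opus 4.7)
\begin{proofof}{Lemma \ref{lm:num_level_reconstruct} (sketch)}
The plan is to use the potential $P_i$ as a progress measure and show that it shrinks by a constant factor (in expectation) between consecutive levels, then apply the auxiliary Lemma \ref{lm:auxil}. Concretely, define $\bX_i := \bP_{j+i}$ for $i \ge 0$. By Lemma \ref{lm:potential_properties}, we have $X_0 = P_j \le O(\log_{1+\epsilon}(\eta)) \cdot |L_j|$, and $\bX_i \le \bX_{i-1}$ always holds by Lemma \ref{lm:potential_decrease_always}. Moreover $\bX_i = 0$ precisely once we pass the last nonempty level, i.e., at $i = \bT - j + 1$, so bounding $\bT - j$ reduces to bounding the hitting time of $0$ for the sequence $(\bX_i)_{i \ge 0}$.

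The key step is to show the multiplicative drift
\begin{equation*}
  \Ex{\bX_i \mid \bX_0, \dots, \bX_{i-1}} \le (1 - \eps') \bX_{i-1},
\end{equation*}
for some $\eps' = 1/\poly(\log \eta, 1/\epsilon)$. This will follow by chaining three facts. First, Lemma \ref{lm:potential_decrease} gives $\Ex{P_i - \bP_{i+1} \mid \sigmaPre_i} \ge \Omega(\epsilon) \cdot |B_i|$. Second, since $B_i$ is the \emph{largest} bucket in the two-dimensional array indexed by $(j, k)$ with $j \in [0, O(\log_{1+\epsilon} \eta)]$ (recall that elements in the relevant range satisfy $d(e)/\tau \le \eta$, as argued in Section \ref{Parallel}) and $k \in [0, O(\log_{1+\epsilon} \rho)]$, we have $|B_i| \ge |L_i|/\poly(\log \eta, 1/\epsilon)$. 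Third, right after calling $\ReconstructF{}(j)$ and before any updates, $\Lc_i = L_i$ at each subsequent level, so Lemma \ref{lm:potential_properties} gives $|L_i| \ge \Omega(P_i / \log_{1+\epsilon}(\eta))$. Combining these,
\begin{equation*}
  \Ex{P_i - \bP_{i+1} \mid \sigmaPre_i} \ge \frac{\epsilon}{\poly(\log \eta, 1/\epsilon)} \cdot P_i,
\end{equation*}
which after passing to the $\sigma$-algebra generated by $\bX_0, \dots, \bX_{i-1}$ (via the law of total expectation, noting that $\bhistPre_i$ refines this information) yields the required multiplicative drift with $\eps' = 1/\poly(\log \eta, 1/\epsilon)$. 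For indices $i \ge \bT - j + 1$ the sequence has already hit $0$ and trivially satisfies the inequality, so the hypotheses of Lemma \ref{lm:auxil} hold.

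Applying Lemma \ref{lm:auxil} with $N = P_j = O(\log_{1+\epsilon}(\eta)) \cdot |L_j|$ then gives
\begin{equation*}
  \Ex{\bT - j} \le \frac{\log(N) + 1}{\poly(\eps')} = \poly(\log |L_j|, \log \eta, 1/\epsilon),
\end{equation*}
as claimed. The main obstacle is the third ingredient above: we must justify that $\Lc_i = L_i$ at the moment of analysis (this uses crucially that we are studying the state \emph{immediately after} the call $\ReconstructF{}(j)$, where $\Lp_i$ is reset to $L_i$ and $D$ has not yet changed), since otherwise the bound on $|L_i|$ in terms of $P_i$ could fail and the geometric decrease would break down.
\end{proofof}
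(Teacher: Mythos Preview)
Your proposal is correct and follows essentially the same approach as the paper's proof: both use the potential $P_i$, invoke Lemma~\ref{lm:potential_decrease} to get an $\Omega(\epsilon)|B_i|$ drop, convert $|B_i|$ to $|L_i|$ via the largest-bucket argument and then to $P_i$ via Lemma~\ref{lm:potential_properties} (using $\Lc_i=L_i$ right after \ReconstructF{}), and finish with Lemma~\ref{lm:auxil}. The paper makes the constants slightly more explicit (arriving at $\eps'=\Omega(\epsilon^4/\log^3\eta)$) and spells out the passage from conditioning on $\sigmaPre_i$ to conditioning on $\bP_1,\dots,\bP_i$, but these are exactly the steps you sketch.
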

\begin{proof}
  By Lemma~\ref{lm:potential_properties}
  \begin{align*}
    |L_i| \le P_i \le |L_i| \cdot \log_{1+\epsBuck}(\eta)
  \end{align*}
  Given Lemma \ref{lm:potential_decrease},
  for $i\ge j$,
  \begin{align*}
    \Ex{P_{i} - \bP_{i+1} | \sigmaPre_i}
    \ge \Omega(\epsSamp) \cdot |B_i|
    &\overset{(a)}{\ge} \Omega(\frac{\epsSamp}{\log_{1+\epsBuck}(\eta)\log_{1+\epsBuck}(\rho)}) \cdot |L_i|
    \\&\overset{(b)}{\ge} \Omega(\frac{\epsSamp}{\log^2_{1+\epsBuck}(\eta)\log_{1+\epsBuck}(\rho)}) \cdot |P_i|
    \\&\overset{(c)}{\ge} \Omega(\frac{\epsilon^4}{\log^3(\eta)}) P_i
  \end{align*}
  where for $(a)$, we have used the fact that $|B_i|$ was
  the largest bucket, for $(b)$ we have used Lemma~\ref{lm:potential_properties},
  and for $(c)$, we have used the inequality
  $\log(1+x) \ge \frac{x}{4}$ for $x < 1$.
  Therefore, for some 
  $\eps' = \Omega(\frac{\epsilon^4}{\log^3(\eta)})$,
   we have
  \begin{align*}
      \Ex{\bP_{i+1} | \sigmaPre_i 
      }
      \le 
      (1-\eps')\cdot P_{i},
  \end{align*}
  Since the values $\bP_1, \dots, \bP_i$ are deterministic conditioned on $\sigmaPre_i$, this further implies
  \begin{align}
      \Ex{\bP_{i+1} | 
      \bP_{i}=P_i,
      \dots, \bP_1 = P_1
      }
      \le 
      (1-\eps')\cdot P_{i}.
      \label{qjhwq78}
  \end{align}
  Formally, since when $P_{i} \ne 0$, we have $\bT \ge i$ given Lemma~\ref{lm:potential_properties}, and followed by iterated expectation, we will have
  \begin{align*}
      \Ex{\bP_{i+1} | 
      \bP_{i}=P_i,
      \dots, \bP_1 = P_1
      }
      &=
      \Ex{\bP_{i+1} | 
      \bP_{i}=P_i,
      \dots, \bP_1 = P_1, 
      \bT \ge i
      }
      \\&=
      \Ex{
      \Ex{\bP_{i+1} | 
      \bP_{i}=P_i,
      \dots, \bP_1 = P_1, 
      \bT \ge i,
      \bH_i = H_i
      }
      }
      \\&=
      \Ex{
      \Ex{\bP_{i+1} | 
      \bT \ge i,
      \bH_i = H_i
      }
      }
      \\&\le (1-\eps')P_i
  \end{align*}
  where the third equality follows from the
  fact that 
  $\bP_1, \dots, \bP_i$ is deterministic conditioned on $\cbr{\bT \ge i, \bH_i = H_i}$.
  If $P_i = 0$, then
  \eqref{qjhwq78} holds trivially because
  $P_{i+1}\le P_{i} = 0$. 
  The claim now follows from Lemma~\ref{lm:auxil}
  and the fact that
  $P_j \le |L_j| O(\log(\eta) / \epsBuck)$
\end{proof}

\begin{lemma}\label{lm:bound_num_query_recounstruct}
  The expected number of queries made by calling \ReconstructF{}(i) is
  $|\Lc_i| \cdot \poly(\log(|\Lc_i|), \log(\eta), \frac{1}{\eps}) $, where $|\Lc_i|$ refers to the size of $\Lc_i$ after the update.
\end{lemma}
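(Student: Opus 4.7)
The plan is to decompose the query cost of $\ReconstructF{}(i)$ into a per-level cost and then use Lemma~\ref{lm:num_level_reconstruct} to control the number of levels built.

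First I would bound the number of oracle queries used to process a single level $j \in [i, \bT]$ of the reconstruction in terms of $|L_j|$. There are four sources of queries within one iteration of the while-loop in $\ReconstructF{}$: (i) computing $\density{e}{G_{j-1}}$ for every $e \in L_j$ in order to bucket the elements, which costs $O(|L_j|)$ queries; (ii) the call to $\CalcSampleCountF{}(B_j, G_{j-1}, \tau_j)$, which invokes $\apprev{}$ exactly $t = \lceil \constt \rceil = \poly(\log n, \tfrac{1}{\epsilon})$ times, each invocation making at most $|B_j| \le |L_j|$ marginal-density queries, for a total of $|L_j| \cdot \poly(\log n, \tfrac{1}{\epsilon})$; (iii) the loop that forms $G_j$ from $S_j$, which costs $O(m_j) \le |L_j|$ queries; and (iv) the filtering step forming $L_{j+1}$, costing $O(|L_j|)$ queries. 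Thus each level $j$ of the reconstruction costs at most $|L_j| \cdot \poly(\log n, \tfrac{1}{\epsilon})$ oracle queries, deterministically.

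Next I would sum over all levels built during the call. By the subset invariant (see Lemma~\ref{lm:invariant_filter} and the filter step of $\ReconstructF{}$), $L_{j+1} \subseteq L_j$ for all $j \ge i$, so $|L_j| \le |L_i|$ for every level touched. Moreover, right after line~\ref{line:level_first_line} of $\ReconstructF{}(i)$, we have $L_i = \Lp_i \setminus D = \Lc_i$, so $|L_i| = |\Lc_i|$. Hence the total query cost of the call is bounded by
\begin{equation*}
    (\bT - i + 1) \cdot |\Lc_i| \cdot \poly(\log n, \tfrac{1}{\epsilon}).
\end{equation*}

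Taking expectations and applying Lemma~\ref{lm:num_level_reconstruct} (whose hypotheses are met since after line~\ref{line:level_first_line} the freshly set $L_i$ equals $\Lc_i$ and already satisfies $\FilterF{}(L_i, G_{i-1}, \tau) = L_i$ by the filter invariant), we obtain
\begin{equation*}
    \Ex{\bT - i + 1} \le \poly\bigl(\log(|\Lc_i|), \log(\eta), \tfrac{1}{\epsilon}\bigr).
\end{equation*}
Since $\log n \le \log \eta$, combining the two bounds yields expected total query cost $|\Lc_i| \cdot \poly(\log(|\Lc_i|), \log(\eta), \tfrac{1}{\epsilon})$, as claimed. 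The only mildly delicate point is observing that the per-level bound is deterministic in $|L_j|$ and that $|L_j|$ is itself dominated by $|\Lc_i|$, so the randomness only enters through the number of levels, allowing a clean application of linearity of expectation with Lemma~\ref{lm:num_level_reconstruct}.
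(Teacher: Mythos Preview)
Your proposal is correct and follows essentially the same approach as the paper: bound the per-iteration cost of the while-loop by $|\Lc_i|\cdot\poly(\log(\eta),\tfrac{1}{\eps})$ (using Lemma~\ref{lm:bound_countF} for $\CalcSampleCountF{}$ and the trivial bounds for bucketing and filtering), then multiply by the expected number of iterations from Lemma~\ref{lm:num_level_reconstruct}. Your write-up is more explicit than the paper's about why $|L_j|\le|\Lc_i|$ and why the hypotheses of Lemma~\ref{lm:num_level_reconstruct} are met, but the structure is the same.
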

\begin{proof}
    By Lemma \ref{lm:bound_countF},
    and the fact that
     $\FilterF{}$ and bucketing make at most $\mO(|\Lc_i|)$ queries, 
    each iteration the while-loop in algorithm \ref{alg:offline}
    makes at most
    $\mO((|\Lc_i|) \cdot \poly(\log(\eta), \frac{1}{\epsilon}))$ queries.
    By Lemma~\ref{lm:num_level_reconstruct}, in expectation, the while-loop is executed at most 
  $\poly(\log(|\Lc_i|), \log(\eta), \frac{1}{\eps})$ times, which proves the claim.
\end{proof}

\begin{lemma}\label{lm:num_level_total}
  At any point in the stream,
  the expected number of levels $\Ex{\bT}$ is 
  at most
  $\poly(\log(n), \log(\eta), \frac{1}{\eps})$.
\end{lemma}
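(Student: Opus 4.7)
The plan is to bound $\Ex{\bT}$ by relating the current value of $T$ to a chain of past reconstruction calls that created the currently-present levels. At any point in the stream, observe first that $\bT$ changes only during reconstruction calls: both \InsertF{} and \DeleteF{} leave $T$ unchanged unless they trigger $\ReconstructF{}(j)$ for some $j$. Hence the current $\bT$ equals $(j^* - 1) + T^{(*)}$, where $j^*$ is the starting level of the most recent reconstruction and $T^{(*)}$ is the number of new levels it produced. By Lemma \ref{lm:num_level_reconstruct}, conditioned on $L_{j^*}$ at the time of that reconstruction, $\Ex{T^{(*)}} \le \poly(\log(|L_{j^*}|), \log(\eta), \tfrac{1}{\eps}) \le \poly(\log(n), \log(\eta), \tfrac{1}{\eps})$ since $|L_{j^*}| \le n$.

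Next I plan to decompose the levels $1, \dots, j^* - 1$ into contributions from earlier reconstructions. Consider the sequence of contributing reconstructions $R_1, R_2, \dots, R_K$ (in chronological order) whose created levels still survive in the current structure. Their starting levels form a strictly increasing sequence $1 = j_1 < j_2 < \dots < j_K = j^*$, since a subsequent reconstruction starting at a lower or equal level would overwrite $R_k$'s contributions. The surviving levels from $R_k$ are exactly positions $j_k, j_k+1, \dots, j_{k+1}-1$ (with $j_{K+1}:=\bT+1$), so $\bT = \sum_{k=1}^{K} T^{(k)}$ where $T^{(k)}$ is the number of surviving levels from $R_k$.

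Applying Lemma \ref{lm:num_level_reconstruct} to each $R_k$ bounds $\Ex{T^{(k)} \mid \text{history up to } R_k}$ by $\poly(\log(n), \log(\eta), \tfrac{1}{\eps})$. The main obstacle is that $K$ itself depends on $\bT$ (each contributing reconstruction contributes at least one level, so $K \le \bT$), creating a potential circularity. To break this, I plan to use the potential function of Lemma \ref{lm:potential_properties}: since $P_1 \le O(\log(\eta)) \cdot |\Lc_1| \le O(n \log \eta)$ and $P$ is non-increasing in level, each contributing reconstruction produces a segment over which $P$ decreases in expectation by a constant factor (by the geometric-decay argument used in the proof of Lemma \ref{lm:num_level_reconstruct}, which relies only on the state at the time of the reconstruction and remains valid retroactively via the filter invariant, Lemma \ref{lm:invariant_filter}). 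Chaining these geometric decreases across all $K$ contributing reconstructions and taking expectations with respect to the nested histories yields $\Ex{\bT} \le \poly(\log(n), \log(\eta), \tfrac{1}{\eps})$ as claimed.

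The hardest part will be the formal handling of conditional expectations along the chain of reconstructions: I will need to apply Lemma \ref{lm:num_level_reconstruct} to each $R_k$ conditioned on the filtration generated by the algorithm up to the moment $R_k$ is triggered, and then iterate the bound. The random bits used by different reconstruction invocations are independent, which simplifies this step, and the argument parallels the proofs in \cite{banihashem2023dynamic,banihashem2023dynamicmat} that the excerpt explicitly credits.
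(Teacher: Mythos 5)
Your high-level plan (decompose the current structure into segments created by a chain of past reconstructions, bound each segment by Lemma~\ref{lm:num_level_reconstruct}, then control the number of segments via the potential) is a genuinely different route from the paper's, but it has a gap at exactly the step you flag as hardest, and the ``remains valid retroactively via the filter invariant'' claim does not hold as stated.

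The issue is that the potentials $P_i$ of the \emph{current} data structure are not the same as the potentials at the moment each contributing reconstruction $R_k$ ran. Since then, $\Lp_i$ has grown by insertions and $\Lc_i = \Lp_i \setminus D$ has shrunk by deletions, so both the identity $\Lc_i = L_i$ that Lemma~\ref{lm:num_level_reconstruct} assumes and the $|L_i|$-vs-$|B_i|$ relationship underlying its geometric decay no longer hold verbatim. The filter invariant (Lemma~\ref{lm:invariant_filter}) only constrains $\hat{L}_i$ level-to-level; it does not by itself preserve the expected drop $P_i - P_{i+1} \ge \Omega(\eps)|B_i|$ once elements of $B_i$ have been deleted, nor the lower bound $|L_i| \ge P_i / \poly$ once $\Lc_i$ has drifted away from $L_i$. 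To make the geometric decay hold for the \emph{current} state one actually needs the \emph{Deviation invariant} -- specifically $|B_i \cap D| \le \epsDel|B_i|$ with $\epsDel \le \eps/16$ so that the deletions cannot absorb the $\Omega(\eps)|B_i|$ filtering drop, and $|\Lp_i| \le \tfrac32|L_i|$ so that $|\Lc_i| \le 2|L_i|$ -- together with the ``at any point in the stream'' versions of the sampling invariants (Lemmas~\ref{lm:uniform_lazy} and~\ref{lm:bound_countF_always}, rather than the reconstruction-time versions). You never invoke any of these.

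The paper sidesteps the chain-of-reconstructions decomposition entirely: it re-derives the potential-decrease bound directly for the current state (proving a variant of Lemma~\ref{lm:potential_decrease} that holds at any time, by subtracting $|B_i\cap D|$ from the filtering count and using the Deviation invariant to show this is at most $\epsDel|B_i|$), and then applies Lemma~\ref{lm:auxil} to the current sequence of potentials. This avoids the circularity issue ($K$ depending on $\bT$) and the filtration bookkeeping along the chain that you identify as the hardest part. Your approach could perhaps be salvaged, but it would need to re-prove essentially the same invariant-based facts the paper proves, and at that point the decomposition buys nothing. The concrete missing piece is an argument that the current potential sequence still decays geometrically in expectation, which requires the Deviation invariant.
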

\begin{proof}
  We note that this lemma
  is different from Lemma \ref{lm:num_level_reconstruct} because we are claiming that the number of levels is always bounded in expectation,
  while Lemma~\ref{lm:num_level_reconstruct} can only bound $\Ex{\bT- j}$
  right after a call to $\ReconstructF{}(j)$.
  In particular, this also means that we can no longer assume $\Lc_i=L_i$. 
  The proof follows using a similar technique as Lemma \ref{lm:num_level_reconstruct}.
  We first claim that
  a variant of Lemma \ref{lm:potential_decrease}
  still holds.
  More formally, 
  we claim that
  \begin{equation}
    \Ex{P_{i} - \bP_{i+1} \big| \bT \ge i, \bhistPre_i=\histPre_i} \ge \Omega(\epsSamp) \cdot |B_i|
    \label{eq:jul12_1744}
  \end{equation}
  for any $i \ge 1$.
  The proof follows with the exact same logic as the proof of Lemma \ref{lm:potential_decrease}.
  Using the shorthand $\sigmaPre_i$ to denote
  $\bT \ge i, \bhistPre_i=\histPre_i$,
  for all $i\ge 1$,
  \begin{align}
    \notag
    \Ex{P_{i} - \bP_{i+1} \big| \sigmaPre_i}
    &\ge
    \notag
    \Pr{\bm_i \in M_i^* | \sigmaPre_i} \cdot \Ex{P_{i} - \bP_{i+1} \big| \sigmaPre_i, \bm_i \in M_i^*}
    \\&\overset{(a)}{\ge}
    \notag
    (1 - \mO(\frac{\epsSamp}{n^{10}}))\cdot \Ex{P_{i} - \bP_{i+1} \big| \sigmaPre_i, \bm_i \in M_i^*}
    \\&\ge
    \notag
    \frac{1}{2}\cdot \Ex{P_{i} - \bP_{i+1} \big| \sigmaPre_i, \bm_i \in M_i^*}
    \\&=
    \frac{1}{2} \cdot \Exu{m_i \sim \bm_i | \sigmaPre_i, \bm_i \in M_i^*}{\Ex{P_{i} - \bP_{i+1} \big| \sigmaPre_i, \bm_i = m_i}}
    \label{lakjsdf}
  \end{align}
  where for (a) we have now used Lemma \ref{lm:bound_countF_always} instead of \ref{lm:bound_countF}.
  
  Next, we observe that
  \begin{align*}
    P_{i} - P_{i+1}
    &=
    \sum_{e \in \Lc_i} (P(e, i) - P(e, i+1))
    \\&\ge
    \sum_{e \in B_i\cap \Lc_i}
    \ind{P(e, i) - P(e, i+1) \ge 1}
     \\&\ge
    \sum_{e \in B_i\cap \Lc_i}
    \ind{\density{e}{G_i} < \tau_i}
    \\&=
    \sum_{e \in B_i}
    \ind{\density{e}{G_i} < \tau_i}
    - 
    \sum_{e \in B_i \backslash \Lc_i}
    \ind{\density{e}{G_i} < \tau_i}
    \\&\overset{(a)}{\ge}
    \sum_{e \in B_i}
    \ind{\density{e}{G_i} < \tau_i}
    - 
    \sum_{e \in B_i \cap D}
    \ind{\density{e}{G_i} < \tau_i}
    \\&\ge 
    \sum_{e \in B_i}
    \ind{\density{e}{G_i} < \tau_i}
    - 
    |B_i \cap D|
   \\&=
     \abs{B_i} - 
    \abs{\FilterF{}(B_i, G_i , \tau_i) }
    - |B_i \cap D|
    \\&\overset{(b)}{\ge} 
    \abs{B_i} - 
    \abs{\FilterF{}(B_i, G_i, \tau_i) }
    - \epsDel |B_i|
  \end{align*}
  where for $(a)$, we have used the fact that
  $L_i \backslash \Lc_i \subseteq D$, and for $(b)$, we have used Lemma \ref{lm:reconstruction_condition}.
  Therefore, 
  \begin{align*}
    \Ex{P_i - \bP_{i+1} \big| \sigmaPre_i}
     + \frac{\epsDel}{2} |B_i|
    &\ge
    \frac{1}{2}\Exu{m_i \sim \bm_i | \sigmaPre_i, \bm_i \in M_i^*}{\Ex{P_i - \bP_{i+1} \big| 
    \sigmaPre_i, \bm_i = m_i}}
    + \frac{\epsDel}{2} |B_i|
    \\&\ge
    \frac{1}{2}\Exu{m_i \sim \bm_i | \sigmaPre_i, \bm_i \in M_i^*}{
      \abs{B_i} - 
    \abs{\FilterF{}(B_i, G_i, \tau_i) 
    }
    \;
    \Bigg|
    \;
    \sigma^{-m}_i, \bm_i = m_i
    }
    \\&\ge
    \frac{1}{2} \cdot \frac{\epsSamp}{2} |B_i|
  \end{align*}
  where the first inequality follows from
  \eqref{lakjsdf},
  and
  the final inequality follows from Lemma \ref{lm:uniform_lazy}, together with Definition~\ref{def:suitable_sample_size}.

  Since
  $\epsDel \le \epsSamp/16$, it follows that
  \begin{align*}
   \Ex{P_i - \bP_{i+1} \big| \sigmaPre_i}
   \ge 
   \Omega(\epsSamp) \cdot |B_i|.
  \end{align*}

  We can conclude from Lemma~\ref{lm:reconstruction_condition} that
  $|\Lc_i| \le |\Lp_i| \le \frac{3}{2}\abs{L_i} \le 2\abs{L_i}$. Therefore,
  \begin{align*}
    \Ex{P_i - \bP_{i+1} | \sigmaPre_i}
    \ge \Omega(\epsSamp) \cdot |B_i|
    &\ge \Omega(\frac{\epsSamp}{\log_{1+\epsBuck}(\eta)\log_{1+\epsBuck}(\rho)}) \cdot |L_i|
    \\&\overset{(a)}{\ge} \Omega(\frac{\epsSamp}{\log_{1+\epsBuck}(\eta)\log_{1+\epsBuck}(\rho)}) \cdot |\Lc_i|
    \\&\overset{(b)}{\ge} \Omega(\frac{\epsSamp}{\log^2_{1+\epsBuck}(\eta)\log_{1+\epsBuck}(\rho)}) \cdot |P_i|
    \\&\overset{(c)}{\ge} \Omega(\frac{\epsilon^4}{\log^3(\eta)}) P_i
  \end{align*}
  where for $(a)$, we have used $|\Lc_i| \le 2\abs{L_i}$,
  for $(b)$ we have used Lemma~\ref{lm:potential_properties},
  and for $(c)$ we have used the inequality 
  $\log(1+x) \ge \frac{x}{4}$ for $x < 1$
  and the assumption $\epsBuck \le 1/10$.

  Therefore, for some 
  $\eps' = \Omega(\frac{\epsilon^4}{\log^3(\eta)})$,
   we have
  \begin{align*}
      \Ex{\bP_{i+1} | \sigmaPre_i 
      }
      \le 
      (1-\eps')\cdot P_{i},
  \end{align*}
  As before, this further implies
  \begin{align*}
      \Ex{\bP_{i+1} | 
      \bP_{i}=P_i,
      \dots, \bP_1 = P_1
      }
      \le 
      (1-\eps')\cdot P_{i}.
  \end{align*}
  Formally,
  if $P_{i} \ne 0$, 
  we have $\bT \ge i$ given Lemma~\ref{lm:potential_properties}. Therefore,
  by iterated expectation,
  \begin{align*}
      \Ex{\bP_{i+1} | 
      \bP_{i}=P_i,
      \dots, \bP_1 = P_1
      }
      &=
      \Ex{\bP_{i+1} | 
      \bP_{i}=P_i,
      \dots, \bP_1 = P_1, 
      \bT \ge i
      }
      \\&=
      \Ex{
      \Ex{\bP_{i+1} | 
      \bP_{i}=P_i,
      \dots, \bP_1 = P_1, 
      \bT \ge i,
      \bH_i = H_i
      }
      }
      \\&=
      \Ex{
      \Ex{\bP_{i+1} | 
      \bT \ge i,
      \bH_i = H_i
      }
      }
      \\&\le (1-\eps')P_i
  \end{align*}
  as claimed. If $P_i = 0$, then
  \eqref{qjhwq78} holds trivially because
  $P_{i+1}\le P_{i} = 0$. 
  Note however that
  \begin{align*}
      \frac{1}{\eps'} = \poly(\log(\eta), \frac{1}{\epsilon})
  \end{align*}
  and
  \begin{align*}
    P_1 \le \mO(\log_{1+\epsBuck}(\eta)) |\Rc_1|.
  \end{align*}
  The claim now follows from Lemma~\ref{lm:auxil}.
\end{proof}

\subsection{Proof of Theorem \ref{thm:bound_num_query_amor}}
  We start by bounding the count of "direct" queries, which come from insertions and deletions. Here, we don't count queries made indirectly through \ReconstructF{}. Each insertion or deletion can result in at most $\mO(\bT)$ queries, where $\bT$ is the number of levels during the update. According to Lemma~\ref{lm:num_level_total}, this is capped at $\poly(\log(n), \log(\eta), \frac{1}{\epsilon})$ because $|\Lc_1| \le n$.

Moving on to \enquote{indirect} queries made by \ReconstructF{}, we charge the cost of each $\ReconstructF{}(i)$ call to the updates causing it. If $\ReconstructF{}(i)$ is triggered by an insertion, its cost is charged to $\Lp_{i} \backslash L_i$, and if by a deletion, it's charged to $B_i \cap D$.
Each time $\ReconstructF{}(i)$ is called for some $i$, the expected number of queries is $|\Lc_i| \poly(\log(|\Lc_i|), \log(\eta), \frac{1}{\epsilon})$. However, this cost is spread across at least $\frac{|\Lc_i|}{\poly(\log(\rho), \log(\eta), \frac{1}{\eps})}$ updates due to the reconstruction conditions (The lower bound is chosen considering the reconstruction condition of deletion and size of $B_i$ and will clearly also hold if $\ReconstructF{}(i)$ is triggered by an insertion, because in that case $|\Lp_{i} \backslash L_i|$ is at least $\frac{1}{3}|\Lp_i| \geq \frac{1}{3}|\Lc_i|$). Hence, the cost of each charge is at most $\poly(\log(\eta),\frac{1}{\eps})$ (note that $|\Lc_i| \leq n$, and $\eta$ has both $\rho$ and $n$ as factors). Now, since each update can be charged by  $\ReconstructF{}(i)$ only once and only if when the update happens $T > i$ and level $i$ gets affected, we can say each update is charged at most $\poly(\log(\eta), \frac{1}{\eps})$ for each of the levels it affects. And as the expected number of levels during the update is at most $\poly(\log(\eta), \frac{1}{\eps})$ by Lemma~\ref{lm:num_level_total}, the claim follows. It's important to note that the random bits used to limit the expectation of $\bT$ and the ones used to limit the queries for each reconstruction are separate. Since the value $\bT$ is known at the update time, it relies on the random bits used before the update. In contrast, the number of queries for each $\ReconstructF{}$ depends on random bits used after (or at the time of) the update.

\section{Invariant proofs}
\label{app:invar_proofs}
The following two lemmas holds given the conditions in the algorithm for insertion and deletion.
\begin{lemma}\label{lm:invariant_filter}
    For all $i \in [T+ 1]$,
    $\hat{L}_i = \filter(\hat{L}_{i-1}, G_i, \tau)$.
\end{lemma}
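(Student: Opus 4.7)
The plan is to prove this invariant by induction on the sequence of calls to \InitF{}, \InsertF{}, and \DeleteF{}, showing that after each call the identity $\hat{L}_i = \filter(\hat{L}_{i-1}, G_{i-1}, \tau)$ holds for every $i \in [T+1]$. The base case is immediate from \InitF{}: since $D = \emptyset$ we have $\hat{L}_i = L_i$, and the line $L_1 \gets \{e \in L_0 : d(e|G_0) \ge \tau\}$ together with the reconstruction loop $L_{i+1} \gets \{e \in L_i : d(e|G_i) \ge \tau\}$ builds exactly $L_{i+1} = \filter(L_i, G_i, \tau)$ at every level.

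For the inductive step with \InsertF{}(e), the key structural observation is that $G_{i-1}$ does not change during an insertion except via a call to \ReconstructF{}(j), and any such reconstruction resets $\Lp_{i}$ for every $i \ge j$. For levels $i$ not touched by a reconstruction, I would argue two things. First, if the insert loop adds $e$ to $\Lp_i$, then the check $d(e|G_{i-1}) \ge \tau$ together with the inductively established fact $e \in \hat{L}_{i-1}$ places $e$ in $\filter(\hat{L}_{i-1}, G_{i-1}, \tau)$. Second, if the loop breaks at level $i$ because $d(e|G_{i-1}) < \tau$, then $e$ cannot already be in $\Lp_i$ from an earlier insertion: any such earlier insertion would have required $d(e|G_{i-1}) \ge \tau$ at that time, and since neither $G_{i-1}$ nor $\Lp_i$ has been reset in between (the two events are coupled through reconstruction), this inequality would still hold, contradicting the current break. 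Combined with $e \notin D$ after the insertion, this gives $e \notin \hat{L}_i$, matching the right-hand side. For levels that \emph{are} reconstructed, the identity follows directly from the line $L_i \gets \Lp_i \setminus D$ and the subsequent filter-based construction of higher levels, together with the fact that reconstructed $L$-sets are disjoint from $D$ so $\hat{L}_i = L_i$.

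The \DeleteF{}(e) case is the cleanest: moving $e$ into $D$ removes $e$ from $\hat{L}_i$ for every $i$ with $e \in \Lp_i$, and simultaneously removes $e$ from $\hat{L}_{i-1}$, which drops $e$ from $\filter(\hat{L}_{i-1}, G_{i-1}, \tau)$; no other element's membership on either side changes, so the identity is preserved. If the deletion triggers \ReconstructF{}(i), the rebuilt levels satisfy the identity by the same construction argument used above.

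The main obstacle, as highlighted in the insertion case, is the potential inconsistency between a stale element left in $\Lp_i$ by an earlier insertion and the present values of $G_{i-1}$ and $D$. Handling this requires the coupling argument that $G_{i-1}$ cannot move without triggering a reconstruction that simultaneously resets $\Lp_j$ for all $j \ge i$, so any previously verified density check remains valid as long as the element still sits in $\Lp_i$. Everything else in the proof is routine bookkeeping around the definition $\hat{L}_i = \Lp_i \setminus D$ and the explicit filter-based construction in \ReconstructF{}.
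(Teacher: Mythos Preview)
Your proposal is correct and follows the same inductive approach as the paper, just with far more detail: the paper's own proof is essentially a one-sentence remark that the identity holds by construction in \ReconstructF{} and is preserved by \InsertF{}/\DeleteF{} via the density checks. You also implicitly correct the typo in the stated lemma (the filter should be with respect to $G_{i-1}$, not $G_i$, matching the Filter invariant and how the lemma is actually used later in the paper), and your coupling argument---that $G_{i-1}$ cannot change without a reconstruction that simultaneously resets $\Lp_i$---is exactly the mechanism the paper relies on but leaves implicit.
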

\begin{proof}
    The lemma holds by construction of $L_i$ in 
    \ReconstructF{} and is preserved by insertion and deletion.
    given the conditions for adding an element to $L_i$. 
\end{proof}
\begin{lemma}\label{lm:invariant_subset}
    For all $i \in [T + 1]$,
    $\overline{L}_{i} \subseteq \overline{L}_{i - 1}$.
\end{lemma}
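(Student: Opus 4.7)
The plan is to proceed by induction over the sequence of operations performed on the data structure (the initial call to \InitF{}, and each subsequent \InsertF{}, \DeleteF{}, and any \ReconstructF{}(i) they trigger), showing that each such operation preserves the invariant $\overline{L}_i \subseteq \overline{L}_{i-1}$ for all $i \in [T+1]$.

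For the base case, after \InitF{}, the assignments $\overline{L}_0 \gets V$ and $\overline{L}_1 \gets L_1 = \{e \in L_0 : d(e|G_0) \ge \tau\}$ immediately give $\overline{L}_1 \subseteq L_0 = \overline{L}_0$. The invariant for higher levels is then established in the subsequent call to \ReconstructF{}(1): inspecting Algorithm~\ref{alg:offline}, in each iteration of the while loop we have $L_{i+1} \gets \{e \in L_i : d(e|G_i) \ge \tau\}$ followed by $\overline{L}_{i+1} \gets L_{i+1}$, and at the top of that iteration (or on entry to the function, for the initial level) the algorithm sets $\overline{L}_i \gets L_i$. Therefore $\overline{L}_{i+1} = L_{i+1} \subseteq L_i = \overline{L}_i$ at that moment.

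For the inductive step, I would handle each type of operation separately. An \InsertF{}(e) call adds $e$ to $\overline{L}_0$ and then iterates through $i = 1, 2, \dots, T+1$, adding $e$ to $\overline{L}_i$ only after adding it to $\overline{L}_{i-1}$ in the preceding iteration (and only if the density check has not broken the loop). So for any new element added to $\overline{L}_j$ for some $j \ge 1$, it is also added to $\overline{L}_{j-1}$; for any other element, its $\overline{L}$-membership is not altered by insertion. A \DeleteF{}(e) call merely adds $e$ to $D$ and never adds any element to any $\overline{L}_j$, so it preserves the invariant vacuously. Finally, any \ReconstructF{}(i) triggered by these updates only shrinks $\overline{L}_i$ (via $\overline{L}_i \gets \overline{L}_i \setminus D$), while levels $j < i$ remain untouched and levels $j > i$ are rebuilt exactly as in the base case argument. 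Hence the invariant is preserved at every level.

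I do not expect this proof to require any delicate technical machinery; the invariant is a direct structural consequence of the algorithm. The only point requiring care is to note that insertion is the sole mechanism by which a new element can enter some $\overline{L}_j$ outside of a reconstruction, and that the insertion loop has been deliberately structured so that an element reaches $\overline{L}_j$ only by first passing through $\overline{L}_{j-1}$.
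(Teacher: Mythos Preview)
Your proposal is correct and follows essentially the same approach as the paper: verify the invariant holds by construction after \ReconstructF{}, then check that \InsertF{} and \DeleteF{} preserve it. Your argument is considerably more explicit than the paper's two-sentence proof, and in particular your observation that \InsertF{} only adds $e$ to $\overline{L}_j$ after having added it to $\overline{L}_{j-1}$ is exactly the missing justification behind the paper's terse claim that the invariant ``is preserved by insertion and deletion.'' One minor wording issue: inside \ReconstructF{}, the assignment $\overline{L}_i \gets L_i$ happens only once on entry (Line~\ref{line:level_first_line}), not at the top of each while-iteration; for subsequent levels the equality $\overline{L}_i = L_i$ comes from the $\overline{L}_{i+1} \gets L_{i+1}$ at the end of the previous iteration --- but this does not affect your conclusion.
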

\begin{proof}
    The lemma holds by construction of $L_i$ in 
    \ReconstructF{}.
    Additionally, it is preserved by insertion and deletion because if $L_i$ is reconstructed, then $L_{i+1}$ is reconstructed as well.
\end{proof}
\begin{lemma}
\label{lm:reconstruction_condition}
    For all $i \in [T]$,
    $|B_i \cap D| \le \epsDel |B_i|$
    and $|\overline{L}_i| \le \frac{3}{2} |L_i|$
\end{lemma}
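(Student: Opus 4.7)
The plan is to proceed by induction on the number of updates processed, showing that \InitF{} establishes the invariant and that each subsequent call to \InsertF{} or \DeleteF{} preserves it. The base case is immediate: after \InitF{} we have $D = \emptyset$ and $\Lp_i = L_i$ for all $i$, so both $|B_i \cap D| = 0$ and $|\Lp_i| = |L_i|$ trivially satisfy the stated bounds. The rest of the argument mirrors the structure of the two update procedures, exploiting the fact that each procedure explicitly checks the relevant threshold and triggers a reconstruction whenever it would otherwise be violated.

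For an insertion of an element $e$, I would first note that the line $D \gets D \setminus \{e\}$ can only shrink $D$, so $|B_j \cap D|$ is non-increasing during the insertion and the first bound is preserved at every level whose $B_j$ is not rebuilt. The interesting case is the second bound. The insertion loop iterates $i = 1, \dots, T+1$ and at each visited level either (i) the check $|\Lp_i| \ge \tfrac{3}{2} |L_i|$ fails, in which case $|\Lp_i| < \tfrac{3}{2}|L_i|$ holds directly, or (ii) the check succeeds, triggering \ReconstructF{}$(i)$, which resets $\Lp_j = L_j$ for every $j \ge i$ via the line $L_i \gets \Lp_i \setminus D$ followed by $\Lp_i \gets L_i$ and the subsequent while-loop. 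In the latter case, the newly chosen $B_j$ for $j \ge i$ is also disjoint from $D$ since $B_j \subseteq L_j = \Lp_j \setminus D$, so both bounds hold at those levels as well.

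For a deletion of $e$, neither $\Lp_j$ nor $L_j$ is modified outside of a possible reconstruction, so the second bound is automatically preserved (and re-established at levels $\ge i^*$ if \ReconstructF{}$(i^*)$ is triggered). For the first bound, the loop inspects $i = 1, \dots, T$ after $e$ has been added to $D$: for each level $j$ visited without triggering reconstruction, the failing check directly gives $|D \cap B_j| < \epsDel |B_j|$; and for the first level $i^*$ where the check succeeds, together with all higher levels (handled inside the same call to \ReconstructF{}$(i^*)$), the new buckets are chosen from $L_j = \Lp_j \setminus D$ and therefore cannot intersect $D$.

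The main care point in the writeup is tracking the control flow of both updates consistently: when a reconstruction is triggered at level $i^*$, one must argue separately about levels $j < i^*$ (which are untouched by the reconstruction and rely on the failing check combined with the inductive hypothesis) and levels $j \ge i^*$ (which are freshly rebuilt from $\Lp_j \setminus D$ and therefore trivially satisfy both bounds). There is no deep technical obstacle here; rather, the proof is bookkeeping, and it goes through cleanly precisely because the checks in \InsertF{} and \DeleteF{} are aligned exactly with the quantities appearing in the invariant, so a failing check immediately yields the required strict inequality.
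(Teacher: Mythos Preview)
Your proposal is correct and takes the same approach as the paper: the invariant is exactly the reconstruction condition checked in \InsertF{} and \DeleteF{}, so it is enforced by construction. The paper's own proof is a single sentence to this effect; your version simply spells out the case analysis (levels below versus at-or-above the reconstruction point, insertion versus deletion) that the paper leaves implicit.
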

\begin{proof}
    The lemma holds by the reconstruction condition for insertion and deletion.
\end{proof}

\begin{lemma}\label{lm:final_level_stronger}
    $\hat{L}_{T+1} = \overline{L}_{T+1} = L_{T+1} =  \emptyset$ and
    $\hat{L}_i, L_i, \overline{L}_i \ne \emptyset$ for $i \in [1, T]$.
\end{lemma}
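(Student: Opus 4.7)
The plan is to verify the invariant by induction over the sequence of updates, tracking how \InitF{}, \ReconstructF{}, \InsertF{}, and \DeleteF{} modify the relevant sets. For both halves of the statement the argument is essentially structural and relies only on how $T$ is defined together with the deviation invariant from Lemma~\ref{lm:reconstruction_condition}. The only non-trivial inputs I expect to need are Lemmas~\ref{lm:invariant_subset} and~\ref{lm:reconstruction_condition}.

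For the emptiness at level $T+1$, observe that the while-loop in \ReconstructF{} terminates exactly when the current $L_i$ is empty, at which point $T$ is set to $i-1$, so by construction $L_{T+1} = \emptyset$ immediately after any reconstruction. Within the loop body $\Lp_{i+1}$ is set to $L_{i+1}$ right after $L_{i+1}$ is computed, so $\Lp_{T+1} = L_{T+1} = \emptyset$, and therefore $\hat L_{T+1} = \Lp_{T+1}\setminus D = \emptyset$. One then checks that neither \InsertF{} nor \DeleteF{} breaks this between reconstructions: \DeleteF{} never writes to $\Lp$; \InsertF{} is the only procedure that can append to $\Lp_{T+1}$, but its loop is guarded so that as soon as the index reaches $T+1$ it calls \ReconstructF{}$(T+1)$ and breaks out, re-establishing the invariant with a (possibly updated) $T$.

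For non-emptiness at $i \in [1, T]$, the set $L_i$ is non-empty by construction, since the while-loop only processed level $i$ when $L_i \ne \emptyset$, and $L_i$ is never altered outside of \ReconstructF{}. This gives $\Lp_i \supseteq L_i \ne \emptyset$ using Lemma~\ref{lm:invariant_subset} together with the fact that \InsertF{} only grows $\Lp_i$. The main subtlety is showing $\hat L_i = \Lp_i \setminus D \ne \emptyset$, since deletions add to $D$ and could in principle wipe out every non-deleted element of $L_i$. This is exactly where the deviation invariant enters: the largest bucket satisfies $B_i \subseteq L_i \subseteq \Lp_i$ and $|B_i \cap D| \le \epsDel |B_i| < |B_i|$, so $\hat L_i \supseteq B_i \setminus D \ne \emptyset$. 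The main obstacle in writing the proof cleanly is formalizing the induction while accounting for the fact that $T$ itself can change inside an update; once one commits to checking the invariant only at quiescent points (after each outermost update call returns), the remaining case analysis is immediate.
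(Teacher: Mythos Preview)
Your proposal is correct and follows essentially the same approach as the paper: establish the invariant immediately after \ReconstructF{} by the definition of $T$, then argue that insertions and deletions either preserve it or trigger a reconstruction that re-establishes it. Your write-up is more careful than the paper's two-sentence proof---in particular, your explicit use of the deviation invariant ($B_i\setminus D\ne\emptyset$) to certify $\hat L_i\ne\emptyset$ spells out the step the paper leaves implicit; the one small slip is that Lemma~\ref{lm:invariant_subset} (which asserts $\overline L_i\subseteq\overline L_{i-1}$) is not what gives $L_i\subseteq\overline L_i$, but your accompanying observation that $\overline L_i$ is initialized to $L_i$ and only grows already suffices.
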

\begin{proof}
    The lemma holds after invoking \ReconstructF{} by definition of $T$. Additionally, it is preserved by insertion and deletion because 
    if $\hat{L}_i$ for $i \in [T]$ becomes empty or $\overline{L}_{T+1}$ becomes non-empty then $\ReconstructF{}$ is invoked, ensuring that the statement holds again. 
\end{proof}

\begin{lemma}\label{lm:uniform_lazy}
  For any $i\ge 1$, 
  and any $H_i$ such that
  $\Pr{\bT \ge i, \bH_i = H_i} > 0$,
    \begin{align}
        \Pr{\bold{S}_{i} = S \mid \bold{T} \ge i, \bold{H}_i = H_i}
        = 
        \frac{1}{|X_i|}\ind{S \in X_i}
    \end{align}
    where $|X_i|$ denotes all sequences of length $m_i$ in $L_i$.    
\end{lemma}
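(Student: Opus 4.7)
The plan is an induction over the sequence of updates, using as the base case the fact that uniformity holds immediately after any call to $\ReconstructF{}(j)$ with $j \le i$. Inside Algorithm~\ref{alg:offline}, once $B_i$ and $\bm_i$ have been fixed, the sample $\bS_i$ is drawn explicitly as a uniform random subset of size $\bm_i$ from $B_i$ using fresh randomness independent of the randomness used to construct $\bH_i$ or any later randomness. Therefore, right after such a reconstruction, $\Pr{\bS_i = S \mid \bT \ge i, \bH_i = H_i} = \frac{1}{|X_i|}\ind{S \in X_i}$. This is exactly the content of the single-execution statement referenced by Claim~\ref{lm:uniform_right_after}, which I would invoke here rather than re-derive.

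For the inductive step, I would consider an $\InsertF{}$ or $\DeleteF{}$ call that does \emph{not} trigger $\ReconstructF{}(j)$ for any $j \le i$; by the base case I can restart the induction after any call that does trigger such a reconstruction. Under this assumption, $B_i$, $\bm_i$, and $\bS_i$ are literally untouched by the update: neither procedure writes to these variables outside of $\ReconstructF{}$. The quantities that do change during the update — namely $\bLp_0,\dots,\bLp_i$ in $\InsertF{}$ and $D$ (which governs $|\bB_i \cap D|$) in $\DeleteF{}$ — together with all of the predicates that the two procedures use to decide whether to invoke $\ReconstructF{}(j)$ at any $j \le i$ (the density check $\density{e}{G_{j-1}} < \tau$, the threshold $|\bLp_j| \ge \tfrac{3}{2}|L_j|$, and the threshold $|\bB_j \cap D| \ge \epsDel |\bB_j|$) are all measurable with respect to $\bhistPre_i$. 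In particular, none of them reads the identities of the elements of $\bS_i$. Consequently, the event $\cbr{\bT \ge i, \bH_i = H_i}$ provides no additional information about $\bS_i$ beyond what was already contained in the conditioning on $\bhistPre_i$ and $\bm_i$ at the last reconstruction, and the conditional distribution of $\bS_i$ remains uniform over the $m_i$-sequences of $B_i$.

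The main technical obstacle is formalizing the last independence step cleanly, since the algorithm's random bits are consumed in a data-dependent order across many updates. I would handle this by setting up a filtration in which the random bits are split into disjoint blocks: the bits used by $\apprev$ inside each invocation of $\CalcSampleCountF{}$ (which produce $\bm_j$), the bits used to draw the uniform sequence $\bS_j$ at each reconstruction, and the bits used for later reconstructions at other levels. The bits producing $\bS_i$ are drawn at a single reconstruction time and are never read again unless level $i$ is reconstructed. I would then write $\cbr{\bT \ge i, \bH_i = H_i}$ as a disjoint union over the time $\tau$ of the last $\ReconstructF{}(j)$ with $j \le i$ and the associated post-reconstruction pre-sample history, apply the base case and the independence of the $\bS_i$-bits to each piece, and conclude by the law of total probability. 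This matches the remark in the paper that the proof crucially uses the fact that the reconstruction triggers depend only on $\bhistPre_i$, not on $\bS_i$.
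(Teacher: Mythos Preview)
Your proposal is correct and follows essentially the same approach as the paper: induction over the update stream, with the base case supplied by Claim~\ref{lm:uniform_right_after} after any reconstruction at level $\le i$, and the inductive step relying on the fact that the reconstruction triggers are functions of $\histPre_i$ and never read $\bS_i$. The paper formalizes the inductive step slightly more directly than your filtration/last-reconstruction-time partition, by conditioning at each update on the indicator $\breset_i$ and then on the previous pre-sample history $\bH_i^-$, which renders $\sigma_i$ and $\breset_i=0$ redundant and reduces the claim to the induction hypothesis; your plan would also work but is a bit more machinery than needed.
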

\begin{proof}

    We first prove that \eqref{eq:invariant_uniform} holds immediately after a call to \ReconstructF{}.
    
    \begin{claim}\label{lm:uniform_right_after}
      Assume we call $\ReconstructF{}(j)$ for some $j\le i$ in a data structure with values
      $T^{-}, L_0^-, \dots$, satisfying $T^- \ge i$,
      obtaining the new (random) values
      $\bT, \bL_0, \dots$.
      Then for all sets $S$,
      \begin{equation*}
        \Pr{\bS_i = S | \bT \ge i, \bH_i = H_i} =
        \frac{1}{|X_i|} \cdot \ind{S \in X_i}
        .
      \end{equation*}
    \end{claim}
    \begin{proof}
    We observe that
    before $S_i$ is sampled
    in the \ReconstructF{} algorithm,
    all of the values compromising $H_i$ are already determined and will not 
    change after $S_i$ is sampled.
    Therefore, since the claim holds
    when $S_i$ is sampled (by construction of $S_i$), it holds afterwards as well.

    \end{proof}
    
    Next, we prove Lemma~\ref{lm:uniform_lazy}  by showing that \eqref{eq:invariant_uniform} is
    preserved after insertions and deletions.
  At the beginning of the stream, the lemma holds trivially since
  the event $\bT \ge i$ is impossible because of $T = 0 < 1 \le i$.
  Assuming the lemma holds before some update, we will show that it holds after the update as well.

  Formally, let's consider an update in the form of either inserting or deleting an element $v$. We denote the values of the data structure before the operation as $\bT^{-}, \bL_0^{-}, \dots$ and the values afterward as $\bT, \bL_0, \dots$. For a fixed $i$, we will prove that \eqref{eq:invariant_uniform} holds.

  The proof revolves around analyzing two cases based on whether $\LevelF{}(i)$ was triggered by the insertion or deletion operation. In the first case, result follows from Claim~\ref{lm:uniform_right_after}. In the second case, we rely on the induction hypothesis which asserted that \eqref{eq:invariant_uniform} held for the values $\bT^{-}, \bL_0^{-}, \dots$.

  We will now proceed with a formal proof.

Let $\breset{}_i$ be a random variable that takes the value $1$ if $\LevelF{}(i)$ was called because of the update and $0$ otherwise. Define the event $\sigma_i$ as $\bT \ge i \land \bH_i = H_i$. We need to prove that $\Pr{\bS_i = S | \sigma_i} = \frac{1}{|X_i|} \cdot \ind{S \in X_i}$.
By conditioning on $\breset{}_i$, we can express $\Pr{\bS_i=S | \sigma_i}$ as follows:
\begin{equation}
  \Pr{\bS_i = S| \sigma_i} = 
  \Exu{\reset{}_i \sim \breset{}_i | \sigma_i}{
    \Pr{\bS_i = S| \sigma_i, \breset{}_i=\reset{}_i}
  }.
  \label{eq:s_sigma_decompose_l}
\end{equation}

Thus, it suffices to prove

\begin{equation}
  \Pr{\bS_i = S| \sigma_i, \breset{}_i=\reset{}_i}
  =
  \frac{1}{|X_i|} \cdot \ind{S \in X_i}
  ,
  \label{eq:jul11_1815}
\end{equation}

for both $\reset{}_i=0$ and $\reset{}_i=1$. For $\reset{}_i=1$, the claim holds by Claim~\ref{lm:uniform_right_after} since, by definition, $\reset{}_i=1$ implies that \ReconstructF{}(j) was called for some $j \le i$.

  Therefore, we focus on the case of $\reset{}_i=0$. We begin by observing that $\sigma_i, \breset{}_i=0$ implies $\bT^- \ge i$. This is because $\breset{}_i=0$ indicates that $\ReconstructF{}(j)$ for any $j \le i$ was not called. Consequently, if $\bT^-$ were strictly less than $i$, then $\bT$ would equal $\bT^-$ (as $\ReconstructF{}$ can only be called for values up to $T+1$ and if $\ReconstructF{}$ is not called, then $T$ does not change). However, this is not possible since $\bT \ge i$.

Given that $\bT^- \ge i$, we can condition on the value of the \emph{previous history} of level $i$. More formally, we define the random variable $\bH_i^-$ as:

\begin{equation*}
  \bH_i^- := (\bLp_0^-, \bLp_1^-, \dots, \bLp_i^-, \bL_0^-, \bL_1^-, \dots, \bL_i^-, \bS_1^-, \dots, \bS_{i-1}^-, \bm_i^-).
\end{equation*}

By the law of iterated expectation, we express:

\begin{align}
  \notag
  \Pr{\bS_i = S| \sigma_i, \breset{}_i=0}
  &=
  \Pr{\bS_i = S| \sigma_i, \breset{}_i=0, \bT^-\le i}
  \\&=
  \Exu{H_i^- \sim \bH_i^- | \sigma_i, \breset{}_i=0
  , \bT^-\le i}{
    \Pr{\bS_i = S| \sigma_i, \breset{}_i=0, \bT^-\le i, \bH_i^- = H_i^-}
  }
  \label{eq:Jul10_2249}
\end{align}

where the expectation is taken over all $H_i^-$ with positive probability.

  We now observe that conditioned on $\bT^-\le i, \bH_i^-=H_i^-$, the value of $\breset{}_i$ always equals $0$. This is because $\reset{}_i$ is a function of $(\Lp_{0}, \dots, \Lp_{i}, L_0,\dots, L_i, D)$, which is determined by $H_i$. Note that $D$ is deterministic since it contains all the deleted elements in the stream and is independent of our algorithm. Therefore, since we only consider $H_i^-$ with positive probability, we can drop the conditioning on $\breset{}_i=0$ in the $\Pr{\bS_i = S| \sigma_i, \breset{}_i=0, \bT^-\le i, \bH_i^- = H_i^-}$ term of \eqref{eq:Jul10_2249} since it is redundant.

We can similarly drop $\sigma_i$. This is because, as $\breset{}_i=0$, the value of $\bH_i$ is deterministic conditioned on $\bH_i^-=H_i^-$. Notably, the values of $\bL_1, \dots, \bL_{i}$ are going to be $L_1^-, \dots, L_i^-$. The same can be said for $\bS_1, \dots, \bS_{i-1}$ and $\bm_i$. As for $\bLp_1, \dots, \bLp_{i}$, even though their value may be different from $\Lp_1^-, \dots, \Lp_{i}^-$, it is \emph{still deterministic} conditioned on $\bH_i^-=H_i^-$ as the decision to add elements in Line \ref{line:insert_add} is based on the values in $H_i^-$ only.

Given the above observation, we can rewrite $\Pr{\bS_i = S| \sigma_i, \breset{}_i=0}$ as:

\begin{align*}
  \Pr{\bS_i = S| \sigma_i, \breset{}_i=0}
  &=
  \Exu{H_i^- \sim \bH_i^- | \sigma_i, \breset{}_i=0}{
    \Pr{\bS_i = S| \bT^-\le i, \bH_i^- = H_i^-}
  }
\end{align*}

We can further replace $\bS_i=S$ with $\bS_i^-=S$ as $\breset{}_i=0$ implies $\bS_i = \bS_i^-$.
  
  It follows that
  \begin{align*}
    \Pr{\bS_i = S| \sigma_i, \breset{}_i=0}
    &=
    \Exu{H_i^- \sim \bH_i^- | \sigma_i, \breset{}_i=0}{
      \Pr{\bS_i^- = S| \bT^-\le i, \bH_i^- = H_i^-}
    }
    \\&\overset{(a)}{=}
    \Exu{H_i^- \sim \bH_i^- | \sigma_i, \breset{}_i=0}{
      \frac{1}{|X_i^{-}|} \cdot \ind{S \in X_i^{-}}
    }
    \\&\overset{(b)}{=}
    \Exu{H_i^- \sim \bH_i^- | \sigma_i, \breset{}_i=0}{
      \frac{1}{|X_i|} \cdot \ind{S \in X_i}
    }
    \\&=
    \frac{1}{|X_i|} \cdot \ind{S \in X_i}
  \end{align*}
  where for $(a)$, we have used the induction assumption,
  and for $(b)$ we have used the fact that
  $X_i^{-}=X_i$ because of $\breset{}_i=0$. 
  We have therefore proved \eqref{eq:jul11_1815} 
  for both $\reset{}_i=0$ and $\reset{}_i=1$, which completes the proof given \eqref{eq:s_sigma_decompose_l}.
\end{proof}

\begin{lemma}\label{lm:bound_countF}
  Consider a call to $\CalcSampleCountF{}(L', G', \tau')$ with values satisfying $\FilterF{}(L', G', \tau') = L'$ and $L' \ne \emptyset$. The number of queries made by $\CalcSampleCountF{}$ is bounded by
  \begin{math}
    \mO\left(|L'| \cdot \frac{\log(n)}{\epsSamp^3}\right).
  \end{math}
  Furthermore, the output is a suitable sample size (Definition \ref{def:suitable_sample_size}) with probability at least $1 - \mO(\frac{\epsSamp}{n^{10}})$.
\end{lemma}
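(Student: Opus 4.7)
The plan is to split the lemma into two independent parts: a deterministic query-complexity count, and a correctness argument based on concentration of the $t$ independent simulations.

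For the query bound, I would simply tally. The outer loop of \CalcSampleCountF{} runs $t = \ceil{\constt} = \mO(\epsSamp^{-2}\log n)$ times, and each call to \apprev{} makes one density query per element of $L'$ (to evaluate $d(e_i\mid G'')$ in the inner loop), for a total of $\mO(|L'|)$ queries per simulation. Multiplying yields $\mO(|L'|\log(n)\,\epsSamp^{-2})$, which is within the claimed $\mO(|L'|\log(n)\,\epsSamp^{-3})$ bound.

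For correctness, let $p_r := \Ex{\bX(r)}$ be the true acceptance probability at position $r$ (where $\bX$ is a single \apprev{} output on $L',G',\tau'$) and let $\hat p_r := \tfrac{1}{t}\sum_{j=1}^t X_j(r)$ be the empirical mean used in Line~\ref{line:find_m'}. Since the $X_j(r)$ are independent $\{0,1\}$-variables with mean $p_r$, Hoeffding's inequality gives $\Pr{|\hat p_r - p_r|>\epsSamp/2}\le 2\exp(-t\epsSamp^2/2)\le 2\epsSamp^2/n^{24}$ for our choice of $t$. A union bound over all $r\in[1,|L'|+1]$ (of which there are at most $n+1$) bounds the probability of the bad event $\mathcal E := \{\exists r:\,|\hat p_r - p_r|>\epsSamp/2\}$ by $\mO(\epsSamp/n^{10})$.

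On $\neg\mathcal E$, I would then verify both conditions of Definition~\ref{def:suitable_sample_size} for the output $m^* := m'-1$. For every $r\in[1,m'-1]$, the minimality of $m'$ forces $\hat p_r \ge 1-\epsSamp$, hence $p_r \ge 1-\tfrac{3}{2}\epsSamp \ge 1-2\epsSamp$, which is the first bullet. For the second bullet, $\hat p_{m'}<1-\epsSamp$ yields $p_{m'}<1-\tfrac{\epsSamp}{2}$. The requirement $m^* \le |L'|$ follows automatically because the $(|L'|+1)$-st coordinate of $\bX$ is initialized to $0$ and never modified, so $\hat p_{|L'|+1}=0<1-\epsSamp$ and therefore $m'\le|L'|+1$.

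The main subtlety I expect is the clean handling of the two edge cases $m^*=0$ (the first bullet is vacuous) and $m^*=|L'|$ (the second bullet uses the deterministic fact $\bX(|L'|+1)=0$); both must be explicitly acknowledged so that the returned value is always a valid element of $M^*_i$. No monotonicity or submodularity property of $p_r$ is needed: the entire argument is a Chernoff/Hoeffding concentration statement, engineered so that the empirical cutoff $1-\epsSamp$ lies exactly midway between the two target thresholds $1-2\epsSamp$ and $1-\epsSamp/2$, leaving the required $\epsSamp/2$ slack on each side for the concentration error.
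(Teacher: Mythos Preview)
Your proposal is correct and mirrors the paper's own proof: both split into a direct tally of the $t=\mO(\epsSamp^{-2}\log n)$ simulations times $|L'|$ queries each, followed by Hoeffding's inequality with slack $\epsSamp/2$ and a union bound over the $\mO(n)$ positions, so that on the good event the empirical cutoff at $1-\epsSamp$ certifies the two thresholds $1-2\epsSamp$ and $1-\epsSamp/2$ of Definition~\ref{def:suitable_sample_size}. Your explicit handling of the edge cases $m^*=0$ and $m^*=|L'|$ (via the deterministic fact $\bX(|L'|+1)=0$) is a nice touch that the paper leaves implicit.
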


\begin{proof}
  To bound the number of queries, note that there will be $\constt$ calls to \apprev{}, and each call makes $|L'|$ queries, implying the first part of the lemma's statement.

  Focus on proving that the output is suitable with a probability of at least $1 - \mO(\frac{\epsSamp}{n^{10}})$.

  For a number $r$, define the value $\bX(r)$ as explained in Definition~\ref{def:suitable_sample_size}.
  For the number $m'$ defined in \CalcSampleCountF{}, we want to prove that $\Ex{\bX(r)} \ge 1-2\epsilon \text{ for all }r \in [1, m'-1]$ and $\Ex{\bX(m')} \le 1-\frac{\epsilon}{2}$ with probability at least $1 - \mO(\frac{\epsSamp}{n^{10}})$.

  Hoeffding's inequality implies that for any $r \in [1, m']$
      \begin{align*}
          \Pr{\abs{\frac{\sum_{i=1}^t X_i(r)}{t} - \Ex{\bX(r)} } \ge \frac{\epsSamp}{2}} \le 2e^{-\frac{t\epsSamp^2}{4}} \le
          \frac{\epsSamp}{n^{12}}\text{,}
      \end{align*}
      where the second inequality follows from the assumption
      $t = \ceil{\constt}$. 
      Applying union bound over $r$ implies that
      \begin{align}
          \Pr{\forall r \in [1, m']: \abs{\frac{\sum_{i=1}^t X_i(r)}{t} - \Ex{\bX(r)} } \ge \frac{\epsSamp}{2}} \le \frac{\epsSamp}{n^{{10}}}\text{.}
          \label{eq:oct2_1229}
      \end{align}
      According to Line~\ref{line:find_m'},
      $\frac{\sum_{i=1}^t X_i(r)}{t} \ge 1-\epsSamp$ for all $r < m'$ and $\frac{\sum_{i=1}^t X_i(r)}{t} \le 1 - \epsSamp$ for $r=m'$. 
      Together with Equation \eqref{eq:oct2_1229}, this implies that
      with probability at least $1-\frac{\epsSamp}{n^{10}}$,
      \begin{align*}
          \Ex{\bX(r)} \ge 1- 2\epsSamp\text{ for all }j \in [1, m' - 1] \text{ and } \Ex{\bX(m')} \le 1-\frac{\epsSamp}{2}\text{.}
      \end{align*}
      This proves that $m'-1$ returned by \CalcSampleCountF{} is a suitable sample size with probability at least $1 - \mO(\frac{\epsSamp}{n^{10}})$.

\end{proof}

\begin{lemma}\label{lm:bound_countF_always}
  For any $i \ge 1$ and at any point in the stream, the following inequality holds:
  \begin{equation*}
    \Pr{\bm_i \in M_i^* | \bT \ge i, \bhistPre_{i} = \histPre_{i}} \ge 1 - \mO(\frac{\epsSamp}{n^{10}})
  \end{equation*}
\end{lemma}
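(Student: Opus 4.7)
The plan is to prove this sample-size invariant by induction on the number of updates, closely mirroring the strategy used for the sample uniformity invariant in Lemma~\ref{lm:uniform_lazy}. The base case is trivial: at initialization $T = 0$, so the event $\{\bT \ge i\}$ is impossible for every $i \ge 1$ and the claim is vacuous. For the inductive step, I consider one update, denote the pre-update quantities by $\bT^-, \bhistPre_i^-, \bm_i^-$ and the post-update ones by $\bT, \bhistPre_i, \bm_i$, and introduce the indicator $\breset{}_i$ that equals $1$ if $\ReconstructF{}(j)$ was invoked during this update for some $j \le i$. Conditioning on $\breset{}_i$ via the law of total expectation, it suffices to bound $\Pr{\bm_i \in M_i^* \mid \bT \ge i, \bhistPre_i = \histPre_i, \breset{}_i = r}$ separately for $r = 1$ and $r = 0$.

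In the reset case $r = 1$, the value $\bm_i$ was produced by the call $\CalcSampleCountF{}(B_i, G_{i-1}, \tau_i)$ executed inside $\ReconstructF{}(j)$. The inputs satisfy $\FilterF{}(B_i, G_{i-1}, \tau_i) = B_i$, because every element placed into bucket $B_i$ has marginal density at least $\tau_i \ge \tau$ with respect to $G_{i-1}$. Lemma~\ref{lm:bound_countF} then guarantees that the returned value is a suitable sample size with probability at least $1 - O(\epsSamp/n^{10})$, which is exactly $\bm_i \in M_i^*$. Formally this is packaged as a \enquote{right after \ReconstructF{}} claim analogous to Claim~\ref{lm:uniform_right_after}, observing that once $\bm_i$ is written nothing in the remainder of $\ReconstructF{}$ alters either $\bm_i$ or the triple $(B_i, G_{i-1}, \tau_i)$ that defines $M_i^*$.

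In the no-reset case $r = 0$, both $\bm_i$ and the triple $(B_i, G_{i-1}, \tau_i)$ defining $M_i^*$ are untouched by the update, so $\bm_i = \bm_i^-$ and $M_i^* = M_i^{*,-}$; moreover $\bT \ge i$ forces $\bT^- \ge i$ (otherwise no level $\ge i$ would exist for the update to leave alone). I then condition on the pre-update history $\bhistPre_i^- = \histPre_i^-$ via the law of iterated expectation. The key observation, identical to the one used in the no-reset portion of Lemma~\ref{lm:uniform_lazy}, is that $\breset{}_i$ is determined by $\bhistPre_i^-$ together with the adversary-specified update and is therefore independent of $\bm_i^-$ and of $\bS_1^-, \ldots, \bS_{i-1}^-$, because the reset triggers in \InsertF{} and \DeleteF{} only inspect $|\Lp_i|, |L_i|$, and $|D \cap B_i|$, all of which are part of $\bhistPre_i^-$. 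Moreover, conditional on $\bhistPre_i^-$ and $\breset{}_i = 0$, the post-update history $\bhistPre_i$ is a deterministic function of $\bhistPre_i^-$ and the update. This lets me rewrite the post-update conditional probability as a pre-update conditional probability and apply the induction hypothesis, yielding the same $1 - O(\epsSamp/n^{10})$ bound.

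The main obstacle is exactly the measure-theoretic bookkeeping in the no-reset case, matching the difficulty encountered in Lemma~\ref{lm:uniform_lazy}: one must spell out that conditioning on the post-update event $\{\bhistPre_i = \histPre_i\}$ is equivalent, modulo the deterministic update map, to conditioning on an appropriate pre-update event, and that $\breset{}_i = 0$ does not correlate with $\bm_i^-$. Once these independencies are verified, the reduction to the induction hypothesis is immediate and the proof concludes.
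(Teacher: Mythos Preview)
Your proposal is correct and follows essentially the same approach as the paper: induction over the update stream, splitting on the indicator $\breset_i$, invoking Lemma~\ref{lm:bound_countF} in the reset case, and in the no-reset case conditioning on the pre-update history $(\bhistPre_i)^-$ to reduce to the induction hypothesis after noting that $\bm_i$ and $M_i^*$ are unchanged. Your write-up is in fact slightly more careful than the paper's in a couple of places (you explicitly check $\FilterF{}(B_i, G_{i-1}, \tau_i) = B_i$ before applying Lemma~\ref{lm:bound_countF}, and you spell out why $\breset_i$ is measurable with respect to $\bhistPre_i^-$ and the deterministic update).
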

\begin{proof}
  We establish the claim through induction on the update stream. Initially, the claim trivially holds as $\bT \ge i$ is impossible.

  Assuming the lemma's statement holds before an update, we demonstrate that it holds for the new values as well. The superscript $^{-}$ denotes values before the insertion, e.g., $T^{-}$ signifies the number of levels before insertion, and $T$ denotes the number of levels after insertion.

  Let $\breset_i$ be a random variable set to 1 if $\ReconstructF{}(j)$ is called for some $j \le i$ and 0 otherwise. We show that
  \begin{equation*}
    \Pr{\bm_i \in M_i^* | \bT \ge i, \bhistPre_{i} = \histPre_{i}, \breset_i=\reset_i} \ge 1 - \mO(\frac{\epsSamp}{n^{10}})
  \end{equation*}
  holds for both $\reset_i=0$ and $\reset_i=1$. For $\reset_i=1$, the claim follows from Lemma \ref{lm:bound_countF}. As for $\reset_i=0$, it implies $\bT^- \ge i$ because if $\bT^- < i$, then $\bT$ would have been equal to $\bT^-$ due to $\breset_i=0$, signifying that $\ReconstructF{}(j)$ was not invoked for any $j \le i$.

  Let $(\histPre_{i})^{-}$ denote the value of pre-count history before the update. By the law of iterated expectation, it suffices to show
  \begin{equation}
    \Pr{\bm_i \in M_i^* | \bT \ge i, \bT^- \ge i, \bhistPre_{i} = \histPre_{i}, (\bhistPre_{i})^{-} = (\histPre_{i})^{-}, \breset_i=0} \ge 1 - \mO(\frac{\epsSamp}{n^{10}})
    \label{eq:jul12_1507}
  \end{equation}
  for all $(\histPre_{i})^{-}$ with positive probability conditioned on $\bT\ge i,\bhistPre_{i}=\histPre_{i}, \breset_i=0$. We can drop the $\bT\ge i,\bhistPre_{i}=\histPre_{i}, \breset_i=0$ from the condition since they are implied by $(\bhistPre_{i})^{-}=(\histPre_{i})^{-}$. Furthermore, we can replace $\bm_i \in M_i^*$ with $\bm_i^{-} \in (M_i^*)^{-}$, where $(M_i^*)^{-}$ is the set of suitable sample counts for level $i$ before the update, as determined by $(\histPre_{i})^{-}$. This is because both $\bm_i$ and $M_i^*$ remain unchanged through the update (note that $M_i^*$ is a function of $L_i, G_{i-1}$, and $(L_i, G_{i-1})=(L_i^{-}, G_{i-1}^{-}$)). This transformation reduces \eqref{eq:jul12_1507} to the induction hypothesis, concluding the proof.
\end{proof}

\end{document}